
\documentclass[journal,onecolumn,12pt]{IEEEtran}%
\usepackage{amssymb}
\usepackage{amsfonts}
\usepackage{amsmath}
\usepackage{epic}
\usepackage{geometry}
\usepackage{doublespace}%
\setcounter{MaxMatrixCols}{30}%
\usepackage{graphicx}
\hyphenation{op-tical net-works semi-conduc-tor}
\newtheorem{theorem}{Theorem}

\newtheorem{corollary}{Corollary}

\newtheorem{definition}{Definition}
\newtheorem{example}{Example}

\newtheorem{lemma}{Lemma}

\newtheorem{proposition}{Proposition}
\newtheorem{remark}{Remark}

\geometry{margin=0.7in,top=0.7in,bottom=0.65in} 
\begin{document}

\title{{\LARGE Optimization of Lyapunov Invariants\vspace*{-0.35in} in Verification
of Software Systems}\vspace*{-0.2in}}
\author{{\small Mardavij Roozbehani,~\IEEEmembership{{\small Member,~IEEE}}, Alexandre
Megretski ,~\IEEEmembership{{\small Member,~IEEE}}, and Eric Feron}%
~\IEEEmembership{{\small Member,~IEEE}}\vspace*{-0.75in} \thanks{Mardavij
Roozbehani and Alexandre Megretski are with the Laboratory for Information and
Decision Systems (LIDS), Massachusetts Institute of Technology, Cambridge, MA.
E-mails: \{mardavij,ameg\}@mit.edu. Eric Feron is professor of aerospace
software engineering at the School of Aerospace Engineering, Georgia Institute
of Technology, Atlanta, GA. E-mail: feron@gatech.edu.\vspace*{-0.07in}}%
\vspace*{-0.2in}}
\maketitle

\begin{abstract}
\vspace*{-0.1in}The paper proposes a control-theoretic framework for
verification of numerical software systems, and puts forward software
verification as an important application of control and systems theory. The
idea is to transfer Lyapunov functions and the associated computational
techniques from control systems analysis and convex optimization to
verification of various software safety and performance specifications. These
include but are not limited to absence of overflow, absence of
division-by-zero, termination in finite time, presence of dead-code, and
certain user-specified assertions. Central to this framework are Lyapunov
invariants. These are properly constructed functions of the program variables,
and satisfy certain properties\textbf{---}resembling those of Lyapunov
functions\textbf{---}along the execution trace. The search for the invariants
can be formulated as a convex optimization problem. If the associated
optimization problem is feasible, the result is a certificate for the
specification.\vspace*{-0.15in}

\end{abstract}

\markboth{}{Shell \MakeLowercase{\textit{et al.}}: Bare Demo of
IEEEtran.cls for Journals}

\begin{keywords}
\vspace*{-0.1in}Software Verification, Lyapunov Invariants, Convex
Optimization.\vspace*{-0.2in}
\end{keywords}

\section{Introduction\vspace*{-0.04in}}

\PARstart{S}{oftware} in safety-critical systems implement complex algorithms
and feedback laws that control the interaction of physical devices with their
environments. Examples of such systems are abundant in aerospace, automotive,
and medical applications. The range of theoretical and practical issues that
arise in analysis, design, and implementation of safety-critical software
systems is extensive, see, e.g., \cite{Kopetz2001}, \cite{Murthy2001} , and
\cite{Heck2003}. While safety-critical software must satisfy various resource
allocation, timing, scheduling, and fault tolerance constraints, the foremost
requirement is that it must be free of run-time errors.\vspace*{-0.15in}

\subsection{Overview of Existing Methods\vspace*{-0.1in}}

\subsubsection{Formal Methods}

Formal verification methods are model-based techniques \cite{Pel01},
\cite{Nielson}, \cite{MitraThesis} for proving or disproving that a
mathematical model of a software (or hardware) satisfies a given
\textit{specification,} i.e., a mathematical expression of a desired behavior.
The approach adopted in this paper too, falls under this category. Herein, we
briefly review \textit{model checking} and \textit{abstract}
\textit{interpretation}.

\paragraph{Model Checking}

In \textit{model checking} \cite{ClarkBook} the system is modeled as a finite
state transition system and the specifications are expressed in some form of
logic formulae, e.g., temporal or propositional logic. The verification
problem then reduces to a graph search, and symbolic algorithms are used to
perform an exhaustive exploration of all possible states. Model checking has
proven to be a powerful technique for verification of circuits
\cite{Clarkware}, security and communication protocols \cite{Marrero},
\cite{Naumovich} and stochastic processes \cite{Baier}. Nevertheless, when the
program has non-integer variables, or when the state space is continuous,
model checking is not directly applicable. In such cases, combinations of
various abstraction techniques and model checking have been proposed
\cite{Alur2002, Dams1996, Tiwari2002}; scalability, however, remains a challenge.

\paragraph{Abstract Interpretation}

is a theory for formal approximation of the \textit{operational semantics} of
computer programs in a systematic way \cite{Cousot1977}. Construction of
abstract models involves abstraction of domains\textbf{---}typically in the
form of a combination of sign, interval, polyhedral, and congruence
abstractions of sets of data\textbf{---}and functions. A system of fixed-point
equations is then generated by symbolic forward/backward executions of the
abstract model. An iterative equation solving procedure, e.g., Newton's
method, is used for solving the nonlinear system of equations, the solution of
which results in an inductive invariant assertion, which is then used for
checking the specifications. In practice, to guarantee finite convergence of
the iterates, narrowing (outer approximation) operators are used to estimate
the solution, followed by widening (inner approximation) to improve the
estimate \cite{Cousot2001}. This compromise can be a source of conservatism in
analysis. Nevertheless, these methods have been used in practice for
verification of limited properties of embedded software of commercial aircraft
\cite{ASTREE}.

Alternative formal methods can be found in the computer science literature
mostly under \textit{deductive verification }\cite{Manna1995}, \textit{type
inference }\cite{Pierece}\textit{,} and \textit{data flow analysis
}\cite{Heckt1977}. These methods share extensive similarities in that a notion
of program abstraction and symbolic execution or constraint propagation is
present in all of them. Further details and discussions of the methodologies
can be found in \cite{Cousot2001}, and \cite{Nielson}.\vspace*{-0.02in}

\subsubsection{System Theoretic Methods}

While software analysis has been\vspace*{-0.01in} the subject of an extensive
body of research in computer science, treatment of the topic in the control
\vspace*{-0.01in}systems community has been less systematic. The
relevant\vspace*{-0.01in} results in the systems and control literature can be
found in the field of hybrid systems\vspace*{-0.01in} \cite{Branicky}. Much of
the available techniques for safety verification of hybrid\vspace*{-0.01in}
systems are explicitly or implicitly based on computation of the reachable
sets, either exactly\vspace*{-0.01in} or approximately. These include but are
not limited to techniques based on quantifier \vspace*{-0.01in}elimination
\cite{Lafferriere2001}, ellipsoidal calculus \cite{Krzhanski1996}, and
mathematical programming\vspace*{-0.01in} \cite{Bemporad2000}. Alternative
approaches aim at establishing properties of hybrid systems \vspace
*{-0.01in}through barrier certificates \cite{Prajna2005}, numerical
computation\vspace*{-0.01in} of Lyapunov functions \cite{Branicky1998,
Johansson1998}, or by combined use of bisimulation mechanisms and Lyapunov
techniques \cite{GirardPappsVerification, Laferriere1999, Tiwari2002,
Alur2002}.\vspace*{-0.02in}

Inspired by the concept\vspace*{-0.01in} of Lyapunov functions in stability
analysis of nonlinear dynamical systems \cite{Khalil}, in this paper we
propose Lyapunov invariants for analysis of computer programs.\vspace
*{-0.01in} While Lyapunov functions and similar concepts have been used in
verification of stability or \vspace*{-0.01in}temporal properties of system
level descriptions of hybrid systems \cite{Prajna2007}, \cite{Branicky1998},
\cite{Johansson1998}, to the \vspace*{-0.01in}best of our knowledge, this
paper is the first to present a systematic\vspace*{-0.01in} framework based on
Lyapunov invariance and convex optimization for verification of a broad range
of code-level specifications for computer programs.\vspace*{-0.01in}
Accordingly, it is in the systematic integration of new ideas and some
well-known tools within a unified software analysis framework that we see the
main contribution of our work, and not in carrying through the proofs\vspace
*{-0.01in} of the underlying theorems and propositions. The introduction and
development of such framework\vspace*{-0.01in} provides an opportunity for the
field of \textit{control} to systematically address\vspace*{-0.01in} a problem
of great practical significance and interest to both computer science and
engineering communities. The framework can be summarized as follows:$\vspace
*{-0.06in}$

\begin{enumerate}
\item {\normalsize Dynamical system interpretation\vspace*{-0.01in} and
modeling (Section \ref{Modeling}). We introduce generic dynamical system
representations of programs, along with specific modeling languages\vspace
*{-0.01in} which include Mixed-Integer Linear Models (MILM), Graph Models, and
MIL-over-Graph Hybrid Models (MIL-GHM).\vspace*{-0.06in}}

\item {\normalsize Lyapunov invariants as behavior\vspace*{-0.01in}
certificates for computer programs (Section \ref{Chapter:LyapunovInvs}).
Analogous to a Lyapunov function, a Lyapunov invariant is a real-valued
function\vspace*{-0.01in} of the program variables, and satisfies a
\textit{difference inequality}\ along the trace of the program. It is shown
that such functions can be formulated for verification of various
specifications.\vspace*{-0.06in}}

\item {\normalsize A computational procedure for\vspace*{-0.01in} finding the
Lyapunov invariants (Section \ref{Chapter:Computation}). The procedure is
standard and constitutes\ these steps: (i) Restricting the search\vspace
*{-0.01in} space to a linear subspace. (ii) Using convex relaxation techniques
to formulate the search problem as a convex optimization problem, e.g., a
linear\vspace*{-0.01in} program \cite{Bertsimes1997}, semidefinite program
\cite{Boyd1994, VB}, or a SOS program \cite{Parrilo2001}. (iii) Using convex
optimization software for numerical computation of the certificates.}%
$\vspace*{-0.16in}$
\end{enumerate}

\section{Dynamical System Interpretation and Modeling of Computer Programs
\label{Modeling}$\vspace*{-0.11in}$}

We interpret computer programs as discrete-time dynamical systems and
introduce generic models that formalize this interpretation. We then introduce
MILMs, Graph Models, and MIL-GHMs as structured cases of the generic models.
The specific modeling languages are used for computational purposes.$\vspace
*{-0.26in}$

\subsection{Generic Models\label{Sec:GenRep}$\vspace*{-0.11in}$}

\subsubsection{Concrete Representation of Computer Programs}

We will consider generic models defined by a finite state space set $X$ with
selected subsets $X_{0}\subseteq X$ of initial states, and $X_{\infty}\subset
X$ of terminal states, and by a set-valued state transition function
$f:X\mapsto2^{X}$, such that $f(x)\subseteq X_{\infty},\forall x\in X_{\infty
}.$ We denote such dynamical systems by $\mathcal{S}(X,f,X_{0},X_{\infty}).$

\begin{definition}
\label{ConcreteRepDef}The dynamical system $\mathcal{S}(X,f,X_{0},X_{\infty})$
is a $\mathcal{C}$-representation of a computer program $\mathcal{P},$ if the
set of all sequences that can be generated by $\mathcal{P}$ is equal to the
set of all sequences $\mathcal{X}=(x(0),x(1),\dots,x(t),\dots)$ of elements
from $X,$ satisfying\vspace*{-0.25in}%
\begin{equation}
x\left(  0\right)  \in X_{0}\subseteq X,\qquad x\left(  t+1\right)  \in
f\left(  x\left(  t\right)  \right)  \text{\qquad}\forall t\in\mathbb{Z}%
_{+}\vspace*{-0.22in}\label{Softa1}%
\end{equation}
The uncertainty in $x(0)$ allows for dependence of the program on different
initial conditions, and the uncertainty in $f$ models\ dependence on
parameters, as well as the ability to respond to real-time inputs.
\end{definition}

\begin{example}
\label{IntegerDiv-Ex}\textbf{Integer Division }(adopted from \cite{Pel01}%
):\ The functionality\vspace*{-0.01in} of Program 1 is to compute the result
of the integer division of $\mathrm{dd}$ (dividend) by $\mathrm{dr}$
(divisor).\vspace*{-0.01in} A $\mathcal{C}$-representation of the program is
displayed alongside. Note that if $\mathrm{dd}\geq0,$ and $\mathrm{dr}\leq0,$
then the\vspace*{-0.01in} program never exits the \textquotedblleft
while\textquotedblright\ loop and the value of $\mathrm{q}$ keeps
increasing\vspace*{-0.01in}, eventually leading to either an overflow or an
erroneous answer. The program terminates\ if $\mathrm{dd}$ and $\mathrm{dr}$
are positive.
\end{example}

\vspace*{-0.25in}

{\small
\begin{gather*}%
\begin{array}
[c]{cl}%
\begin{tabular}
[c]{|l|}\hline
$%
\begin{array}
[c]{l}%
\mathrm{int~IntegerDivision~(~int~dd,int~dr~)}\\
\vspace*{-0.46in}\\
\mathrm{\{int~q=\{0\};~int~r=\{dd\};}\\
\vspace*{-0.46in}\\
\mathrm{{while}\text{ }{(r>=dr)}}\\
\vspace*{-0.46in}\\
\mathrm{{{{\{\hspace{0.13in}q=q+1;}}}}\\
\vspace*{-0.46in}\\
\mathrm{{{\hspace{0.24in}r=r-dr;\}}}}\\
\vspace*{-0.46in}\\
\mathrm{return~r;\}}%
\end{array}
$\\\hline
\end{tabular}
& \hspace*{-0.05in}%
\begin{tabular}
[c]{|l|}\hline
$%
\begin{array}
[c]{l}%
\underline{\mathbb{Z}}=\mathbb{Z\cap}\left[  -32768,32767\right] \\
\vspace*{-0.46in}\\
X=\underline{\mathbb{Z}}^{4}\\
\vspace*{-0.46in}\\
X_{0}=\left\{  (\mathrm{dd},\mathrm{dr},\mathrm{q},\mathrm{r})\in
X~|~\mathrm{q}=0,\text{ }\mathrm{r}=\mathrm{dd}\right\} \\
\vspace*{-0.46in}\\
X_{\infty}=\left\{  (\mathrm{dd},\mathrm{dr},\mathrm{q},\mathrm{r})\in
X~|~\mathrm{r}<\mathrm{dr}\right\} \\
\vspace*{-0.35in}\\
f:(\mathrm{dd},\mathrm{dr},\mathrm{q},\mathrm{r})\mapsto\left\{
\begin{array}
[c]{l}%
\vspace*{-0.44in}\\
(\mathrm{dd},\mathrm{dr},\mathrm{q}+\mathrm{1},\mathrm{r}-\mathrm{dr}),\\
\vspace*{-0.44in}\\
(\mathrm{dd},\mathrm{dr},\mathrm{q},\mathrm{r}),
\end{array}
\right.
\begin{array}
[c]{l}%
\vspace*{-0.44in}\\
(\mathrm{dd},\mathrm{dr},\mathrm{q},\mathrm{r})\in X\backslash X_{\infty}\\
\vspace*{-0.44in}\\
(\mathrm{dd},\mathrm{dr},\mathrm{q},\mathrm{r})\in X_{\infty}%
\end{array}
\end{array}
$\\\hline
\end{tabular}
\end{array}
\\
\text{{Program 1: The Integer Division Program (left) and its Dynamical System
Model (right)\vspace*{-0.2in}}}%
\end{gather*}
}

{\vspace*{-0.62in}}

\subsubsection{Abstract Representation of Computer Programs\vspace*{-0.01in}}

In a $\mathcal{C}$-representation, the elements of the state space $X$ belong
to a finite\vspace*{-0.01in} subset of the set of rational numbers that can be
represented by a fixed number of bits in a specific arithmetic\vspace
*{-0.01in} framework, e.g., fixed-point or floating-point arithmetic. When the
elements of $X$ are\vspace*{-0.01in} non-integers, due to the quantization
effects, the set-valued map $f$ often defines very complicated
dependencies\vspace*{-0.01in} between the elements of $X,$ even for simple
programs involving only elementary arithmetic operations. An abstract\vspace
*{-0.01in} model over-approximates the behavior set in the interest of
tractability. The drawbacks are conservatism\vspace*{-0.01in} of the analysis
and (potentially) undecidability. Nevertheless,\vspace*{-0.01in} abstractions
in the form of formal over-approximations make it possible to formulate
computationally tractable,\vspace*{-0.01in} sufficient conditions for a
verification problem that would otherwise be intractable.

\begin{definition}
\label{Def:abst}Given a program $\mathcal{P}$ and its $\mathcal{C}%
$-representation $\mathcal{S}(X,f,X_{0},X_{\infty})$, we say that
$\overline{\mathcal{S}}(\overline{X},\overline{f},\overline{X}_{0}%
,\overline{X}_{\infty})$ is an $\mathcal{A}$-representation, i.e., an
\emph{abstraction} of $\mathcal{P}$, if $X\subseteq\overline{X}$,
$X_{0}\subseteq\overline{X}_{0}$, and $f(x)\subseteq\overline{f}(x)$\ for all
$x\in X,$ and the following condition holds:\vspace*{-0.07in}%
\begin{equation}
\overline{X}_{\infty}\cap X\subseteq X_{\infty}.\vspace*{-0.07in}%
\label{terminalabstract}%
\end{equation}

\end{definition}

Thus, every trajectory of the actual program is also a trajectory of the
abstract model. The definition of $\overline{X}_{\infty}$ is slightly more
subtle. For proving Finite-Time Termination (FTT), we need to be able to infer
that if all the trajectories of $\overline{\mathcal{S}}$ eventually enter
$\overline{X}_{\infty},$ then all trajectories of $\mathcal{S}$ will
eventually enter $X_{\infty}.$ It is tempting to require that $\overline
{X}_{\infty}\subseteq X_{\infty}$, however, this may not be possible as
$X_{\infty}$ is often a discrete set, while $\overline{X}_{\infty}$ is dense
in the domain of real numbers. The definition of $\overline{X}_{\infty}$ as in
(\ref{terminalabstract}) resolves this issue.

Construction of $\overline{\mathcal{S}}(\overline{X},\overline{f},\overline
{X}_{0},\overline{X}_{\infty})$ from $\mathcal{S}(X,f,X_{0},X_{\infty})$
involves abstraction of each of the elements $X,~f,~X_{0},~X_{\infty}$ in a
way that is consistent with Definition \ref{Def:abst}. Abstraction of the
state space $X$ often involves replacing the domain of \textit{floats} or
integers or a combination of these by the domain of real numbers. Abstraction
of $X_{0}$ or $X_{\infty}$ often involves a combination of domain abstractions
and abstraction of functions that define these sets. Semialgebraic set-valued
abstractions of some commonly-used nonlinearities are presented in Appendix I.
Interested readers may refer to \cite{RMF2010} for more examples including
abstractions of fixed-point and floating point operations.\vspace*{-0.25in}

\subsection{Specific Models of Computer Programs\label{Section:SpecModels}%
\vspace*{-0.05in}}

Specific modeling languages are particularly useful for automating the proof
process in a computational framework. Here, three specific modeling languages
are proposed: \textit{Mixed-Integer Linear Models (MILM),} \textit{Graph
Models}, and \textit{Mixed-Integer Linear over Graph Hybrid Models (MIL-GHM).}

\subsubsection{Mixed-Integer Linear Model (MILM)}

Proposing MILMs for software modeling and analysis is motivated by the
observation that by imposing linear equality constraints on boolean and
continuous variables over a quasi-hypercube, one can obtain a relatively
compact representation of arbitrary piecewise affine functions defined over
compact polytopic subsets of Euclidean spaces (Proposition \ref{MILM-prop}).
The earliest reference to the statement of universality of MILMs appears to be
\cite{nem}, in which a constructive proof is given for the one-dimensional
case. A constructive proof for the general case is given in \cite{RMF2010}.
\vspace*{-0.05in}

\begin{proposition}
\label{MILM-prop}\textbf{Universality of Mixed-Integer Linear Models.} Let
$f:X\mapsto\mathbb{R}^{n}$ be a piecewise affine map with a closed graph,
defined on a compact state space $X\subseteq\left[  -1,1\right]  ^{n},$
consisting of a finite union of compact polytopes. That is:\vspace*{-0.28in}%
\[
f\left(  x\right)  \in2A_{i}x+2B_{i}\qquad\text{subject to\qquad}x\in
X_{i},\text{\ }i\in\mathbb{Z}\left(  1,N\right)  \vspace*{-0.15in}%
\]
where, each $X_{i}$ is a compact polytopic set. Then, $f$ can be specified
precisely, by imposing linear equality constraints on a finite number of
binary and continuous variables ranging over compact intervals. Specifically,
there exist matrices $F$ and $H,$ such that the following two sets are
equal:\vspace*{-0.25in}%
\begin{align*}
G_{1}  & =\left\{  \left(  x,f\left(  x\right)  \right)  ~|~x\in X\right\}
\vspace*{-0.1in}\\
G_{2}  & =\{\left(  x,y\right)  ~|~F[%
\begin{array}
[c]{cccc}%
\hspace*{-0.04in}\vspace*{0.04in}x\hspace*{-0.02in} & \hspace*{-0.02in}%
w\hspace*{-0.02in} & \hspace*{-0.02in}v\hspace*{-0.02in} & \hspace
*{-0.02in}1\hspace*{-0.04in}%
\end{array}
]^{^{T}}=y,\text{ }H[%
\begin{array}
[c]{cccc}%
\hspace*{-0.04in}\vspace*{0.04in}x\hspace*{-0.02in} & \hspace*{-0.02in}%
w\hspace*{-0.02in} & \hspace*{-0.02in}v\hspace*{-0.02in} & \hspace
*{-0.02in}1\hspace*{-0.04in}%
\end{array}
]^{^{T}}=0,\text{ }\left(  w,v\right)  \in\left[  -1,1\right]  ^{n_{w}}%
\times\left\{  -1,1\right\}  ^{n_{v}}\}
\end{align*}
\vspace*{-0.5in}
\end{proposition}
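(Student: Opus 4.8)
The plan is to establish universality of MILMs constructively, reducing the representation of an arbitrary closed-graph piecewise affine map to two standard gadgets: (i) encoding a selection among finitely many polytopic pieces by binary variables, and (ii) encoding each polytopic inequality constraint by an equality constraint together with auxiliary continuous variables ranging over $[-1,1]$. I would first normalize the data: since $X\subseteq[-1,1]^n$ is a finite union of compact polytopes $X_i$, each $X_i$ can be written as $\{x~|~C_i x\le d_i\}$ for suitable matrices $C_i,d_i$, and the closed-graph hypothesis guarantees that on overlaps the affine pieces agree, so $f$ is well defined as a set-valued map with the stated form $f(x)\in 2A_i x+2B_i$ for $x\in X_i$. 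The factor of $2$ and the box $[-1,1]$ are just scaling conventions that I would carry through without further comment.

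Next I would introduce, for each piece $i\in\mathbb{Z}(1,N)$, a binary indicator $\delta_i\in\{0,1\}$ (obtained from a variable $v_i\in\{-1,1\}$ via $\delta_i=(1+v_i)/2$, which is affine) with the intended meaning $\delta_i=1$ iff the active piece is $i$. I would impose $\sum_i \delta_i = 1$ as one linear equality. The core step is then the ``big-$M$'' style encoding: for the chosen piece, the constraint $C_i x\le d_i$ must hold, and $y=2A_i x+2B_i$ must hold; for the non-chosen pieces these are relaxed. Concretely, each scalar inequality $(C_i x)_k\le (d_i)_k$ is rewritten, using a slack variable $s_{i,k}\ge 0$, as an equality $(C_i x)_k + s_{i,k} = (d_i)_k$ when $\delta_i=1$, and left free otherwise; multiplying the gap by $\delta_i$ and absorbing a large constant $M$ (finite, since $X$ is compact and all data are bounded) turns this into an equality $(C_i x)_k + s_{i,k} = (d_i)_k + M(1-\delta_i)$ with $0\le s_{i,k}\le M'$ for an appropriate bound $M'$. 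Similarly $y$ is expressed as $y=\sum_i (2A_i x+2B_i)\delta_i$, and the bilinear products $x\,\delta_i$ are linearized by introducing continuous variables $z_i$ with the standard constraints ($z_i=x$ when $\delta_i=1$, $z_i$ bounded otherwise), each again reducible to a linear equality plus box bounds after a big-$M$ substitution. All slack and linearization variables, after affine rescaling, can be placed in $[-1,1]^{n_w}$, which defines the continuous auxiliary vector $w$; the indicators form $v$; and stacking all the resulting linear equalities gives the matrices $F$ (the row producing $y$) and $H$ (the remaining rows, set equal to $0$).

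The verification that $G_1=G_2$ then splits into two inclusions. For $G_1\subseteq G_2$: given $x\in X$, pick any piece $i$ with $x\in X_i$, set $\delta_i=1$ and the rest to $0$, set the slacks to the true gaps $d_i - C_i x\ge 0$, set $z_i=x$ and the other $z_j$ to $0$, and check that every imposed equality is satisfied and every auxiliary variable lies in its box; the output row then reads $y=2A_i x+2B_i=f(x)$. For $G_2\subseteq G_1$: given $(x,y)$ feasible, the equality $\sum_i\delta_i=1$ with $\delta_i\in\{0,1\}$ forces exactly one $\delta_{i^\star}=1$; the big-$M$ equalities with the slack sign constraints then force $C_{i^\star}x\le d_{i^\star}$, i.e. $x\in X_{i^\star}\subseteq X$, while the relaxed constraints for $j\ne i^\star$ are vacuous (this is where the finiteness of $M$, hence compactness of $X$, is used); and the $y$-equation collapses to $y=2A_{i^\star}x+2B_{i^\star}$, which equals $f(x)$ by the closed-graph consistency of the pieces. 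Hence the two sets coincide.

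I expect the main obstacle to be bookkeeping rather than conceptual: choosing the big-$M$ constants and the rescalings so that \emph{all} auxiliary variables genuinely fit in $[-1,1]$ simultaneously, and handling the shared-boundary case where $x$ lies in several $X_i$ so that the representation is a bona fide set-valued map whose graph is exactly $G_1$ (the closed-graph hypothesis is precisely what makes the different admissible choices of $\delta$ yield the same $y$, so $G_2$ does not accidentally become larger). Since a full constructive proof of the general case is already given in \cite{RMF2010} (with the one-dimensional case in \cite{nem}), I would either cite that construction or present the above encoding in condensed form, emphasizing the two gadgets and the two-inclusion argument and leaving the explicit assembly of $F$ and $H$ to the reader.
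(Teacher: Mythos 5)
The paper does not actually prove Proposition \ref{MILM-prop}: it only remarks that a constructive proof of the one-dimensional case appears in \cite{nem} and of the general case in \cite{RMF2010}, so there is no in-paper argument to compare yours against. Your sketch is the standard disjunctive/big-$M$ construction (essentially the mixed logical dynamical encoding of Bemporad and Morari that the paper itself points to), and it is sound in outline: one indicator per piece with $\sum_i\delta_i=1$, each polytopic inequality absorbed into an equality via a box-constrained slack (legitimate because all data are bounded, so every slack can be affinely rescaled into $[-1,1]$), and the output written as $y=\sum_i(2A_iz_i+2B_i\delta_i)$ with $z_i$ standing for the product $x\,\delta_i$. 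Two points need tightening. First, the linearization constraints as you state them (``$z_i=x$ when $\delta_i=1$, $z_i$ bounded otherwise'') are not enough: in the inclusion $G_2\subseteq G_1$ the inactive $z_j$ must be forced to equal $0$, not merely to stay bounded, or the sum defining $y$ does not collapse to $2A_{i^\star}x+2B_{i^\star}$. The standard McCormick constraints $-\delta_i\le z_i\le\delta_i$ and $|z_i-x|\le 1-\delta_i$ (each again an equality with a bounded slack) do force $z_i=x\delta_i$ exactly, so the fix is simply to state them. Second, the closed-graph hypothesis does not make the affine pieces agree on overlaps, as you assert at the outset; it makes $f$ genuinely set-valued there, so that $G_1$ is exactly the union $\bigcup_i\left\{(x,2A_ix+2B_i):x\in X_i\right\}$. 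That is in fact precisely what your encoding produces---different admissible choices of the indicator at an overlap point yield the different admissible values of $y$---so the conclusion $G_1=G_2$ survives, but the correct justification is that $G_2$ is permitted to contain all of these pairs, not that they coincide. With those two repairs the argument is complete and is, as far as one can tell, the same construction carried out in \cite{RMF2010}.
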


Mixed Logical Dynamical Systems (MLDS) with similar structure were considered
in \cite{Bemporad Morari} for analysis of a class of hybrid systems. The main
contribution here is in the application of the model to software analysis. A
MIL model of a computer program is defined via the following
elements:{\small \vspace*{-0.05in}}

\begin{enumerate}
\item The state space $X\subset\left[  -1,1\right]  ^{n}$.\vspace*{-0.05in}

\item Letting $n_{e}=n+n_{w}+n_{v}+1,$ the state transition function
$f:X\mapsto2^{X}$\ is defined by two matrices $F,$~and $H$\ of dimensions
$n$-by-$n_{e}$\ and $n_{H}$-by-$n_{e}$\ respectively, according to:\vspace
*{-0.2in}
\begin{equation}
f(x)\in\left\{  F[%
\begin{array}
[c]{cccc}%
\hspace*{-0.04in}\vspace*{0.04in}x\hspace*{-0.02in} & \hspace*{-0.02in}%
w\hspace*{-0.02in} & \hspace*{-0.02in}v\hspace*{-0.02in} & \hspace
*{-0.02in}1\hspace*{-0.04in}%
\end{array}
]^{^{T}}~|~~H[%
\begin{array}
[c]{cccc}%
\hspace*{-0.04in}\vspace*{0.04in}x\hspace*{-0.02in} & \hspace*{-0.02in}%
w\hspace*{-0.02in} & \hspace*{-0.02in}v\hspace*{-0.02in} & \hspace
*{-0.02in}1\hspace*{-0.04in}%
\end{array}
]^{^{T}}=0,\text{ }\left(  w,v\right)  \in\left[  -1,1\right]  ^{n_{w}}%
\times\left\{  -1,1\right\}  ^{n_{v}}\right\}  .\vspace*{-0.2in}\label{MILM1}%
\end{equation}

\item The set of initial conditions is defined via either of the
following:\vspace*{-0.1in}

\begin{enumerate}
\item If $X_{0}$\ is finite with a small cardinality, then it can be
conveniently specified by its elements. We will see in Section
\ref{Chapter:Computation} that per each element of $X_{0},$\ one constraint
needs to be included in the set of constraints of the optimization problem
associated with the verification task.

\item If $X_{0}$\ is not finite, or $\left\vert X_{0}\right\vert $\ is too
large, an abstraction of $X_{0}$\ can be specified by a matrix $H_{0}\in
R^{n_{H_{0}}\times n_{e}}$\ which defines a union of compact polytopes in the
following way:\vspace*{-0.2in}%
\begin{equation}
X_{0}=\{x\in X~|~H_{0}[%
\begin{array}
[c]{cccc}%
\hspace*{-0.04in}\vspace*{0.04in}x\hspace*{-0.02in} & \hspace*{-0.02in}%
w\hspace*{-0.02in} & \hspace*{-0.02in}v\hspace*{-0.02in} & \hspace
*{-0.02in}1\hspace*{-0.04in}%
\end{array}
]^{^{T}}=0,~\left(  w,v\right)  \in\left[  -1,1\right]  ^{n_{w}}\times\left\{
-1,1\right\}  ^{n_{v}}\}.\vspace*{-0.1in}\label{MILM2}%
\end{equation}

\end{enumerate}

\item The set of terminal states $X_{\infty}$\ is defined by\vspace*{-0.2in}%
\begin{equation}
X_{\infty}=\{x\in X~|~H[%
\begin{array}
[c]{cccc}%
\hspace*{-0.04in}\vspace*{0.04in}x\hspace*{-0.02in} & \hspace*{-0.02in}%
w\hspace*{-0.02in} & \hspace*{-0.02in}v\hspace*{-0.02in} & \hspace
*{-0.02in}1\hspace*{-0.04in}%
\end{array}
]^{^{T}}\neq0,~\forall w\in\left[  -1,1\right]  ^{n_{w}},~\forall v\in\left\{
-1,1\right\}  ^{n_{v}}\}.\vspace*{-0.2in}\label{MILM3}%
\end{equation}

\end{enumerate}

Therefore, $\mathcal{S}(X,f,X_{0},X_{\infty})$ is well defined. A compact
description of a MILM of a program is either of the form $\mathcal{S}\left(
F,H,H_{0},n,n_{w},n_{v}\right)  ,$ or of the form $\mathcal{S}\left(
F,H,X_{0},n,n_{w},n_{v}\right)  $. The MILMs can represent a broad range of
computer programs of interest in control applications, including but not
limited to control programs of gain scheduled linear systems in embedded
applications. In addition, generalization of the model to programs with
piecewise affine dynamics subject to quadratic constraints is straightforward.

\begin{example}
A MILM of an abstraction of the {\small IntegerDivision} program (Program 1:
Section \ref{Sec:GenRep})$,$ with all the integer variables replaced with real
variables, is given by $\mathcal{S}\left(  F,H,H_{0},4,3,0\right)  ,$
where$\vspace*{-0.1in}${\small
\[
\hspace*{-0.07in}%
\begin{array}
[b]{lll}%
H_{0}= & H= & F=\\
\left[
\begin{array}
[c]{rrrrrrrr}%
\vspace*{-0.42in} &  &  &  &  &  &  & \\
1 & 0 & 0 & -1 & 0 & 0 & 0 & 0\vspace*{-0.12in}\\
0 & 0 & 1 & 0 & 0 & 0 & 0 & 0\vspace*{-0.12in}\\
0 & -2 & 0 & 0 & 0 & 1 & 0 & 1\vspace*{-0.12in}\\
-2 & 0 & 0 & 0 & 0 & 0 & 1 & 1\vspace*{-0.07in}%
\end{array}
\right]  , & \left[  \hspace*{-0.05in}%
\begin{array}
[c]{rrrrrrrr}%
\vspace*{-0.42in} &  &  &  &  &  &  & \\
0 & 2 & 0 & -2 & 1 & 0 & 0 & 1\vspace*{-0.12in}\\
0 & -2 & 0 & 0 & 0 & 1 & 0 & 1\vspace*{-0.12in}\\
-2 & 0 & 0 & 0 & 0 & 0 & 1 & 1\vspace*{-0.07in}%
\end{array}
\hspace*{-0.05in}\right]  , & \left[
\begin{array}
[c]{rrrrrrrr}%
\vspace*{-0.42in} &  &  &  &  &  &  & \\
1 & 0 & 0 & 0 & 0 & 0 & 0 & 0\vspace*{-0.12in}\\
0 & 1 & 0 & 0 & 0 & 0 & 0 & 0\vspace*{-0.12in}\\
0 & 0 & 1 & 0 & 0 & 0 & 0 & 1/M\vspace*{-0.12in}\\
0 & -1 & 0 & 1 & 0 & 0 & 0 & 0\vspace*{-0.07in}%
\end{array}
\right]
\end{array}
\]
} Here, $M$ is a scaling parameter used for bringing all the variables within
the interval $\left[  -1,1\right]  .\vspace*{-0.05in}$
\end{example}

\subsubsection{Graph Model\label{graph models: section}}

Practical considerations such as universality and strong resemblance to the
natural flow of computer code render graph models an attractive and convenient
model for software analysis. Before we proceed, for convenience, we introduce
the following notation: $P_{r}\left(  i,x\right)  $ denotes the projection
operator defined as $P_{r}\left(  i,x\right)  =x,$ for all $i\in\mathbb{Z\cup
}\left\{  \Join\right\}  ,$ and all $x\in\mathbb{R}^{n}.$

A graph model is defined on a directed graph $G\left(  \mathcal{N}%
,\mathcal{E}\right)  $ with the following elements:\vspace*{-0.05in}

\begin{enumerate}
\item A set of nodes $\mathcal{N}=\{\emptyset\}\cup\{1,\dots,m\}\cup\left\{
\Join\right\}  .$ These can be thought of as line numbers or code locations.
Nodes $\emptyset$ and $\Join$ are starting and terminal nodes, respectively.
The only possible transition from node $\Join$ is the identity transition to
node $\Join.$\vspace*{-0.05in}

\item A set of edges $\mathcal{E}=\left\{  \left(  i,j,k\right)  \text{
}|\text{ }i\in\mathcal{N},\text{ }j\in\mathcal{O}\left(  i\right)  \right\}
,$ where the \textit{outgoing set} $\mathcal{O}\left(  i\right)  $ is the set
of all nodes to which transition from node $i$ is possible in one step.
Definition of the \textit{incoming set} $\mathcal{I}\left(  i\right)  $ is
analogous. The third element in the triplet $\left(  i,j,k\right)  $ is the
index for the $k$th edge between $i$ and $j,$ and $\mathcal{A}_{ji}=\left\{
k~|~\left(  i,j,k\right)  \in\mathcal{E}\right\}  .$\vspace*{-0.05in}

\item A set of program variables $x_{l}\in\Omega\subseteq\mathbb{R},$
$l\in\mathbb{Z}\left(  1,n\right)  .$ Given $\mathcal{N}$ and $n$, the state
space of a graph model is $X=\mathcal{N}\times\Omega^{n}$. The state
$\widetilde{x}=\left(  i,x\right)  $ of a graph model has therefore, two
components: The discrete component $i\in\mathcal{N},$ and the continuous
component $x\in\Omega^{n}\subseteq\mathbb{R}^{n}$.\vspace*{-0.05in}

\item A set of \textit{transition} labels $\overline{T}_{ji}^{k}$ assigned to
every edge $\left(  i,j,k\right)  \in\mathcal{E}$, where $\overline{T}%
_{ji}^{k}$ maps $x$ to the set $\overline{T}_{ji}^{k}x=\{T_{ji}^{k}\left(
x,w,v\right)  ~|~\left(  x,w,v\right)  \in S_{ji}^{k}\},$ where $\left(
w,v\right)  \in\left[  -1,1\right]  ^{n_{w}}\times\left\{  -1,1\right\}
^{n_{v}},$ and $T_{ji}^{k}:\mathbb{R}^{n+n_{w}+n_{v}}\mapsto\mathbb{R}^{n}$ is
a polynomial function and $S_{ji}^{k}$ is a semialgebraic set$.$ If
$\overline{T}_{ji}^{k}$ is a deterministic map, we drop $S_{ji}^{k}$ and
define $\overline{T}_{ji}^{k}\equiv T_{ji}^{k}\left(  x\right)  $%
.\vspace*{-0.05in}

\item A set of \textit{passport }labels $\Pi_{ji}^{k}$ assigned to all edges
$\left(  i,j,k\right)  \in\mathcal{E}$, where $\Pi_{ji}^{k}$ is a
semialgebraic set. A state transition along edge $\left(  i,j,k\right)  $ is
possible if and only if $x\in\Pi_{ji}^{k}.$\vspace*{-0.05in}

\item A set of semialgebraic invariant sets $X_{i}\subseteq\Omega^{n},$
$i\in\mathcal{N}$ are assigned to every node on the graph, such that
$P_{r}\left(  i,x\right)  \in X_{i}.$ Equivalently, a state $\widetilde
{x}=\left(  i,x\right)  $ satisfying $x\in X\backslash X_{i}$ is
unreachable.\vspace*{-0.05in}
\end{enumerate}

Therefore, a graph model is a well-defined specific case of the generic model
$\mathcal{S(}X,f,X_{0},X_{\infty}),$ with $X=\mathcal{N}\times\Omega^{n},$
$X_{0}=\left\{  \emptyset\right\}  \times X_{\emptyset},$ $X_{\infty}=\left\{
\Join\right\}  \times X_{\Join}$ and $f:X\mapsto2^{X}$ defined as:\vspace
*{-0.2in}%
\begin{equation}
f\left(  \widetilde{x}\right)  \equiv f\left(  i,x\right)  =\left\{
(j,\overline{T}_{ji}^{k}x)~|~j\in\mathcal{O}\left(  i\right)  ,\text{ }x\in
\Pi_{ji}^{k}\cap X_{i}\right\}  .\vspace*{-0.25in}\label{GraphUpdate}%
\end{equation}

Conceptually\vspace*{-0.01in} similar models have been reported in
\cite{Pel01} for software verification, and in \cite{Alur1995, Brocket} for
modeling and verification of hybrid systems. \vspace*{-0.01in}Interested
readers may consult \cite{RMF2010} for further details regarding treatment of
graph models with time-varying state-dependent transitions labels\vspace
*{-0.01in} which arise in modeling operations with arrays.\vspace*{-0.04in}

\textbf{Remarks}$\vspace*{-0.06in}$

\begin{enumerate}
\item[\textbf{--}] \vspace*{-0.01in} The invariant set of node $\emptyset$
contains all the available information about the initial conditions of the
program variables: $P_{r}\left(  \emptyset,x\right)  \in X_{\emptyset}%
.\vspace*{-0.04in}$

\item[\textbf{--}] Multiple edges between nodes enable modeling of logical
"or" or "xor" type conditional transitions.\vspace*{-0.01in} This allows for
modeling systems with nondeterministic discrete transitions.$\vspace
*{-0.04in}$

\item[\textbf{--}] The transition label $\overline{T}_{ji}^{k}$\ may represent
a simple update rule which depends on the real-time input.\vspace*{-0.01in}
For instance, if $T=Ax+Bw,$\ and $S=R^{n}\times\left[  -1,1\right]  , $\ then
$x\overset{\overline{T}}{\mapsto}\left\{  Ax+Bw~|~w\in\left[  -1,1\right]
\right\}  .$\vspace*{-0.01in}\ In other cases, $\overline{T}_{ji}^{k}$\ may
represent an abstraction of a nonlinearity. \vspace*{-0.01in}For instance, the
assignment $x\mapsto\sin\left(  x\right)  $\ can be abstracted by
$x\overset{\overline{T}}{\mapsto}\left\{  T\left(  x,w\right)  |\left(
x,w\right)  \in S\right\}  $ (see Eqn. (\ref{SinAbst}) in Appendix
I).$\vspace*{-0.03in}$
\end{enumerate}

Before we proceed, we introduce the following notation:\ Given a semialgebraic
set $\Pi,$ and a polynomial function $\tau:\mathbb{R}^{n}\mapsto\mathbb{R}%
^{n},$ we denote by $\Pi\left(  \tau\right)  ,$ the set: $\Pi(\tau)=\left\{
x~|~\tau\left(  x\right)  \in\Pi\right\}  .\vspace*{-0.05in}$

\paragraph{Construction of Simple Invariant Sets\label{sec:constabst}}

\vspace*{-0.01in}Simple invariant sets can be included in the model if they
are readily available or easily computable.\vspace*{-0.01in} Even trivial
invariants can simplify the analysis and improve the chances of finding
stronger invariants via numerical optimization.\vspace*{-0.05in}

\begin{enumerate}
\item[\textbf{--}] Simple invariant sets may be provided by the
programmer.\vspace*{-0.01in} These can be trivial sets representing simple
algebraic relations between variables, or they can be more complicated
relationships\vspace*{-0.01in} that reflect the programmer's knowledge about
the functionality and behavior of the program.\vspace*{-0.01in}

\item[\textbf{--}] Invariant Propagation: Assuming that $T_{ij}^{k}$ are
deterministic and invertible, the set\vspace*{-0.2in}%
\begin{equation}
X_{i}=%
{\textstyle\bigcup\limits_{j\in\mathcal{I}\left(  i\right)  ,\text{ }%
k\in\mathcal{A}_{ij}}}
\Pi_{ij}^{k}\left(  \left(  T_{ij}^{k}\right)  ^{-1}\right)  \vspace
*{-0.15in}\label{constraint prop}%
\end{equation}
is an invariant set for node $i.$ Furthermore, if the invariant sets $X_{j}$
are strict subsets of $\Omega^{n}$\ for all $j\in\mathcal{I}\left(  i\right)
,$ then (\ref{constraint prop})\ can be improved. Specifically, the
set\vspace*{-0.2in}%
\begin{equation}
X_{i}=%
{\textstyle\bigcup\limits_{j\in\mathcal{I}\left(  i\right)  ,\text{ }%
k\in\mathcal{A}_{ij}}}
\Pi_{ij}^{k}\left(  \left(  T_{ij}^{k}\right)  ^{-1}\right)  \cap X_{j}\left(
\left(  T_{ij}^{k}\right)  ^{-1}\right)  \vspace*{-0.1in}%
\label{invariant prop}%
\end{equation}
is an invariant\vspace*{-0.01in} set for node $i.$ Note that it is sufficient
that the restriction of $T_{ij}^{k}$ to the lower dimensional spaces in the
domains of $\Pi_{ij}^{k}$ and $X_{j}$ be invertible.

\item[\textbf{--}] Preserving Equality Constraints:\ Simple assignments of the
form $T_{ij}^{k}:x_{l}\mapsto f\left(  y_{m}\right)  $ result in invariant
sets of the form $X_{i}=\left\{  x~|~x_{l}-f\left(  y_{m}\right)  =0\right\}
$ at node $i,$ provided that $T_{ij}^{k}$ does not simultaneously update
$y_{m}.$ Formally, let $T_{ij}^{k}$ be such that $(T_{ij}^{k}x)_{l}-x_{l}$ is
non-zero for at most one element $\hat{l}\in\mathbb{Z}\left(  1,n\right)  ,$
and that $(T_{ij}^{k}x)_{\hat{l}}$ is independent of $x_{\hat{l}}.$ Then, the
following set is an invariant set at node $i:$\vspace*{-0.15in}
\[
X_{i}=%
{\textstyle\bigcup\limits_{j\in\mathcal{I}\left(  i\right)  ,\text{ }%
k\in\mathcal{A}_{ij}}}
\left\{  x~|~\left[  T_{ij}^{k}-I\right]  x=0\right\}  \vspace*{-0.15in}%
\]

\end{enumerate}

\subsubsection{Mixed-Integer Linear over Graph Hybrid Model
(MIL-GHM)\label{MGHM: section}}

The MIL-GHMs are graph models in which the effects of several lines and/or
\textit{functions} of code are compactly represented via a MILM. As a result,
the graphs in such models have edges (possibly self-edges) that are labeled
with matrices $F$ and $H$ corresponding to a MILM as the transition and
passport labels. Such models combine the flexibility provided by graph models
and the compactness of MILMs. An example is presented in Section
\ref{sec:casestudy}.\vspace*{-0.18in}

\subsection{Specifications\label{Section:Specifications}\vspace*{-0.08in}}

The specification that can be verified in our framework can generically be
described as unreachability and finite-time termination.\vspace*{-0.05in}

\begin{definition}
\label{unreachability}A Program $\mathcal{P}\equiv\mathcal{S}(X,f,X_{0}%
,X_{\infty})$ is said to satisfy the unreachability property with respect to a
subset $X_{-}\subset X,$ if for every trajectory $\mathcal{X}\equiv x\left(
\cdot\right)  $ of (\ref{Softa1}), and every $t\in\mathbb{Z}_{+},$ $x(t)$ does
not belong to $X_{-}.$ A program $\mathcal{P}\equiv\mathcal{S}(X,f,X_{0}%
,X_{\infty})$ is said to \textit{terminate in finite time} if every solution
$\mathcal{X}=x\left(  \cdot\right)  $ of (\ref{Softa1}) satisfies $x(t)\in
X_{\infty}$ for some $t\in\mathbb{Z}_{+}.$\vspace*{-0.05in}
\end{definition}

Several critical specifications associated with runtime errors are special
cases of unreachability.

\subsubsection{Overflow}

Absence of overflow can be characterized as a special case of unreachability
by defining:\vspace*{-0.2in}%
\[
X_{-}=\left\{  x\in X~|~\left\Vert \alpha^{-1}x\right\Vert _{\infty}%
>1,~\alpha=\operatorname{diag}\left\{  \alpha_{i}\right\}  \right\}
\vspace*{-0.2in}%
\]
%
where $\alpha_{i}>0$ is the overflow limit for variable $i.\vspace*{-0.05in} $

\subsubsection{Out-of-Bounds Array Indexing}

An out-of-bounds array indexing error occurs when a variable exceeding the
length of an array, references an element of the array. Assuming that $x_{l}$
is the corresponding integer index and $L$ is the array length, one must
verify that $x_{l}$ does not exceed $L$ at location $i,$ where referencing
occurs. This can be accomplished by defining $X_{-}=\left\{  \left(
i,x\right)  \in X~|~\left\vert x_{l}\right\vert >L\right\}  $ over a graph
model and proving that $X_{-}$ is unreachable. This is also similar to
\textquotedblleft assertion checking\textquotedblright\ defined next.\vspace
*{-0.05in}

\subsubsection{Program Assertions}

An \textit{assertion} is a mathematical expression whose validity at a
specific location in the code must be verified. It usually indicates the
programmer's expectation from the behavior of the program. We consider
\textit{assertions} that are in the form of semialgebraic set memberships.
Using graph models, this is done as follows:\vspace*{-0.1in}%
\[%
\begin{array}
[c]{llccl}%
\text{at location }i: & \text{assert }x\in A_{i} & \Rightarrow & \text{define}
& X_{-}=\left\{  \left(  i,x\right)  \in X~|~x\in X\backslash A_{i}\right\}
,\\[-0.05in]%
\text{at location }i: & \text{assert }x\notin A_{i} & \Rightarrow &
\text{define} & X_{-}=\left\{  \left(  i,x\right)  \in X~|~x\in A_{i}\right\}
.
\end{array}
\vspace*{-0.1in}%
\]
In particular, safety assertions for division-by-zero or taking the square
root (or logarithm) of positive variables are standard and must be
automatically included in numerical programs (cf. Sec. {\small \ref{Sec:BC}},
Table {\small \ref{Table II}}).\vspace*{-0.03in}

\subsubsection{Program Invariants}

A program invariant is a property that holds throughout the execution of the
program. The property indicates that the variables reside in a semialgebraic
subset $X_{I}\subset X$. Essentially, any method that is used for verifying
unreachability of a subset $X_{-}\subset X,$ can be applied for verifying
invariance of $X_{I}$ by defining $X_{-}=X\backslash X_{I},$ and vice
versa.\vspace*{-0.15in}

\subsection{Implications of the Abstractions\label{Section:ImpAbst}}

For mathematical correctness, we must show that if an $\mathcal{A}%
$-representation of a program satisfies the unreachability and FTT
specifications, then so does the $\mathcal{C}$-representation, i.e., the
actual program. This is established\vspace*{-0.01in} in the following
proposition. The proof is omitted for brevity but can be found in
\cite{RMF2010}.\vspace*{-0.03in}

\begin{proposition}
\label{Abstraction}Let $\overline{\mathcal{S}}(\overline{X},\overline
{f},\overline{X}_{0},\overline{X}_{\infty})$ be an $\mathcal{A}$%
-representation of program $\mathcal{P}$ with $\mathcal{C}$-representation
\newline$\mathcal{S}(X,f,X_{0},X_{\infty}).$ Let $X_{-}\subset X$ and
$\overline{X}_{-}\subset\overline{X}$ be such that $X_{-}\subseteq\overline
{X}_{-}.$ Assume that the unreachability property w.r.t. $\overline{X}_{-}$
has been verified for $\overline{\mathcal{S}}$. Then, $\mathcal{P}$ satisfies
the unreachability property w.r.t. $X_{-}.$ Moreover, if the FTT property
holds for $\overline{\mathcal{S}}$, then $\mathcal{P}$ terminates in finite time.
\end{proposition}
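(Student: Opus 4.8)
The plan is to prove the statement directly from the definition of an $\mathcal{A}$-representation (Definition \ref{Def:abst}) by a trajectory-lifting argument: every trajectory of the concrete system $\mathcal{S}$ is also a trajectory of the abstract system $\overline{\mathcal{S}}$. First I would take an arbitrary trajectory $\mathcal{X} = (x(0), x(1), \dots)$ of $\mathcal{S}$ satisfying (\ref{Softa1}) and argue by induction on $t$ that $x(t) \in X$ for all $t$ (immediate, since $f$ maps $X$ into $2^X$) and that $\mathcal{X}$ satisfies the defining relations of $\overline{\mathcal{S}}$, namely $x(0) \in \overline{X}_0$ and $x(t+1) \in \overline{f}(x(t))$. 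The first holds because $X_0 \subseteq \overline{X}_0$; the second because for $x(t) \in X$ we have $f(x(t)) \subseteq \overline{f}(x(t))$, so $x(t+1) \in f(x(t)) \subseteq \overline{f}(x(t))$. Hence $\mathcal{X}$ is a trajectory of $\overline{\mathcal{S}}$ whose states all lie in $X$.

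Next I would handle unreachability. Suppose, for contradiction, that $\mathcal{P}$ (equivalently $\mathcal{S}$) violates unreachability w.r.t. $X_-$: there is a trajectory $\mathcal{X}$ of $\mathcal{S}$ and a time $t$ with $x(t) \in X_-$. By the lifting argument $\mathcal{X}$ is also a trajectory of $\overline{\mathcal{S}}$, and since $X_- \subseteq \overline{X}_-$ we get $x(t) \in \overline{X}_-$, contradicting the assumption that unreachability w.r.t. $\overline{X}_-$ has been verified for $\overline{\mathcal{S}}$. Therefore $\mathcal{P}$ satisfies unreachability w.r.t. $X_-$.

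For the FTT claim I would again lift an arbitrary trajectory $\mathcal{X}$ of $\mathcal{S}$ to a trajectory of $\overline{\mathcal{S}}$. By the FTT property of $\overline{\mathcal{S}}$, there is some $t \in \mathbb{Z}_+$ with $x(t) \in \overline{X}_\infty$. Since every $x(t) \in X$, we have $x(t) \in \overline{X}_\infty \cap X$, and condition (\ref{terminalabstract}), $\overline{X}_\infty \cap X \subseteq X_\infty$, then gives $x(t) \in X_\infty$. As $\mathcal{X}$ was arbitrary, $\mathcal{P}$ terminates in finite time. This last step is exactly where the slightly subtle definition of $\overline{X}_\infty$ via (\ref{terminalabstract}) rather than the naive $\overline{X}_\infty \subseteq X_\infty$ earns its keep, and it is the only place where a reader might stumble — one must remember to intersect with $X$ before invoking (\ref{terminalabstract}), which is legitimate precisely because the lifted trajectory stays in $X$. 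Everything else is routine bookkeeping; the main (minor) obstacle is being careful that the set-valued dynamics and the "all trajectories" quantifier in the FTT definition are handled cleanly, but no genuine difficulty arises.
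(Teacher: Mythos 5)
Your proof is correct and follows exactly the argument the paper intends: the paper omits the proof (deferring to its extended report), but the trajectory-lifting step and the role of condition (\ref{terminalabstract}) are precisely what the discussion following Definition \ref{Def:abst} sketches. You correctly identify the one subtle point, namely that the lifted trajectory remains in $X$ so that $\overline{X}_{\infty}\cap X\subseteq X_{\infty}$ can be invoked; nothing is missing.
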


Since we are not concerned with undecidability issues, and in light of
Proposition \ref{Abstraction}, we will not differentiate between abstract or
concrete representations in the remainder of this paper.

\section{Lyapunov Invariants as Behavior
Certificates\label{Chapter:LyapunovInvs}\vspace*{-0.05in}}

Analogous to a Lyapunov function, a Lyapunov invariant is a real-valued
function of the program variables satisfying a \textit{difference inequality}
along the execution trace.\vspace*{-0.05in}

\begin{definition}
\label{Def:LyapInv}A $(\theta,\mu)$\textit{-Lyapunov invariant} for
$\mathcal{S}(X,f,X_{0},X_{\infty})$ is a function $V:X\mapsto\mathbb{R}$ such
that\vspace*{-0.18in}%
\begin{equation}
V\left(  x_{+}\right)  -\theta V\left(  x\right)  \leq-\mu\text{\qquad}\forall
x\in X,\text{ }x_{+}\in f\left(  x\right)  :x\notin X_{\infty}.\vspace
*{-0.15in}\label{Softa2}%
\end{equation}
where $\left(  \theta,\mu\right)  \in\lbrack0,\infty)\times\lbrack0,\infty)$.
Thus, a Lyapunov invariant satisfies the \textit{difference inequality}
(\ref{Softa2}) along the trajectories of $\mathcal{S}$ until they reach a
terminal state $X_{\infty}$.
\end{definition}

It follows from Definition \ref{Def:LyapInv} that a Lyapunov invariant is not
necessarily nonnegative, or bounded from below, and in general it need not be
monotonically decreasing. While the zero level set of $V$ defines an invariant
set in the sense that $V\left(  x_{k}\right)  \leq0$ implies $V\left(
x_{k+l}\right)  \leq0$, for all $l\geq0,$ monotonicity depends on $\theta$ and
the initial condition.\ For instance, if $V\left(  x_{0}\right)  \leq0,$
$\forall x_{0}\in X_{0},$ then (\ref{Softa2}) implies that $V\left(  x\right)
\leq0$ along the trajectories of $\mathcal{S},$ however, $V\left(  x\right)  $
may not be monotonic if $\theta<1,$ though it will be monotonic for
$\theta\geq1.$ Furthermore, the level sets of a Lyapunov invariant need not be
bounded closed curves.

Proposition \ref{prop:MILMLyap}\ (to follow) formalizes the interpretation of
Definition \ref{Def:LyapInv} for the specific modeling languages. Natural
Lyapunov invariants for graph models are functions of the form$\vspace
*{-0.2in}$%
\begin{equation}
V\left(  \widetilde{x}\right)  \equiv V\left(  i,x\right)  =\sigma_{i}\left(
x\right)  ,\text{\quad}i\in N,\vspace*{-0.2in}\label{nodewiselyap}%
\end{equation}
which assign a polynomial Lyapunov function to every node $i\in\mathcal{N}$ on
the graph $G\left(  \mathcal{N},\mathcal{E}\right)  .\vspace*{-0.05in}$

\begin{proposition}
\label{prop:MILMLyap}Let $\mathcal{S}\left(  F,H,X_{0},n,n_{w},n_{v}\right)  $
and properly labeled graph $G\left(  \mathcal{N},\mathcal{E}\right)  $ be the
MIL and graph models for a computer program $\mathcal{P}.$ The function
$V:\left[  -1,1\right]  ^{n}\mapsto\mathbb{R}$ is a $\left(  \theta
,\mu\right)  $-Lyapunov invariant for $\mathcal{P}$ if it satisfies:\vspace
*{-0.2in}%
\[
V(Fx_{e})-\theta V\left(  x\right)  \leq-\mu,\text{\qquad}\forall\left(
x,x_{e}\right)  \in\left[  -1,1\right]  ^{n}\times\Xi,\vspace*{-0.2in}%
\]
where\vspace*{-0.2in}%
\[
\Xi=\{\left(  x,w,v,1\right)  ~|~H[%
\begin{array}
[c]{cccc}%
\hspace*{-0.04in}\vspace*{0.04in}x\hspace*{-0.02in} & \hspace*{-0.02in}%
w\hspace*{-0.02in} & \hspace*{-0.02in}v\hspace*{-0.02in} & \hspace
*{-0.02in}1\hspace*{-0.04in}%
\end{array}
]^{^{T}}=0,\text{ }\left(  w,v\right)  \in\left[  -1,1\right]  ^{n_{w}}%
\times\left\{  -1,1\right\}  ^{n_{v}}\}.\vspace*{-0.17in}%
\]
The function $V:\mathcal{N\times}\mathbb{R}^{n}\mapsto\mathbb{R},$ satisfying
(\ref{nodewiselyap}) is a $\left(  \theta,\mu\right)  $-Lyapunov invariant for
$\mathcal{P}$ if$\vspace*{-0.2in}$
\begin{equation}
\sigma_{j}(x_{+})-\theta\sigma_{i}\left(  x\right)  \leq-\mu,\text{ }%
\forall\left(  i,j,k\right)  \in\mathcal{E},\text{ }(x,x_{+})\in(X_{i}\cap
\Pi_{ji}^{k})\times\overline{T}_{ji}^{k}x.\vspace*{-0.2in}\label{arcwiselyap0}%
\end{equation}
Note that a generalization of (\ref{Softa2}) allows for $\theta$ and $\mu$ to
depend on the state $x,$ although simultaneous search for $\theta\left(
x\right)  $ and $V\left(  x\right)  $ leads to non-convex conditions, unless
the dependence of $\theta$ on $x$ is fixed a-priori. We allow for dependence
of $\theta$ on the discrete component of the state in the following
way:$\vspace*{-0.18in}$%
\begin{equation}
\sigma_{j}(x_{+})-\theta_{ji}^{k}\sigma_{i}\left(  x\right)  \leq-\mu
_{ji},\text{ }\forall\left(  i,j,k\right)  \in\mathcal{E},\text{ }(x,x_{+}%
)\in(X_{i}\cap\Pi_{ji}^{k})\times\overline{T}_{ji}^{k}x\vspace*{-0.18in}%
\label{arcwiselyap}%
\end{equation}

\end{proposition}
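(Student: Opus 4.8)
The plan is to derive all three assertions directly from Definition \ref{Def:LyapInv} by substituting, for each of the two structured model classes, the explicit descriptions of $X$, $f$, and $X_{\infty}$ into the difference inequality (\ref{Softa2}), and then checking that the quantifier ranges appearing in the proposition cover those demanded by (\ref{Softa2}). Since (\ref{Softa2}) is the only property that must hold, each claim collapses to an inclusion between index sets together with a one-line rewriting; no estimates are involved.

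For the MILM claim I would take an arbitrary $x\in X$ with $x\notin X_{\infty}$ and an arbitrary $x_{+}\in f(x)$. By the transition rule (\ref{MILM1}), $x_{+}=Fx_{e}$ with $x_{e}=[\,x\ \ w\ \ v\ \ 1\,]^{T}$, $Hx_{e}=0$, and $(w,v)\in[-1,1]^{n_{w}}\times\{-1,1\}^{n_{v}}$, i.e. $x_{e}\in\Xi$. Since $X\subseteq[-1,1]^{n}$, the pair $(x,x_{e})$ lies in $[-1,1]^{n}\times\Xi$, so the hypothesis yields $V(x_{+})-\theta V(x)=V(Fx_{e})-\theta V(x)\le-\mu$, which is precisely (\ref{Softa2}); hence $V$ is a $(\theta,\mu)$-Lyapunov invariant. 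I would point out that the stated condition is in fact stronger than needed, since it also ranges over $x\in[-1,1]^{n}\setminus X$ and over $x\in X_{\infty}$; this over-inclusiveness is deliberate (the convex relaxations of Section \ref{Chapter:Computation} do not exploit the fine structure of $X$ or the exclusion of terminal states) and does not affect sufficiency.

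For the graph-model claims I would proceed identically, using the explicit data $X=\mathcal{N}\times\Omega^{n}$, $X_{\infty}=\{\Join\}\times X_{\Join}$, and the transition map $f$ of (\ref{GraphUpdate}) recalled in Section \ref{graph models: section}. For $V$ of the node-wise form (\ref{nodewiselyap}), pick $\widetilde{x}=(i,x)\notin X_{\infty}$ and $\widetilde{x}_{+}=(j,x_{+})\in f(i,x)$; by (\ref{GraphUpdate}) there is an edge $(i,j,k)\in\mathcal{E}$ with $x\in X_{i}\cap\Pi_{ji}^{k}$ and $x_{+}\in\overline{T}_{ji}^{k}x$, and $V(\widetilde{x}_{+})-\theta V(\widetilde{x})=\sigma_{j}(x_{+})-\theta\sigma_{i}(x)$, which the hypothesis (\ref{arcwiselyap0}) bounds by $-\mu$. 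The generalized inequality (\ref{arcwiselyap}) is then obtained by exactly the same substitution once one observes that Definition \ref{Def:LyapInv} extends verbatim when the single pair $(\theta,\mu)$ is replaced by edge-indexed constants $(\theta_{ji}^{k},\mu_{ji})$, the difference inequality being imposed separately on each transition; nothing in the argument changes.

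I do not expect a genuine obstacle: the proposition is a bookkeeping specialization of the definition. The only step that warrants care is the handling of the terminal set and of node invariants — namely, verifying that replacing ``all $x\notin X_{\infty}$'' by ``all $x\in[-1,1]^{n}$'' (MILM) or by ``all edges of $\mathcal{E}$,'' including the identity self-loop at $\Join$ (graph model), only strengthens the hypothesis, and that the unreachability convention ``$x\in X\setminus X_{i}$ is unreachable'' at node $i$ is already encoded by the intersection with $X_{i}$ in (\ref{GraphUpdate}) and hence in (\ref{arcwiselyap0}). By Proposition \ref{Abstraction}, no separate treatment of concrete versus abstract representations is required.
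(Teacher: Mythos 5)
Your proposal is correct and matches the paper's (implicit) reasoning: the paper states Proposition \ref{prop:MILMLyap} without proof, presenting it as a direct formalization of Definition \ref{Def:LyapInv} for the two modeling languages, and your argument is exactly that unfolding --- substitute the explicit $f$ from (\ref{MILM1}) and (\ref{GraphUpdate}) into (\ref{Softa2}) and observe that the stated quantifier ranges (all of $[-1,1]^{n}\times\Xi$, respectively all edges including the terminal self-loop) contain, hence only strengthen, what the definition requires. Your side remarks on the deliberate over-inclusiveness and on the edge-indexed $(\theta_{ji}^{k},\mu_{ji})$ generalization are accurate and consistent with the paper's intent.
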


\subsection{Behavior Certificates\label{Sec:BC}\vspace*{-0.1in}}

\subsubsection{Finite-Time Termination (FTT)\ Certificates\vspace*{-0.1in}}

The following proposition is applicable to FTT analysis of both finite and
infinite\ state models.

\begin{proposition}
\label{FTT2}\textbf{Finite-Time Termination.} Consider a program
$\mathcal{P},$ and its dynamical system model $\mathcal{S}(X,f,X_{0}%
,X_{\infty})$. If there exists a $\left(  \theta,\mu\right)  $-Lyapunov
invariant $V:X\mapsto\mathbb{R},$ uniformly bounded on $X\backslash X_{\infty
},$ satisfying (\ref{Softa2}) and the following conditions\vspace*{-0.25in}%
\begin{align}
V\left(  x\right)   & \leq-\eta\leq0,\text{\qquad}\forall x\in X_{0}%
\label{Softa2a1}\\
\mu+\left(  \theta-1\right)  \left\Vert V\right\Vert _{\infty}  &
>0\label{Softa2a2}\\
\max\left(  \mu,\eta\right)   & >0\label{Softa2a3}%
\end{align}%
\[
\vspace*{-0.8in}%
\]
where $\left\Vert V\right\Vert _{\infty}=\sup\limits_{x\in X\backslash
X_{\infty}}V\left(  x\right)  <\infty,$ then $\mathcal{P}$ terminates in
finite time, and an upper-bound on the number of iterations is given
by\vspace*{-0.1in}%
\begin{equation}
T_{u}=\left\{
\begin{array}
[c]{lcl}%
\displaystyle\frac{\log\left(  \mu+\left(  \theta-1\right)  \left\Vert
V\right\Vert _{\infty}\right)  -\log\left(  \mu\right)  }{\log\theta} & , &
\theta\neq1,~\mu>0\vspace*{0.1in}\\
\displaystyle\frac{\log\left(  \left\Vert V\right\Vert _{\infty}\right)
-\log\left(  \eta\right)  }{\log\theta} & , & \theta\neq1,~\mu=0\\
\left\Vert V\right\Vert _{\infty}/\mu & , & \theta=1
\end{array}
\right. \label{Bnd on No. Itrn.}%
\end{equation}%
\[
\vspace*{-0.3in}%
\]

\end{proposition}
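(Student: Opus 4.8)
The plan is to show that the Lyapunov invariant $V$, evaluated along any trajectory, is forced below a strictly negative threshold after at most $T_u$ steps, while simultaneously $V$ cannot be strictly negative at a non-terminal state once it has dropped far enough — hence the trajectory must have entered $X_\infty$ by then. First I would fix an arbitrary trajectory $x(\cdot)$ of $\mathcal{S}$ and let $\tau$ be the first time (possibly $+\infty$) at which $x(\tau)\in X_\infty$. For $t<\tau$ the difference inequality (\ref{Softa2}) applies at every step, so $V(x(t+1))\le\theta V(x(t))-\mu$. Iterating this recursion from $t=0$ and using $V(x(0))\le-\eta\le 0$ from (\ref{Softa2a1}) gives, for $\theta\neq 1$,
\begin{equation*}
V(x(t))\le \theta^{t}V(x(0)) - \mu\,\frac{\theta^{t}-1}{\theta-1}
\le -\theta^{t}\eta - \mu\,\frac{\theta^{t}-1}{\theta-1},
\end{equation*}
and the analogous bound $V(x(t))\le -\eta-\mu t$ for $\theta=1$. (One has to be mildly careful about the sign of $\theta-1$; in both the $\theta<1$ and $\theta>1$ cases the quantity $\mu(\theta^{t}-1)/(\theta-1)$ is nonnegative, so the inequality goes the right way.)

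Next I would derive a lower bound on $V$ at non-terminal states. Since $\|V\|_\infty=\sup_{x\in X\setminus X_\infty}V(x)<\infty$, for any non-terminal state $x$ we also have, applying (\ref{Softa2}) once more, $-\mu \ge V(x_+)-\theta V(x) \ge -\|V\|_\infty - \theta V(x)$ when $V(x_+)$ is itself taken at a non-terminal successor — more directly, I would argue that if $x(t)\notin X_\infty$ then $V(x(t))\ge \ell$ for a finite constant $\ell$ determined by $\|V\|_\infty$, $\theta$, and $\mu$: indeed the one-step inequality forces $\theta V(x(t)) \ge V(x(t+1))+\mu \ge -\|V\|_\infty+\mu$, i.e. $V(x(t))\ge (\mu-\|V\|_\infty)/\theta$ when $\theta>0$. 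The cleaner route, and the one I expect the paper takes, is to combine the upper bound above with the plain fact $V(x(t))\le\|V\|_\infty$ whenever $x(t)\notin X_\infty$: if at some time $t$ the right-hand side of the iterated upper bound drops strictly below $-\|V\|_\infty$, we get a contradiction with $x(t)\notin X_\infty$, hence $x(t)\in X_\infty$ and termination has occurred.

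It then remains to solve for the smallest $t$ making the iterated bound fall below $-\|V\|_\infty$. For $\theta\neq1,\ \mu>0$ one needs $-\mu(\theta^t-1)/(\theta-1)-\theta^t\eta \le -\|V\|_\infty$; dropping the (nonpositive) $-\theta^t\eta$ term and rearranging gives $\theta^t \ge (\mu+(\theta-1)\|V\|_\infty)/\mu$, whence $t\ge[\log(\mu+(\theta-1)\|V\|_\infty)-\log\mu]/\log\theta$, which is exactly the first branch of (\ref{Bnd on No. Itrn.}); here condition (\ref{Softa2a2}) guarantees the logarithm's argument is positive so the bound is well defined and finite, and the case analysis on $\theta\gtrless 1$ is needed to handle the direction of the inequality when dividing by $\log\theta$. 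For $\theta\neq1,\ \mu=0$ the recursion is $V(x(t))\le\theta^tV(x(0))\le-\theta^t\eta$ (note (\ref{Softa2a3}) then forces $\eta>0$), and $-\theta^t\eta\le-\|V\|_\infty$ gives the second branch. For $\theta=1$, $V(x(t))\le-\eta-\mu t\le V(x(0))-\mu t$ and $\|V\|_\infty\ge -V(x(t))\ge \mu t-\|V\|_\infty$-type bookkeeping yields $t\le\|V\|_\infty/\mu$ (with $\mu>0$ forced by (\ref{Softa2a2}) when $\theta=1$), the third branch. The main obstacle — really the only place where care is required rather than routine algebra — is getting all the sign conventions straight simultaneously: $V$ is not assumed nonnegative, $\|V\|_\infty$ is a supremum that could be negative, and the three parameter regimes for $(\theta,\mu)$ interact with conditions (\ref{Softa2a2})–(\ref{Softa2a3}) in slightly different ways, so each branch of (\ref{Bnd on No. Itrn.}) must be checked to be both finite and genuinely an upper bound on the exit time.
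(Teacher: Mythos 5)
Your proposal is correct and follows essentially the same route as the paper: iterate the difference inequality (\ref{Softa2}) from $V(x(0))\le-\eta$, note that $|V|$ is bounded by $\left\Vert V\right\Vert_{\infty}$ on non-terminal states, and conclude that a trajectory avoiding $X_{\infty}$ for more than $T_u$ steps would force $V$ below $-\left\Vert V\right\Vert_{\infty}$, a contradiction. The paper compresses the branch-by-branch algebra into ``it can be verified that,'' whereas you carry it out explicitly; the substance is identical.
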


\begin{proof}
The proof is presented in Appendix II.
\end{proof}

When the state-space $X$ is finite, or when the Lyapunov invariant $V$ is only
a function of a subset of the variables that assume values in a finite set,
e.g., integer counters, it follows from Proposition \ref{FTT2} that $V$ being
a $\left(  \theta,\mu\right)  $-Lyapunov invariant for any $\theta\geq1$ and
$\mu>0$ is sufficient for certifying FTT, and uniform boundedness of $V$ need
not be established a-priori.

\begin{example}
Consider the {\small IntegerDivision} program presented in Example
\ref{IntegerDiv-Ex}. The function $V:X\mapsto\mathbb{R},$ defined according to
$V:(\mathrm{dd},\mathrm{dr},\mathrm{q},\mathrm{r})\mapsto\mathrm{r}$ is a
$\left(  1,\mathrm{dr}\right)  $-Lyapunov invariant for
{\small IntegerDivision}:{\small \ }at every step, $V$ decreases by
$\mathrm{dr}>0.$ Since $X$ is finite, the program {\small IntegerDivision}
terminates in finite time. This, however, only proves absence of infinite
loops. The program could terminate with an overflow.
\end{example}

\vspace*{0.1in}

\subsubsection{Separating Manifolds and Certificates of Boundedness}

Let $V$ be a Lyapunov invariant satisfying (\ref{Softa2}) with $\theta=1.$ The
level sets of $V,$ defined by $\mathcal{L}_{r}(V)\overset{\text{def}}{=}\{x\in
X:V(x)<r\},$ are invariant with respect to (\ref{Softa1}) in the sense that
$x(t+1)\in\mathcal{L}_{r}(V)$ whenever $x(t)\in\mathcal{L}_{r}(V)$. However,
for $r=0,$ the level sets $\mathcal{L}_{r}(V)$ remain invariant with respect
to (\ref{Softa1}) for any nonnegative $\theta.$ This is an important property
with the implication that $\theta=1$ (i.e., monotonicity) is not necessary for
establishing a separating manifold between the reachable set and the unsafe
regions of the state space (cf. Theorem \ref{BddNess}).

\begin{theorem}
\label{BddNess}\textbf{Lyapunov Invariants as Separating Manifolds.} Let
$\mathcal{V}$ denote the set of all $\left(  \theta,\mu\right)  $-Lyapunov
invariants satisfying (\ref{Softa2}) for\ program $\mathcal{P}\equiv
\mathcal{S}(X,f,X_{0},X_{\infty}).$ Let $I$ be the identity map, and for
$h\in\left\{  f,I\right\}  $ define\vspace*{-0.15in}%
\[
h^{-1}\left(  X_{-}\right)  =\left\{  x\in X|h\left(  x\right)  \cap X_{-}%
\neq\varnothing\right\}  .\vspace*{-0.2in}%
\]
A subset $X_{-}\subset X,$ where $X_{-}\cap X_{0}=\varnothing$ can never be
reached along the trajectories of $\mathcal{P},$ if there exists
$V\in\mathcal{V}$ satisfying\vspace*{-0.1in}%
\begin{equation}
\underset{x\in X_{0}}{\sup}V(x)~<~\underset{x\in h^{-1}\left(  X_{-}\right)
}{\inf}V\left(  x\right)  \vspace*{-0.12in}\label{Inf G than Sup}%
\end{equation}
and either $\theta=1,$ or one of the following two conditions hold:\vspace
*{-0.13in}%
\begin{align}
\left(  \text{I}\right)  \text{ }\theta & <1\text{\qquad and\qquad}%
\underset{x\in h^{-1}\left(  X_{-}\right)  }{\inf}V(x)>0.\vspace
*{-0.12in}\label{Inf G than Zero}\\
\left(  \text{II}\right)  \text{ }\theta & >1\text{\hspace*{0.33in}%
and\qquad\ }\underset{x\in X_{0}}{\sup}V(x)\leq0.\vspace*{-0.12in}%
\label{Sup L than Zero}%
\end{align}

\end{theorem}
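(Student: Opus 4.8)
The plan is to argue by contradiction, using the Lyapunov invariant $V$ as a scalar certificate that blocks any trajectory from entering $X_-$. Suppose there is a trajectory $x(\cdot)$ of $\mathcal{S}$ and a time $t$ with $x(t)\in X_-$. Since $X_-\cap X_0=\varnothing$, we have $t\geq 1$ (and in the case $h=f$ we can equivalently say $x(t-1)\in f^{-1}(X_-)$, while in the case $h=I$ we have $x(t)\in I^{-1}(X_-)=X_-$). In either case the trajectory visits the set $h^{-1}(X_-)$ at some time $s\in\mathbb{Z}_+$; call the state there $\hat x = x(s)$, so that $V(\hat x)\geq \inf_{x\in h^{-1}(X_-)} V$. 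The contradiction will come from showing that $V$ along the trajectory can never exceed $\sup_{x\in X_0} V$, which by hypothesis \eqref{Inf G than Sup} is strictly less than $\inf_{x\in h^{-1}(X_-)} V$. A subtlety: one must first check the trajectory does not hit $X_\infty$ before reaching $h^{-1}(X_-)$ — but if it did, the difference inequality \eqref{Softa2} still applies up to that point and, since $f(x)\subseteq X_\infty$ on $X_\infty$, once in $X_\infty$ the state stays there, so $X_-$ (disjoint from $X_0$, and in the relevant reachable portion) would not be entered afterward in a way that contradicts the setup; I would spell this out but it is routine.

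The heart of the argument is a case analysis on $\theta$ establishing that $V(x(k))\le \sup_{x\in X_0}V(x)$ for all $k$, or, in the $\theta\ne 1$ cases, the sharper sign-based bounds. For $\theta=1$: inequality \eqref{Softa2} gives $V(x(k+1))\le V(x(k))-\mu\le V(x(k))$ for all $k$ before termination, so $V$ is nonincreasing along the trajectory and $V(x(k))\le V(x(0))\le \sup_{X_0}V$ for all $k$; applying this at $k=s$ yields $\inf_{h^{-1}(X_-)}V \le V(\hat x)\le \sup_{X_0}V$, contradicting \eqref{Inf G than Sup}. For case (I), $\theta<1$ with $\inf_{h^{-1}(X_-)}V>0$: here I use the observation highlighted before the theorem statement, namely that the zero sublevel set is invariant for any $\theta\ge 0$. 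Concretely, from \eqref{Softa2} and $\mu\ge 0$, $V(x(k+1))\le \theta V(x(k))$; hence if $V(x(k))\le 0$ then $V(x(k+1))\le \theta V(x(k))\le 0$. There are two subcases. If $\sup_{X_0}V\le 0$, then $V(x(0))\le 0$, so inductively $V(x(k))\le 0<\inf_{h^{-1}(X_-)}V$ for all $k$, contradicting the trajectory reaching $h^{-1}(X_-)$. If $\sup_{X_0}V>0$, then from \eqref{Inf G than Sup} we still have, for $V(x(k))>0$, $V(x(k+1))\le \theta V(x(k))<V(x(k))$, and when $V(x(k))\le 0$ it stays $\le 0$; either way $V(x(k))\le \max(V(x(0)),0)\le \sup_{X_0}V$ for all $k$ (using $\theta<1$ to prevent growth on the positive side and $\theta\ge0$ to keep the nonpositive side nonpositive), again contradicting \eqref{Inf G than Sup} at $k=s$. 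For case (II), $\theta>1$ with $\sup_{X_0}V\le 0$: then $V(x(0))\le 0$, and since \eqref{Softa2} gives $V(x(k+1))\le \theta V(x(k))-\mu\le \theta V(x(k))$, and $\theta\ge 0$, the nonpositivity propagates: $V(x(k))\le 0$ for all $k$. But $\sup_{X_0}V<\inf_{h^{-1}(X_-)}V$ then forces $\inf_{h^{-1}(X_-)}V>0$ (or at least $>\sup_{X_0}V\ge$ something; more directly $V(\hat x)>0$ is impossible since $V(\hat x)\le 0$) — wait, we need $V(\hat x)\ge\inf_{h^{-1}(X_-)}V>\sup_{X_0}V$, yet $V(\hat x)\le 0$ and we only know $\sup_{X_0}V\le 0$; so I would instead observe that $\inf_{h^{-1}(X_-)}V>\sup_{X_0}V\ge V(x(k))$ for all $k$, in particular $>V(\hat x)$, directly contradicting $V(\hat x)\ge\inf_{h^{-1}(X_-)}V$.

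So in all four situations ($\theta=1$, or $\theta<1$ with positive infimum, or $\theta>1$ with nonpositive supremum on $X_0$) the single clean fact I need is: \emph{$V$ evaluated along any trajectory stays at most $\sup_{X_0}V$ up until termination} — and this follows from \eqref{Softa2} together with the relevant sign/magnitude restriction on $\theta$. I would present the four cases compactly, perhaps unifying (I) and (II) by noting both guarantee $V(x(k))\le\sup_{X_0}V$ for all $k$, the only difference being which mechanism (contraction toward $0$, versus nonpositivity propagation) prevents escape. The main obstacle — more bookkeeping than depth — is handling the interaction with $X_\infty$ (a trajectory could reach a terminal state, after which \eqref{Softa2} is no longer asserted) and the definition of $h^{-1}(X_-)$ for the two choices $h\in\{f,I\}$: I need to make sure that "trajectory reaches $X_-$" genuinely implies "trajectory reaches $h^{-1}(X_-)$ at a non-terminal state," which is where the hypothesis $X_-\cap X_0=\varnothing$ and the structure $f(x)\subseteq X_\infty$ on $X_\infty$ get used. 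Everything else is a two-line induction per case.
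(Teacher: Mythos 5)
Your proposal is correct and follows essentially the same route as the paper's proof: argue by contradiction from a trajectory that must visit $h^{-1}\left(  X_{-}\right)  $, then bound $V$ along the trajectory via a case analysis on $\theta$ — monotonicity for $\theta=1$ and for $\theta>1$ under (\ref{Sup L than Zero}), and for $\theta<1$ the two mechanisms of contraction on the positive side and invariance of the nonpositive sublevel set, matched against (\ref{Inf G than Sup}) and (\ref{Inf G than Zero}) respectively. The only cosmetic difference is that you split case (I) by the sign of $\sup_{x\in X_{0}}V(x)$ where the paper partitions $X_{0}$ by the sign of $V$ at the initial state; both reduce to the same two subarguments.
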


\medskip

\begin{proof}
The proof\ is presented in Appendix II.
\end{proof}

The following corollary\ is based on Theorem \ref{BddNess} and Proposition
\ref{FTT2}\ and presents computationally implementable criteria for
simultaneously establishing FTT and absence of overflow.

\begin{corollary}
\label{Bddness and FTT}\textbf{Overflow and FTT Analysis }Consider a program
$\mathcal{P},$ and its dynamical system model $\mathcal{S}(X,f,X_{0}%
,X_{\infty})$. Let $\alpha>0$ be a diagonal matrix specifying the overflow
limit$,$ and let $X_{-}=\{x\in X~|~\left\Vert \alpha^{-1}x\right\Vert
_{\infty}>1\}.$ Let $q\in\mathbb{N}\cup\left\{  \infty\right\}  ,$
$h\in\left\{  f,I\right\}  ,$ and let the function $V:X\mapsto\mathbb{R}$ be a
$\left(  \theta,\mu\right)  $-Lyapunov invariant for $\mathcal{S}$
satisfying\vspace*{-0.2in}%
\begin{align}
V\left(  x\right)   &  \leq0\text{\qquad\qquad\hspace{0.55in}\hspace*{0.03in}%
}\forall x\in X_{0}.\label{One1}\\[-0.1in]
V\left(  x\right)   &  \geq\sup\left\{  \left\Vert \alpha^{-1}h\left(
x\right)  \right\Vert _{q}-1\right\}  \text{\hspace{0.2in}}\forall x\in
X.\vspace*{-0.3in}\label{Three3}%
\end{align}%
\[
\vspace*{-0.7in}%
\]
\vspace*{-0.5in} \newline\noindent Then, an \textit{overflow runtime error}
will not occur during any execution of $\mathcal{P}.$ In addition, if $\mu>0$
and $\mu+\theta>1,$ then, $\mathcal{P}$ terminates in at most $T_{u}$
iterations where $T_{u}=\mu^{-1}$ if $\theta=1,$ and for $\theta\neq1$ we
have:\vspace*{-0.1in}%
\begin{equation}
T_{u}=\frac{\log~\left(  \mu+\left(  \theta-1\right)  \left\Vert V\right\Vert
_{\infty}\right)  -\log\mu}{\log\theta}\leq\frac{\log\left(  \mu
+\theta-1\right)  -\log\mu}{\log\theta}\vspace*{-0.1in}\label{upperboundonTU}%
\end{equation}
where $\left\Vert V\right\Vert _{\infty}=\sup\limits_{x\in X\backslash\left\{
X_{-}\cup X_{\infty}\right\}  }~\left\vert V\left(  x\right)  \right\vert .$%
\[
\vspace*{-0.3in}%
\]

\end{corollary}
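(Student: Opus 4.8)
The plan is to deduce the corollary directly from Theorem \ref{BddNess} and Proposition \ref{FTT2}, since all the real work has already been done there; the only task is to verify that the hypotheses of those two results are implied by (\ref{One1})--(\ref{Three3}) together with the stated conditions on $\mu$ and $\theta$. First I would establish absence of overflow. Observe that on $X_{-}$ we have $\|\alpha^{-1}x\|_{\infty}>1$, hence $\sup\{\|\alpha^{-1}h(x)\|_q-1\}\ge \|\alpha^{-1}x\|_q - 1 \ge \|\alpha^{-1}x\|_\infty - 1 > 0$ for $h = I$ (using $q\ge\infty$ normalization, i.e. $\|\cdot\|_q \ge \|\cdot\|_\infty$ for $q\le\infty$ on finite-dimensional vectors — I would state the monotonicity of $\ell_q$ norms explicitly), and similarly for $h=f$ one gets the same bound on $h^{-1}(X_-)$. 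Therefore (\ref{Three3}) forces $\inf_{x\in h^{-1}(X_-)} V(x) > 0$, while (\ref{One1}) gives $\sup_{x\in X_0} V(x)\le 0$. This simultaneously yields the strict separation (\ref{Inf G than Sup}) and shows that whichever branch of Theorem \ref{BddNess} is relevant ($\theta=1$, or $\theta<1$ with (\ref{Inf G than Zero}), or $\theta>1$ with (\ref{Sup L than Zero})) is satisfied: condition (\ref{Inf G than Zero}) is exactly $\inf_{h^{-1}(X_-)}V>0$, and condition (\ref{Sup L than Zero}) is exactly $\sup_{X_0}V\le0$, both of which we have just verified regardless of the value of $\theta$. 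Hence $X_-$ is unreachable, i.e. no overflow occurs. Also note that unreachability of $X_-$ means the trajectories stay in $X\setminus\{X_-\cup X_\infty\}$ until termination, which is the set over which $\|V\|_\infty$ is defined in the statement.

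Next I would treat finite-time termination under the extra hypotheses $\mu>0$ and $\mu+\theta>1$. The goal is to check the hypotheses of Proposition \ref{FTT2}: uniform boundedness of $V$ on $X\setminus X_\infty$, the Lyapunov inequality (\ref{Softa2}) (given), and (\ref{Softa2a1})--(\ref{Softa2a3}). Uniform boundedness: since $X_-$ is unreachable, the relevant trajectories live in $X\setminus\{X_-\cup X_\infty\}$; on that set $\|\alpha^{-1}x\|_\infty\le1$, so $x$ ranges over a bounded set, and from (\ref{Three3}) $V(x)\ge -1$ while (\ref{One1})-type reasoning plus (\ref{Softa2}) propagates an upper bound — more directly, I would simply restrict attention to the effective state space $X\setminus\{X_-\cup X_\infty\}$ on which $V$ is bounded by construction of $\|V\|_\infty$ in the statement, and apply Proposition \ref{FTT2} to the program with this restricted (still invariant) state space. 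Condition (\ref{Softa2a1}) holds with $\eta=0$ by (\ref{One1}); condition (\ref{Softa2a3}) holds because $\mu>0$; condition (\ref{Softa2a2}), namely $\mu+(\theta-1)\|V\|_\infty>0$: if $\theta\ge1$ this is immediate from $\mu>0$, and if $\theta<1$ then $\mu+\theta>1$ combined with $\|V\|_\infty \le$ (some bound $\le 1$, since $|V(x)|\le 1$ is forced on the effective state space by (\ref{Three3}) and (\ref{One1})) gives $\mu+(\theta-1)\|V\|_\infty \ge \mu + (\theta-1)\cdot 1 = \mu+\theta-1 > 0$. This last point is where I expect to need to be careful: I must argue that $\|V\|_\infty\le1$ on the effective domain, which follows because on $X\setminus\{X_-\cup X_\infty\}$ we have $\|\alpha^{-1}x\|_\infty\le1$ hence $\sup\{\|\alpha^{-1}h(x)\|_q-1\}\le 0$ when $q=\infty$; for general $q<\infty$ one needs a dimension-dependent constant, so most cleanly one takes $q=\infty$ or absorbs the constant — I would flag that the bound $\|V\|_\infty\le 1$ and therefore the clean inequality in (\ref{upperboundonTU}) is stated for the normalization in which $\|\alpha^{-1}x\|_q\le 1$ on the non-overflow region.

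Finally, the iteration bound: Proposition \ref{FTT2} directly supplies $T_u=\|V\|_\infty/\mu$ when $\theta=1$, which since $\|V\|_\infty\le1$ gives $T_u\le\mu^{-1}$, and for $\theta\ne1$ it supplies $T_u=\big(\log(\mu+(\theta-1)\|V\|_\infty)-\log\mu\big)/\log\theta$; substituting $\|V\|_\infty\le1$ and using monotonicity of $t\mapsto\big(\log(\mu+(\theta-1)t)-\log\mu\big)/\log\theta$ (increasing in $t$ whether $\theta>1$ or $\theta<1$, because the sign flips in numerator and denominator cancel) yields the claimed upper bound $\big(\log(\mu+\theta-1)-\log\mu\big)/\log\theta$. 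The main obstacle, as noted, is bookkeeping around the $\ell_q$ normalization and justifying $\|V\|_\infty\le1$ on the effective state space so that (\ref{Softa2a2}) and the simplified form of $T_u$ both go through; everything else is a routine invocation of the two earlier results. $\blacksquare$
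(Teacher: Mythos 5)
Your proposal is correct and follows essentially the same route as the paper: Theorem \ref{BddNess} applied via the chain $\inf_{x\in h^{-1}(X_-)}V(x)>0\ge\sup_{x\in X_0}V(x)$ (using the norm monotonicity $\|\cdot\|_q\ge\|\cdot\|_\infty$), followed by Proposition \ref{FTT2} together with the bound $\|V\|_\infty\le 1$ on the effective state space. If anything you are more careful than the paper's own proof, which simply asserts $\sup_{x\in X\setminus\{X_-\cup X_\infty\}}|V(x)|\le 1$ without the normalization/trajectory-wise caveats you flag.
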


\begin{proof}
The proof\ is presented in Appendix II.
\end{proof}

Application of Corollary \ref{Bddness and FTT} with $h=f$ typically leads to
much less conservative results compared with $h=I$, though the computational
costs are also higher. See \cite{RMF2010} for remarks on variations of
Corollary \ref{Bddness and FTT} to trade off conservativeness and
computational complexity.

\paragraph{General Unreachability and FTT Analysis over Graph Models}

The results presented so far in this section (Theorem \ref{BddNess}, Corollary
\ref{Bddness and FTT}, and Proposition \ref{FTT2}) are readily applicable to
MILMs. These results will be applied in Section \ref{Chapter:Computation} to
formulate the verification problem as a convex optimization problem. Herein,
we present an adaptation of these results to analysis of graph models.

\begin{definition}
A cycle $\mathcal{C}_{m}$ on a graph $G\left(  \mathcal{N},\mathcal{E}\right)
$ is an ordered list of $m$ triplets $\left(  n_{1},n_{2},k_{1}\right)  ,$
$\left(  n_{2},n_{3},k_{2}\right)  ,...,$\ $\left(  n_{m},n_{m+1}%
,k_{m}\right)  ,$ where $n_{m+1}=n_{1},$ and $\left(  n_{j},n_{j+1}%
,k_{j}\right)  \in\mathcal{E},$ $\forall j\in\mathbb{Z}\left(  1,m\right)  .$
A simple cycle is a cycle with no strict sub-cycles.
\end{definition}

\begin{corollary}
\label{SafetyGraphCor1}\textbf{Unreachability and FTT\ Analysis of Graph
Models}. Consider a program $\mathcal{P}$ and its graph model $G\left(
\mathcal{N},\mathcal{E}\right)  .$ Let $V\left(  i,x\right)  =\sigma
_{i}\left(  x\right)  $ be a Lyapunov invariant for $G\left(  \mathcal{N}%
,\mathcal{E}\right)  ,$ satisfying (\ref{arcwiselyap}) and$\vspace*{-0.2in}$%
\begin{equation}
\sigma_{\emptyset}\left(  x\right)  \leq0,\text{\qquad}\forall x\in
X_{\emptyset}\vspace*{-0.15in}\label{SGC1}%
\end{equation}
and either of the following two conditions:$\vspace*{-0.2in}$%
\begin{align}
(\text{\textrm{I}})  & :\sigma_{i}\left(  x\right)  >0,\text{\qquad}\forall
x\in X_{i}\cap X_{i-},\text{ }i\in\mathcal{N}\backslash\left\{  \emptyset
\right\} \label{SGC2}\\
(\text{\textrm{II}})  & :\sigma_{i}\left(  x\right)  >0,\text{\qquad}\forall
x\in X_{j}\cap T^{-1}\left(  X_{i-}\right)  ,\text{ }i\in\mathcal{N}%
\backslash\left\{  \emptyset\right\}  ,\text{ }j\in\mathcal{I}\left(
i\right)  ,\text{ }T\in\left\{  \overline{T}_{ij}^{k}\right\} \label{SGC3}%
\end{align}
where\vspace*{-0.15in}%
\[
T^{-1}\left(  X_{i-}\right)  =\left\{  x\in X_{i}|T\left(  x\right)  \cap
X_{i-}\neq\varnothing\right\}
\]
\vspace*{-0.55in} \newline\noindent Then, $\mathcal{P}$ satisfies the
unreachability property w.r.t. the collection of sets $X_{i-},$ $i\in
\mathcal{N}\backslash\left\{  \emptyset\right\}  .$ In addition, if for every
simple cycle $\mathcal{C}\in G,$ we have:$\vspace*{-0.15in}$%
\begin{equation}
\left(  \theta\left(  \mathcal{C}\right)  -1\right)  \left\Vert \sigma\left(
\mathcal{C}\right)  \right\Vert _{\infty}+\mu\left(  \mathcal{C}\right)
>0,\text{ and }\mu\left(  \mathcal{C}\right)  >0,\text{ and }\left\Vert
\sigma\left(  \mathcal{C}\right)  \right\Vert _{\infty}<\infty,\vspace
*{-0.3in}\label{MultiplicativeTheta}%
\end{equation}
where$\vspace*{-0.15in}$
\begin{equation}
\theta\left(  \mathcal{C}\right)  =\underset{\left(  i,j,k\right)
\in\mathcal{C}}{%
{\textstyle\prod}
}\theta_{ij}^{k},\text{\qquad}\mu\left(  \mathcal{C}\right)  =\max_{\left(
i,j,k\right)  \in\mathcal{C}}\mu_{ij}^{k},\text{\qquad}\left\Vert
\sigma\left(  \mathcal{C}\right)  \right\Vert _{\infty}=\max\limits_{\left(
i,.,.\right)  \in\mathcal{C}}~\sup\limits_{x\in X_{i}\backslash X_{i-}%
}\left\vert \sigma_{i}\left(  x\right)  \right\vert \vspace*{-0.1in}%
\label{MultiplicativeThetanDefs}%
\end{equation}
then $\mathcal{P}$ terminates in at most $T_{u}$ iterations where\vspace
*{-0.05in}%
\[
T_{u}=\sum_{\mathcal{C}\in G:\theta\left(  \mathcal{C}\right)  \neq1}%
\frac{\log~\left(  \left(  \theta\left(  \mathcal{C}\right)  -1\right)
\left\Vert \sigma\left(  \mathcal{C}\right)  \right\Vert _{\infty}+\mu\left(
\mathcal{C}\right)  \right)  -\log~\mu\left(  \mathcal{C}\right)  }{\log
\theta\left(  \mathcal{C}\right)  }+\sum_{\mathcal{C}\in G:\theta\left(
\mathcal{C}\right)  =1}\frac{\left\Vert \sigma\left(  \mathcal{C}\right)
\right\Vert _{\infty}}{\mu\left(  \mathcal{C}\right)  }.
\]

\end{corollary}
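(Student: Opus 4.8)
The plan is to obtain both conclusions by lifting Theorem \ref{BddNess} and Proposition \ref{FTT2}, which hold for any generic model $\mathcal{S}(X,f,X_0,X_\infty)$, to the graph model $G(\mathcal{N},\mathcal{E})$ --- recall from Section \ref{graph models: section} that a graph model \emph{is} such a model, with $X=\mathcal{N}\times\Omega^{n}$, $X_0=\{\emptyset\}\times X_\emptyset$, $X_\infty=\{\Join\}\times X_\Join$, and $f$ given by (\ref{GraphUpdate}). For the unreachability claim I would take $X_-=\bigcup_{i\in\mathcal{N}\setminus\{\emptyset\}}\{i\}\times(X_i\cap X_{i-})$ and show that $V(i,x)=\sigma_i(x)$ stays nonpositive along every trajectory. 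This is the one place where the edge-dependent multipliers require attention: since $(\theta_{ij}^{k},\mu_{ij}^{k})\in[0,\infty)\times[0,\infty)$, inequality (\ref{arcwiselyap}) gives $\sigma_j(x_+)\le\theta_{ji}^{k}\sigma_i(x)-\mu_{ji}^{k}\le 0$ whenever $\sigma_i(x)\le 0$, so, starting from (\ref{SGC1}) --- which puts every initial state in the zero sub-level set of $V$ --- an induction on $t$ yields $V(\widetilde x(t))\le 0$ along any trajectory of (\ref{Softa1}); this is exactly the $r=0$ level-set invariance noted before Theorem \ref{BddNess}, now valid arc-by-arc for arbitrary nonnegative multipliers. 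Condition (I), resp.\ (II), asserts that $\sigma$ is \emph{strictly} positive on $X_i\cap X_{i-}$, resp.\ on the one-step pre-image $X_j\cap T^{-1}(X_{i-})$ under an admissible transition $T\in\{\overline T_{ij}^{k}\}$; in either case a trajectory state lying in $X_{i-}$ --- or from which the next transition would enter $X_{i-}$ --- is incompatible with $V\le 0$, so $X_{i-}$ is unreachable for every $i\in\mathcal{N}\setminus\{\emptyset\}$.

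For the iteration bound I would argue that no trajectory runs for more than $T_u$ steps without entering $X_\infty$. Since $\mathcal{N}$ is finite, any long walk $n_0,n_1,\dots$ on $G$ is a bounded acyclic transient followed by a concatenation of \emph{simple} cycles, so a sufficiently long trajectory must complete some fixed simple cycle $\mathcal{C}$ more than $T_u(\mathcal{C})$ times, where $T_u(\mathcal{C})$ is the $\mathcal{C}$-summand of $T_u$. To rule this out, choose the base node $n_1$ of $\mathcal{C}$ to be the head of an arc realizing $\mu(\mathcal{C})=\max_{(i,j,k)\in\mathcal{C}}\mu_{ij}^{k}$, so that arc is the last one of the loop; telescoping the $m$ arc inequalities (\ref{arcwiselyap}) around $\mathcal{C}$ then gives $\sigma_{n_1}(x')\le\theta(\mathcal{C})\,\sigma_{n_1}(x)-\tilde\mu$ with $\theta(\mathcal{C})$ as in (\ref{MultiplicativeThetanDefs}) and $\tilde\mu=\sum_{l=1}^{m}\mu_l\,\Theta_l$, where $\mu_l$ is the multiplier of the $l$-th arc and $\Theta_l$ the product of the $\theta$-multipliers of the arcs traversed after it (so $\Theta_m=1$); the dominant term thus carries weight $1$ and the rest are nonnegative, whence $\tilde\mu\ge\mu(\mathcal{C})$. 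Hence, in the macro-time that counts completed loops of $\mathcal{C}$, the restriction of $V$ to node $n_1$ is a $(\theta(\mathcal{C}),\mu(\mathcal{C}))$-Lyapunov invariant that is $\le 0$ at the first macro-instant and bounded in absolute value by $\|\sigma(\mathcal{C})\|_\infty<\infty$ on $X_{n_1}\setminus X_{n_1-}$ thereafter (both by the unreachability just established); assumptions (\ref{MultiplicativeTheta}) are precisely (\ref{Softa2a2})--(\ref{Softa2a3}) for this macro-system, so the bound (\ref{Bnd on No. Itrn.}) of Proposition \ref{FTT2} caps the completed loops of $\mathcal{C}$ at $T_u(\mathcal{C})$, namely $(\log((\theta(\mathcal{C})-1)\|\sigma(\mathcal{C})\|_\infty+\mu(\mathcal{C}))-\log\mu(\mathcal{C}))/\log\theta(\mathcal{C})$ when $\theta(\mathcal{C})\neq 1$ and $\|\sigma(\mathcal{C})\|_\infty/\mu(\mathcal{C})$ when $\theta(\mathcal{C})=1$ --- a contradiction. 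Summing these caps over the finitely many simple cycles of $G$ bounds the total number of iterations by $T_u$.

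The step I expect to be the main obstacle is the passage from ``the walk decomposes into simple cycles'' to a clean global count: distinct simple cycles share nodes, so one must verify that the macro-step inequality derived for each cycle is consistent with the single invariant $V$ used everywhere, that the transient and the interleaving order of cycles contribute only a bounded amount, and --- the delicate point --- that telescoping really produces a per-loop decrease at least $\mu(\mathcal{C})$, which is what forces the cyclic relabeling of the base node above and a careful treatment of the sign of $\theta(\mathcal{C})-1$ relative to the individual $\theta_{ij}^{k}$. Everything else is a bookkeeping re-run of the Appendix~II arguments for Theorem \ref{BddNess} and Proposition \ref{FTT2} with the discrete node label carried along.
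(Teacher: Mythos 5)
Your proposal is correct and follows essentially the same route as the paper's own (very terse) proof: unreachability via the zero-sublevel-set invariance underlying Theorem \ref{BddNess}, and termination by applying Proposition \ref{FTT2} cycle-by-cycle and summing the per-cycle iteration caps. You actually supply more than the paper does — in particular the induction showing $V\le 0$ under arbitrary nonnegative edge-dependent rates, and the telescoping with the base node chosen so that the arc achieving $\mu(\mathcal{C})$ comes last (which is genuinely needed for $\tilde\mu\ge\mu(\mathcal{C})$) — and the combinatorial bookkeeping you flag as the remaining delicate point is glossed over in the paper's proof as well.
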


\bigskip

\begin{proof}
The proof\ is presented in Appendix II.
\end{proof}

For verification against an overflow violation specified by a diagonal matrix
$\alpha>0,$ Corollary \ref{SafetyGraphCor1} is applied with $X_{-}%
=\{x\in\mathbb{R}^{n}~|~\left\Vert \alpha^{-1}x\right\Vert _{\infty}>1\}.$
Hence, (\ref{SGC2}) becomes $\sigma_{i}\left(  x\right)  \geq p\left(
x\right)  (\left\Vert \alpha^{-1}x\right\Vert _{q}-1),$\quad$\forall x\in
X_{i},$ $i\in\mathcal{N}\backslash\left\{  \emptyset\right\}  ,$ where
$p\left(  x\right)  >0$. User-specified assertions, as well as many other
standard safety specifications such as absence of division-by-zero can be
verified using Corollary \ref{SafetyGraphCor1} (See Table I).

\paragraph*{-- Identification of Dead Code}

Suppose that we wish to verify that a discrete location $i\in\mathcal{N}%
\backslash\left\{  \emptyset\right\}  $ in a graph model $G\left(
\mathcal{N},\mathcal{E}\right)  $ is unreachable. If a function satisfying the
criteria of Corollary \ref{SafetyGraphCor1} with $X_{i-}=\mathbb{R}^{n}$ can
be found, then location $i$ can never be reached. Condition (\ref{SGC2}) then
becomes $\sigma_{i}\left(  x\right)  \geq0$,$~\forall x\in\mathbb{R}^{n}.$
{\small \begin{table}[tbh]
\caption{Application of Corollary \ref{SafetyGraphCor1} to the verification of
various safety specifications.}%
\label{Table II}%
{\small \vspace*{-0.35in}  }
\par
\begin{center}
{\small $%
\begin{tabular}
[c]{clcl}
&  &  & \\
&  &  & \\\cline{4-4}
&  &  & \multicolumn{1}{|l|}{apply Corollary \ref{SafetyGraphCor1}
with:}\\\hline
\multicolumn{1}{|c}{At location $i$:} & \multicolumn{1}{|l}{assert $x\in
X_{a}$} & \multicolumn{1}{|c}{$%
\begin{array}
[c]{c}%
\vspace*{-0.12in}\\
\Rightarrow\\
\vspace*{-0.12in}%
\end{array}
$} & \multicolumn{1}{|l|}{$X_{i-}:=\left\{  x\in\mathbb{R}^{n}~|~x\in
\mathbb{R}^{n}\backslash X_{a}\right\}  $}\\\hline
\multicolumn{1}{|c}{At location $i$:} & \multicolumn{1}{|l}{assert $x\notin
X_{a}$} & \multicolumn{1}{|c}{$%
\begin{array}
[c]{c}%
\vspace*{-0.12in}\\
\Rightarrow\\
\vspace*{-0.12in}%
\end{array}
$} & \multicolumn{1}{|l|}{$X_{i-}:=\left\{  x\in\mathbb{R}^{n}~|~x\in
X_{a}\right\}  $}\\\hline
\multicolumn{1}{|c}{At location $i$:} & \multicolumn{1}{|l}{(expr.)/$x_{o}$} &
\multicolumn{1}{|c}{$%
\begin{array}
[c]{c}%
\vspace*{-0.12in}\\
\Rightarrow\\
\vspace*{-0.12in}%
\end{array}
$} & \multicolumn{1}{|l|}{$X_{i-}:=\left\{  x\in\mathbb{R}^{n}~|~x_{o}%
=0\right\}  $}\\\hline
\multicolumn{1}{|c}{At location $i$:} & \multicolumn{1}{|l}{$\sqrt[2k]{x_{o}}
$} & \multicolumn{1}{|c}{$%
\begin{array}
[c]{c}%
\vspace*{-0.12in}\\
\Rightarrow\\
\vspace*{-0.12in}%
\end{array}
$} & \multicolumn{1}{|l|}{$X_{i-}:=\left\{  x\in\mathbb{R}^{n}~|~x_{o}%
<0\right\}  $}\\\hline
\multicolumn{1}{|c}{At location $i$:} & \multicolumn{1}{|l}{$\log\left(
x_{o}\right)  $} & \multicolumn{1}{|c}{$%
\begin{array}
[c]{c}%
\vspace*{-0.12in}\\
\Rightarrow\\
\vspace*{-0.12in}%
\end{array}
$} & \multicolumn{1}{|l|}{$X_{i-}:=\left\{  x\in\mathbb{R}^{n}~|~x_{o}%
\leq0\right\}  $}\\\hline
\multicolumn{1}{|c}{At location $i$:} & \multicolumn{1}{|l}{dead code} &
\multicolumn{1}{|c}{$%
\begin{array}
[c]{c}%
\vspace*{-0.12in}\\
\Rightarrow\\
\vspace*{-0.12in}%
\end{array}
$} & \multicolumn{1}{|l|}{$X_{i-}:=R^{n}$}\\\hline
\end{tabular}
$  }
\end{center}
\end{table}}

\vspace*{-0.2in}

\begin{example}
Consider the following program\vspace*{-0.05in} {\small
\begin{gather*}%
\begin{array}
[b]{ccc}%
\begin{tabular}
[c]{|l|}\hline
$%
\begin{array}
[c]{ll}
& \mathrm{void~ComputeTurnRate~(void)}\\
& \vspace*{-0.45in}\\
\mathrm{L}0: & \mathrm{\{double~x=\{0\};~double~y=\{\ast PtrToY\};}\\
& \vspace*{-0.45in}\\
\mathrm{L}1: & \mathrm{while~(1)}\\
& \vspace*{-0.45in}\\
\mathrm{L}2: & \mathrm{\{}\hspace{0.18in}~\mathrm{y=\ast PtrToY;}\\
& \vspace*{-0.45in}\\
\mathrm{L}3: & \qquad\mathrm{x=(5\ast sin(y)+1)/3;}\\
& \vspace*{-0.45in}\\
\mathrm{L}4: & \qquad\mathrm{if~x>-1~\{}\\
& \vspace*{-0.45in}\\
\mathrm{L}5: & \qquad\qquad~\mathrm{x=x+1.0472;}\\
& \vspace*{-0.45in}\\
\mathrm{L}6: & \qquad\qquad~\mathrm{TurnRate=y/x;\}}\\
& \vspace*{-0.45in}\\
\mathrm{L}7: & \qquad\mathrm{else~\{}\\
& \vspace*{-0.45in}\\
\mathrm{L}8: & \qquad\qquad~\mathrm{TurnRate=100\ast y/3.1416~\}\}}%
\end{array}
$\\\hline
\end{tabular}
&  &
\raisebox{-1.1341in}{\includegraphics[
height=2.21in,
width=1.92in
]%
{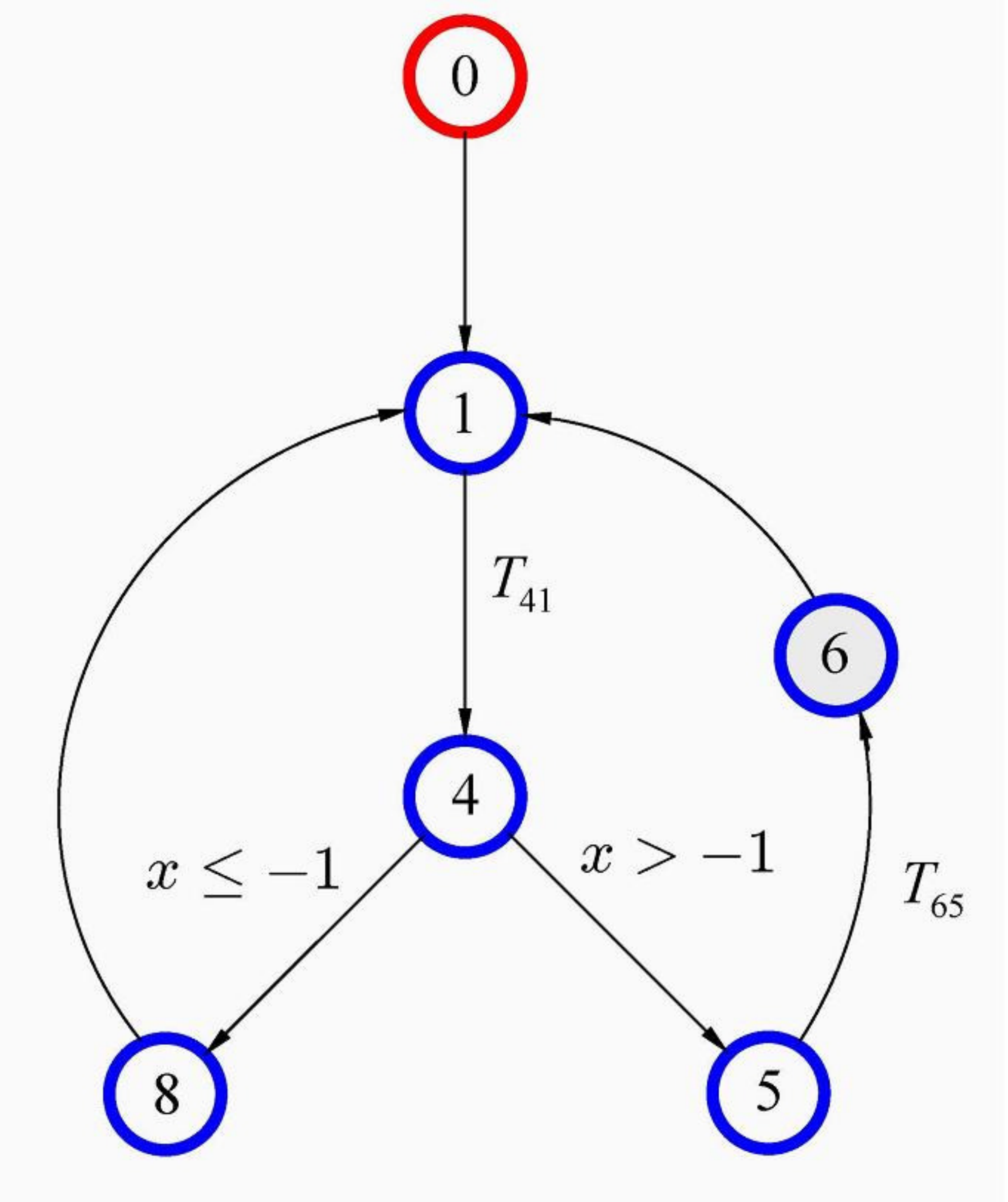}%
}%
\end{array}
\\
\text{Program 3.\qquad\hspace{0.5in}\qquad\qquad Graph of an abstraction of
Program 3\hspace*{-1.52in}}%
\end{gather*}
}
Note that $\mathrm{x}$ can be zero right after the assignment $\mathrm{x}%
=(5\sin(\mathrm{y})+1)/3.$ However, at location \textrm{L}$6$, $\mathrm{x}$
cannot be zero and division-by-zero will not occur. The graph model of an
abstraction of Program 3 is shown next to the program and is defined by the
following elements: $T_{65}:\mathrm{x}\mapsto\mathrm{x}+1.0472,$ and
$T_{41}:\mathrm{x}\mapsto\left[  -4/3,2\right]  .$ The rest of the transition
labels are identity. The only non-universal passport labels are $\Pi_{54}$ and
$\Pi_{84}$ as shown in the figure. Define\vspace*{-0.2in}%
\begin{align*}
\sigma_{6}\left(  \mathrm{x}\right)   & =-\mathrm{x}^{2}-100\mathrm{x}%
+1,\text{ }\sigma_{5}\left(  \mathrm{x}\right)  =-(\mathrm{x}+1309/1250)^{2}%
-100\mathrm{x}-2543/25\\[-0.1in]
\sigma_{0}\left(  \mathrm{x}\right)   & =\sigma_{1}\left(  \mathrm{x}\right)
=\sigma_{4}\left(  \mathrm{x}\right)  =\sigma_{8}\left(  \mathrm{x}\right)
=-\mathrm{x}^{2}+2\mathrm{x}-3.\vspace*{-0.1in}%
\end{align*}
\vspace*{-0.55in} \newline\noindent It can be verified that $V\left(
\mathrm{x}\right)  =\sigma_{i}\left(  \mathrm{x}\right)  $ is a $\left(
\theta,1\right)  $-Lyapunov invariant for Program 3 with variable rates:
$\theta_{65}=1,$ and $\theta_{ij}=0$ $\forall\left(  i,j\right)  \neq\left(
6,5\right)  $. Since\vspace*{-0.2in}%
\[
-2=\sup_{\mathrm{x}\in X_{0}}\sigma_{0}\left(  \mathrm{x}\right)
<\inf_{\mathrm{x}\in X_{-}}\sigma_{6}\left(  \mathrm{x}\right)  =1\vspace
*{-0.1in}%
\]
the state $\left(  6,\mathrm{x}=0\right)  $ cannot be reached. Hence, a
division by zero will never occur. We will show in the next section how to
find such functions in general.$\vspace*{-0.2in}$
\end{example}

\section{Computation of Lyapunov Invariants \label{Chapter:Computation}%
$\vspace*{-0.05in}$}

It is well known that the main difficulty in using Lyapunov functions in
system analysis is finding them. Naturally, using Lyapunov invariants in
software analysis inherits the same difficulties. However, the recent advances
in hardware and software technology, e.g., semi-definite programming
\cite{GahinetLMILAB}, \cite{Strum1999}, and linear programming software
\cite{ILOG} present an opportunity for new approaches to software verification
based on numerical optimization.\vspace*{-0.25in}

\subsection{Preliminaries\label{Section:Preliminaries}\vspace*{-0.1in}}

\subsubsection{Convex Parameterization of Lyapunov Invariants}

The chances of finding a Lyapunov invariant are increased when (\ref{Softa2})
is only required on a subset of $X\backslash X_{\infty}$. For instance, for
$\theta\leq1,$ it is tempting to replace (\ref{Softa2}) with\vspace*{-0.2in}%
\begin{equation}
V\left(  x_{+}\right)  -\theta V\left(  x\right)  \leq-\mu,\text{ }\forall
x\in X\backslash X_{\infty}:V\left(  x\right)  <1,~x_{+}\in f\left(  x\right)
\vspace*{-0.15in}\label{Softa4}%
\end{equation}
In this formulation $V$ is not required to satisfy (\ref{Softa2}) for those
states which cannot be reached from $X_{0}.$ However, the set of all functions
$V:X\mapsto\mathbb{R}$ satisfying (\ref{Softa4}) is not convex and finding a
solution for (\ref{Softa4}) is typically much harder than (\ref{Softa2}). Such
non-convex formulations are not considered in this paper.

The first step in the search for a function $V:X\mapsto\mathbb{R}$ satisfying
(\ref{Softa2}) is selecting a finite-dimensional linear parameterization of a
candidate function $V$:\vspace*{-0.25in}%
\begin{equation}
V\left(  x\right)  =V_{\tau}\left(  x\right)  =\sum_{k=1}^{n}\tau_{k}%
V_{k}\left(  x\right)  ,\qquad\tau=\left(  \tau_{k}\right)  _{k=1}^{n},\text{
}\tau_{k}\in\mathbb{R},\vspace*{-0.15in}\label{Softa7}%
\end{equation}
where $V_{k}:X\mapsto\mathbb{R}$ are fixed basis functions. Next, for every
$\tau=(\tau_{k})_{k=1}^{N}$ let\vspace*{-0.15in}
\[
\phi(\tau)=\max_{x\in X\backslash X_{\infty},~x_{+}\in f(x)}V_{\tau}%
(x_{+})-\theta V_{\tau}(x),\vspace*{-0.1in}%
\]
(assuming for simplicity that the maximum does exist). Since $\phi\left(
\cdot\right)  $ is a maximum of a family of linear functions, $\phi\left(
\cdot\right)  $ is a convex function. If minimizing $\phi\left(  \cdot\right)
$ over the unit disk yields a negative minimum, the optimal $\tau^{\ast}$
defines a valid Lyapunov invariant $V_{\tau^{\ast}}(x)$. Otherwise, no linear
combination (\ref{Softa7}) yields a valid solution for (\ref{Softa2}).

The success\vspace*{-0.01in} and efficiency of the proposed approach depend on
computability of $\phi\left(  \cdot\right)  $ and its subgradients.\vspace
*{-0.01in} While $\phi\left(  \cdot\right)  $ is convex, the same does not
necessarily hold for $V_{\tau}(x_{+})-\theta V_{\tau}(x)$. In\vspace*{-0.01in}
fact, if $X\backslash X_{\infty}$ is non-convex, which is often the case even
for very simple\vspace*{-0.01in} programs, computation of $\phi\left(
\cdot\right)  $ becomes a non-convex optimization problem\vspace*{-0.01in}
even if $V_{\tau}(x_{+})-V_{\tau}(x)$ is a nice (e.g. linear or concave and
smooth) function\vspace*{-0.01in} of $x.$ To get around this hurdle, we
propose using convex relaxation techniques which essentially lead to
computation of a convex upper bound for $\phi\left(  \tau\right)  $%
.\vspace*{-0.01in}

\subsubsection{Convex Relaxation Techniques\vspace*{-0.01in}}

Such techniques constitute\vspace*{-0.01in} a broad class of techniques for
constructing finite-dimensional, convex approximations\vspace*{-0.01in} for
difficult non-convex optimization problems. Some of the results most relevant
to the\vspace*{-0.01in} software verification framework presented in this
paper can be found in \cite{Lovasz1991}\ for SDP\vspace*{-0.01in} relaxation
of binary integer programs, \cite{Meg01} and \cite{Nesterov 2000} for SDP
relaxation of quadratic programs, \cite{Yakubovic}\ for $\mathcal{S}%
$-Procedure in robustness analysis,\vspace*{-0.01in} and \cite{ParriloThesis}%
,\cite{Parrilo2001} for sum-of-squares relaxation in polynomial non-negativity
verification. \vspace*{-0.01in}We provide a brief overview of the latter two techniques.

\paragraph{The $\mathcal{S}$-Procedure \label{Section:S-Procedure}}

The $\mathcal{S}$-Procedure is commonly used for construction of Lyapunov
functions for nonlinear dynamical systems. Let functions $\phi_{i}%
:X\mapsto\mathbb{R},$ $i\in\mathbb{Z}\left(  0,m\right)  ,$ and $\psi
_{j}:X\mapsto\mathbb{R},$ $j\in\mathbb{Z}\left(  1,n\right)  $ be given, and
suppose that we are concerned with evaluating the following assertions:\vspace
*{-0.2in}%
\begin{gather}
\text{(I)}\text{: }\phi_{0}\left(  x\right)  >0,\text{ }\forall x\in\left\{
x\in X~|~\phi_{i}\left(  x\right)  \geq0,\text{ }\psi_{j}\left(  x\right)
=0,\text{ }i\in\mathbb{Z}\left(  1,m\right)  ,\text{ }j\in\mathbb{Z}\left(
1,n\right)  \right\} \label{Needs-S-Procedure}\\
\text{(II)}\text{: }\exists\tau_{i}\in\mathbb{R}^{+},\text{ }\exists\mu_{j}%
\in\mathbb{R},\text{ such that }\phi_{0}\left(  x\right)  >\sum_{i=1}^{m}%
\tau_{i}\phi_{i}\left(  x\right)  +\sum_{j=1}^{n}\mu_{j}\psi_{j}\left(
x\right)  .\vspace*{-0.25in}\label{S-Procedure-sufficient}%
\end{gather}

\noindent The implication (II)\ $\rightarrow$ (I) is trivial. The process of
replacing assertion (I) by its \textit{relaxed} version (II) is called the
$\mathcal{S}$-Procedure. Note that condition (II) is convex in decision
variables $\tau_{i}$ and $\mu_{j}.$ The implication (I) $\rightarrow$ (II) is
generally not true and the $\mathcal{S}$-Procedure is called lossless for
special cases where (I) and (II) are equivalent. A well-known such case is
when $m=1,$ $n=0,$ and $\phi_{0},$ $\phi_{1}$ are quadratic functionals$.$ A
comprehensive discussion of the $\mathcal{S}$-Procedure as well as available
results on its losslessness can be found in \cite{Gusev06}. Other variations
of $\mathcal{S}$-Procedure with non-strict inequalities exist as well.

\paragraph{Sum-of-Squares (SOS) Relaxation \label{Section:SOS-Relaxation}}

The SOS\ relaxation technique can be interpreted as the generalized version of
the $\mathcal{S}$-Procedure and is concerned with verification of the
following assertion:\vspace*{-0.15in}%
\begin{equation}
f_{j}\left(  x\right)  \geq0,\text{ }\forall j\in J,\text{\qquad}g_{k}\left(
x\right)  \neq0,\text{ }\forall k\in K,\text{\qquad}h_{l}\left(  x\right)
=0,\text{ }\forall l\in L\Rightarrow-f_{0}\left(  x\right)  \geq
0,\vspace*{-0.15in}\label{SemiAlgFormulae}%
\end{equation}
where $f_{j},g_{k},h_{l}$ are polynomial functions. It is easy to see that the
problem is equivalent to verification of emptiness of a semialgebraic set, a
necessary and sufficient condition for which is given by the
Positivstellensatz Theorem \cite{Boshnack}. In practice, sufficient conditions
in the form of nonnegativity of polynomials are formulated. The non-negativity
conditions are in turn relaxed to SOS conditions. Let $\Sigma\left[
y_{1},\ldots,y_{m}\right]  $ denote the set of SOS polynomials in $m$
variables $y_{1},...,y_{m}$, i.e. the set of polynomials that can be
represented as $p=%
{\textstyle\sum\limits_{i=1}^{t}}
p_{i}^{2},$ $p_{i}\in\mathbb{P}_{m}, $ where $\mathbb{P}_{m}$ is the
polynomial ring of $m$ variables with real coefficients. Then, a sufficient
condition for (\ref{SemiAlgFormulae}) is that there exist SOS\ polynomials
$\tau_{0},\tau_{i},\tau_{ij}\in\Sigma\left[  x\right]  $ and polynomials
$\rho_{l},$ such that\vspace*{-0.2in}%
\[
\tau_{0}+\sum\nolimits_{i}\tau_{i}f_{i}+\sum\nolimits_{i,j}\tau_{ij}f_{i}%
f_{j}+\sum\nolimits_{l}\rho_{l}h_{l}+(\prod g_{k})^{2}=0\vspace*{-0.2in}%
\]
Matlab toolboxes SOSTOOLS \cite{Prajna}, or YALMIP \cite{Lofberg2004} automate
the process of converting an SOS problem to an SDP, which is subsequently
solved by available software packages such as LMILAB \cite{GahinetLMILAB}, or
SeDumi \cite{Strum1999}. Interested readers are referred to \cite{Parrilo2001,
Megretski2003, ParriloThesis, Prajna} for more details.\vspace*{-0.15in}

\subsection{Optimization of Lyapunov Invariants for Mixed-Integer Linear
Models\vspace*{-0.1in}}

Natural Lyapunov invariant candidates for MILMs are quadratic and affine functionals.

\subsubsection{Quadratic Invariants\label{Section:MILM-QI}}

The linear parameterization of the space of quadratic functionals mapping
$\mathbb{R}^{n}$ to $\mathbb{R}$ is given by:\vspace*{-0.1in}{\small
\begin{equation}
\mathcal{V}_{x}^{2}=\left\{  V:\mathbb{R}^{n}\mapsto\mathbb{R~}|~V(x)=\left[
\begin{array}
[c]{c}%
\vspace*{-0.48in}\\
x\vspace*{-0.12in}\\
1\\
\vspace*{-0.43in}%
\end{array}
\right]  ^{T}P\left[
\begin{array}
[c]{c}%
\vspace*{-0.48in}\\
x\vspace*{-0.12in}\\
1\\
\vspace*{-0.43in}%
\end{array}
\right]  \text{, }P\in\mathbb{S}^{n+1}\right\}  ,\vspace*{-0.15in}%
\label{MILP_Quad_fctn}%
\end{equation}
}
where $\mathbb{S}^{n}$ is the set of $n$-by-$n$ symmetric matrices. We have
the following lemma.\vspace{-0.05in}

\begin{lemma}
\label{MIPL_Invariance_Lemma}Consider a program $\mathcal{P}$ and its MILM
$\mathcal{S}\left(  F,H,X_{0},n,n_{w},n_{v}\right)  .$ The program admits a
quadratic $\left(  \theta,\mu\right)  $-Lyapunov invariant $V\in
\mathcal{V}_{x}^{2},$ if there exists a matrix $Y\in\mathbb{R}^{n_{e}\times
n_{H}},$ $n_{e}=n+n_{w}+n_{v}+1,$ a diagonal matrix $D_{v}\in\mathbb{D}%
^{n_{v}},$ a positive semidefinite diagonal matrix $D_{xw}\in\mathbb{D}%
_{+}^{n+n_{w}},$ and a symmetric matrix $P\in\mathbb{S}^{n+1},$ satisfying the
following LMIs:\vspace*{-0.2in} \label{MILP_Invariance_LMI}
\begin{align*}
L_{1}^{T}PL_{1}-\theta L_{2}^{T}PL_{2}  & \preceq\operatorname{He}\left(
YH\right)  +L_{3}^{T}D_{xw}L_{3}+L_{4}^{T}D_{v}L_{4}-\left(  \lambda
+\mu\right)  L_{5}^{T}L_{5}\\[-0.07in]
\lambda & =\operatorname{Trace}D_{xw}+\operatorname{Trace}D_{v}\vspace*{-1in}%
\end{align*}
%
\vspace{-0.6in}\newline where\vspace*{-0.1in} {\small
\[
L_{1}\hspace*{-0.01in}=\hspace*{-0.01in}\left[  \hspace*{-0.04in}%
\begin{array}
[c]{c}%
F\vspace*{-0.1in}\\
L_{5}%
\end{array}
\hspace*{-0.04in}\right]  ,\text{ }L_{2}\hspace*{-0.01in}=\hspace
*{-0.01in}\left[  \hspace*{-0.04in}%
\begin{array}
[c]{lr}%
I_{n} & 0_{n\times(n_{e}-n)}\vspace*{-0.1in}\\
0_{1\times(n_{e}-1)} & 1
\end{array}
\hspace*{-0.04in}\right]  ,\text{ }L_{3}\hspace*{-0.01in}=\hspace
*{-0.01in}\left[  \hspace*{-0.04in}%
\begin{array}
[c]{l}%
I_{n+n_{w}}\vspace*{-0.1in}\\
0_{\left(  n_{v}+1\right)  \times(n+n_{w})}%
\end{array}
\hspace*{-0.04in}\right]  ^{T},\text{ }L_{4}\hspace*{-0.01in}=\hspace
*{-0.01in}\left[  \hspace*{-0.04in}%
\begin{array}
[c]{l}%
\vspace*{-0.48in}\\
0_{(n+n_{w})\times n_{v}}\vspace*{-0.1in}\\
I_{n_{v}}\vspace*{-0.1in}\\
0_{1\times n_{v}}\\
\vspace*{-0.38in}%
\end{array}
\hspace*{-0.04in}\right]  ^{T},\text{ }L_{5}\hspace*{-0.01in}=\hspace
*{-0.01in}\left[  \hspace*{-0.04in}%
\begin{array}
[c]{c}%
0_{(n_{e}-1)\times1}\vspace*{-0.1in}\\
1
\end{array}
\hspace*{-0.04in}\right]  ^{T}\vspace{0.2in}%
\]
}
\end{lemma}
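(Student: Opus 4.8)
The goal is to exhibit a sufficient condition, in the form of linear matrix inequalities, under which the MILM $\mathcal{S}(F,H,X_0,n,n_w,n_v)$ admits a quadratic $(\theta,\mu)$-Lyapunov invariant $V\in\mathcal{V}_x^2$. By Proposition \ref{prop:MILMLyap}, it suffices to find $P\in\mathbb{S}^{n+1}$ such that $V(Fx_e)-\theta V(x)\le-\mu$ for all $(x,x_e)$ with $x\in[-1,1]^n$ and $x_e=(x,w,v,1)^T$ satisfying $H x_e=0$, $(w,v)\in[-1,1]^{n_w}\times\{-1,1\}^{n_v}$. The plan is to rewrite this implication as a polynomial (in fact quadratic) nonnegativity assertion over a semialgebraic set and then apply the $\mathcal{S}$-procedure to relax it to an LMI.

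First I would express the two quadratic forms in terms of the lifted variable $x_e\in\mathbb{R}^{n_e}$. Note that $[x;1]=L_2 x_e$ and $[Fx_e;1]=L_1 x_e$ (using that the last row of $L_5$ picks out the constant $1$ from $x_e$), so $V(Fx_e)-\theta V(x)=x_e^T(L_1^T P L_1-\theta L_2^T P L_2)x_e$, and the desired inequality becomes $x_e^T(L_1^T P L_1-\theta L_2^T P L_2)x_e\le-\mu$ on the constraint set. Second, I would encode each constraint as a quadratic form in $x_e$: the equality $Hx_e=0$ contributes the term $\operatorname{He}(YH)$ for a free multiplier matrix $Y$ (since $x_e^T\operatorname{He}(YH)x_e=2x_e^T Y H x_e=0$ whenever $Hx_e=0$); the box constraints $-1\le x_i\le1$ and $-1\le w_j\le1$, i.e. $x_i^2\le1$, $w_j^2\le1$, contribute $L_3^T D_{xw}L_3-(\operatorname{Trace}D_{xw})L_5^T L_5$ with $D_{xw}\succeq0$ diagonal (this quadratic form equals $\sum (D_{xw})_{kk}(\cdot_k^2-1)\ge? $ — actually it is $\le 0$ on the box, which is the correct sign for an $\mathcal{S}$-procedure term); the sign constraints $v_j^2=1$ contribute $L_4^T D_v L_4-(\operatorname{Trace}D_v)L_5^T L_5$ with $D_v$ a free (not necessarily definite) diagonal matrix, since these are equalities. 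Finally $\mu=\mu\,x_e^T L_5^T L_5 x_e$ on the constraint set because $L_5 x_e=1$. Collecting terms and setting $\lambda=\operatorname{Trace}D_{xw}+\operatorname{Trace}D_v$, the $\mathcal{S}$-procedure sufficient condition is exactly the stated LMI $L_1^T P L_1-\theta L_2^T P L_2\preceq\operatorname{He}(YH)+L_3^T D_{xw}L_3+L_4^T D_v L_4-(\lambda+\mu)L_5^T L_5$.

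The verification that this is indeed sufficient is then a direct computation: given $P,Y,D_{xw},D_v$ satisfying the LMI, left- and right-multiply by $x_e$ and $x_e^T$ for any admissible $(x,w,v)$; the $\operatorname{He}(YH)$ term vanishes, the $D_{xw}$ term is nonpositive on the box, the $D_v$ term vanishes on $\{v_j^2=1\}$ (the added $-\operatorname{Trace}D_v\cdot L_5^T L_5$ term cancels $x_e^T L_4^T D_v L_4 x_e=\operatorname{Trace}D_v$), and the $L_5^T L_5$ coefficient $-(\lambda+\mu)$ combines with the $+\lambda$ coming from the box/sign bookkeeping to leave exactly $-\mu$ on the right-hand side, yielding $V(Fx_e)-\theta V(x)\le-\mu$ as required. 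I expect the only delicate points to be bookkeeping ones: getting the signs of the $\mathcal{S}$-procedure multiplier terms consistent (positive semidefinite $D_{xw}$ for the genuine inequalities $x_i^2\le1$ versus free $D_v$ for the equalities $v_j^2=1$), and correctly accounting for how the constant "$1$" entry of $x_e$ interacts with the constant offsets $-1$ in each $x_i^2-1\le0$ — which is precisely what the $\lambda L_5^T L_5$ term tracks. There is no real obstacle beyond careful algebra; the substance is entirely in the modeling, i.e. recognizing that the quadratic-over-semialgebraic implication from Proposition \ref{prop:MILMLyap} is amenable to a lossless-in-structure $\mathcal{S}$-procedure relaxation.
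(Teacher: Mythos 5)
Your proposal is correct and follows essentially the same route as the paper: reduce via Proposition \ref{prop:MILMLyap} to a quadratic inequality in the lifted variable $x_e$ over the semialgebraic constraint set, then apply the $\mathcal{S}$-procedure with a free multiplier $Y$ for the equality $Hx_e=0$, a nonnegative diagonal $D_{xw}$ for the box constraints, and a sign-free diagonal $D_v$ for the binary constraints, with $\lambda L_5^TL_5$ accounting for the constant offsets. Your sign bookkeeping and the final verification step match the paper's argument exactly.
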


\begin{proof}
The proof is presented in Appendix II
\end{proof}

The following theorem summarizes our results for verification of absence of
overflow and/or FTT for MILMs. The result follows from Lemma
\ref{MIPL_Invariance_Lemma} and Corollary \ref{Bddness and FTT} with $q=2$,
$h=f, $ though the theorem is presented without a detailed proof.\vspace
{-0.05in}

\begin{theorem}
\label{MILP_Correctness_Theorem}\textbf{Optimization-Based
MILM\ Verification.} Let $\alpha:0\prec\alpha\preceq I_{n}$ be a diagonal
positive definite matrix specifying the overflow limit. An overflow runtime
error does not occur during any execution of $\mathcal{P}$ if there exist
matrices $Y_{i}\in\mathbb{R}^{n_{e}\times n_{H}},$ diagonal matrices
$D_{iv}\in\mathbb{D}^{n_{v}},$ positive semidefinite diagonal matrices
$D_{ixw}\in\mathbb{D}_{+}^{n+n_{w}},$ and a symmetric matrix $P\in
\mathbb{S}^{n+1}$ satisfying the following LMIs:\vspace*{-0.2in}
\label{MILP_Thm}
\begin{align}
\lbrack%
\begin{array}
[c]{cc}%
x_{0} & 1
\end{array}
]P[%
\begin{array}
[c]{cc}%
x_{0} & 1
\end{array}
]^{T}  & \leq0,\text{\qquad}\forall x_{0}\in X_{0}\label{MILP_Thm_a}%
\\[-0.06in]
L_{1}^{T}PL_{1}-\theta L_{2}^{T}PL_{2}  & \preceq\operatorname{He}\left(
Y_{1}H\right)  +L_{3}^{T}D_{1xw}L_{3}+L_{4}^{T}D_{1v}L_{4}-\left(  \lambda
_{1}+\mu\right)  L_{5}^{T}L_{5}\label{MILP_Thm_b}\\[-0.06in]
L_{1}^{T}\Lambda L_{1}-L_{2}^{T}PL_{2}  & \preceq\operatorname{He}\left(
Y_{2}H\right)  +L_{3}^{T}D_{2xw}L_{3}+L_{4}^{T}D_{2v}L_{4}-\lambda_{2}%
L_{5}^{T}L_{5}\label{MILP_Thm_c}%
\end{align}%
\[
\vspace*{-0.7in}%
\]
\vspace*{-0.45in} \newline\noindent where $\Lambda=\operatorname{diag}\left\{
\alpha^{-2},-1\right\}  ,$ $\lambda_{i}=\operatorname{Trace}D_{ixw}%
+\operatorname{Trace}D_{iv},$ and $0\preceq D_{ixw},$ $i=1,2.$ In addition, if
$\mu+\theta>1,$ then $\mathcal{P}$ terminates in a most $T_{u}$ steps where
$T_{u}$ is given in (\ref{upperboundonTU}).
\end{theorem}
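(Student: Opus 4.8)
The plan is to derive Theorem \ref{MILP_Correctness_Theorem} by combining Lemma \ref{MIPL_Invariance_Lemma} with Corollary \ref{Bddness and FTT}, instantiated at $q=2$ and $h=f$. The first step is to interpret each of the three LMIs (\ref{MILP_Thm_a})--(\ref{MILP_Thm_c}) as an $\mathcal{S}$-procedure certificate for one of the three hypotheses of Corollary \ref{Bddness and FTT}, namely (\ref{One1}), the Lyapunov difference inequality (\ref{Softa2}), and (\ref{Three3}). For the candidate $V\in\mathcal{V}_x^2$ given by the symmetric matrix $P\in\mathbb{S}^{n+1}$ as in (\ref{MILP_Quad_fctn}), inequality (\ref{MILP_Thm_a}) is just the statement $V(x_0)\le 0$ evaluated at each $x_0\in X_0$, i.e. exactly (\ref{One1}). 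Inequality (\ref{MILP_Thm_b}) is literally the LMI of Lemma \ref{MIPL_Invariance_Lemma} with $Y=Y_1$, $D_{xw}=D_{1xw}$, $D_v=D_{1v}$, so by that lemma $V$ is a $(\theta,\mu)$-Lyapunov invariant for $\mathcal{P}$, establishing (\ref{Softa2}).

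The second step handles the overflow bound. I would unravel (\ref{MILP_Thm_c}): pre- and post-multiplying by $[\begin{array}{cccc} x & w & v & 1\end{array}]$ and its transpose, the left-hand side $L_1^T\Lambda L_1 - L_2^T P L_2$ evaluates to $\|\alpha^{-1}Fx_e\|_2^2 - 1 - V(x)$, where $x_e=[\begin{array}{cccc} x & w & v & 1\end{array}]^T$ and I have used $\Lambda=\operatorname{diag}\{\alpha^{-2},-1\}$ together with the structure of $L_1=[\begin{array}{c} F \\ L_5\end{array}]$ so that $L_1 x_e = [\begin{array}{c} Fx_e \\ 1\end{array}]$. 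The right-hand side is a nonnegative combination (on the constraint set $\Xi$) of the terms $H x_e = 0$, the box constraints $1-x_i^2\ge 0$ and $1-w_j^2\ge 0$ absorbed into $\operatorname{He}(Y_2 H)+L_3^T D_{2xw}L_3-(\operatorname{Trace}D_{2xw})L_5^TL_5$, and the sign-indefinite identities $1-v_l^2=0$ absorbed into $L_4^T D_{2v}L_4-(\operatorname{Trace}D_{2v})L_5^TL_5$. Hence on $X\times\Xi$ one gets $\|\alpha^{-1}Fx_e\|_2^2-1\le V(x)$; taking the supremum over admissible $(w,v)$, i.e. over $x_+\in f(x)$, yields $\sup\{\|\alpha^{-1}f(x)\|_2-1\}\le V(x)$ for all $x\in X$ — but one must be slightly careful: the LMI gives a bound on $\|\alpha^{-1}f(x)\|_2^2-1$, and since $\|\cdot\|_2\ge\|\cdot\|_2^2-1$ is false in general, I would instead observe that whenever $x\notin X_-$ we actually need $V\ge\|\alpha^{-1}f(x)\|_2-1$; the standard trick is that $X_-$ is exactly where $\|\alpha^{-1}x\|_\infty>1$, and on the reachable complement the quadratic bound suffices because $s^2-1\ge s-1$ for $s\ge 1$ and is vacuous for $s\le 1$. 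This reconciles the squared form of the LMI with the hypothesis (\ref{Three3}) of Corollary \ref{Bddness and FTT}.

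The third step simply invokes Corollary \ref{Bddness and FTT}: conditions (\ref{One1}) and (\ref{Three3}) having been verified, no overflow occurs during any execution of $\mathcal{P}$; and if in addition $\mu>0$ and $\mu+\theta>1$, the same corollary gives termination in at most $T_u$ steps, with $T_u$ as in (\ref{upperboundonTU}) — which is precisely the $T_u$ quoted in the theorem statement, so nothing new needs to be computed. The boundedness requirement $\|V\|_\infty<\infty$ needed by the corollary follows because $V$ is quadratic and, outside $X_-\cup X_\infty$, the variables lie in a compact set (the quasi-hypercube $[-1,1]^n$ intersected with the complement of $X_-$), so $V$ is automatically uniformly bounded there.

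The main obstacle I anticipate is the bookkeeping in the second step: correctly matching each block of the LMI (\ref{MILP_Thm_c}) to the corresponding constraint in the definition of $\Xi$ and the box $[-1,1]^n\times[-1,1]^{n_w}\times\{-1,1\}^{n_v}$, keeping track of the sign conventions (the factor-of-2 scaling between $F$ in the MILM definition and $2A_i,2B_i$ in Proposition \ref{MILM-prop}, the role of $\operatorname{He}(\cdot)$, and the placement of $\lambda_2 L_5^T L_5$), and in particular justifying rigorously the passage from the squared overflow bound certified by the LMI to the unsquared form demanded by (\ref{Three3}). Everything else is a direct citation of Lemma \ref{MIPL_Invariance_Lemma} and Corollary \ref{Bddness and FTT}; consistent with the paper's stated philosophy, I would present this as a short derivation rather than a from-scratch proof.
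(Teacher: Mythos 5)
Your proposal is correct and follows exactly the route the paper itself indicates: the paper states that Theorem \ref{MILP_Correctness_Theorem} follows from Lemma \ref{MIPL_Invariance_Lemma} and Corollary \ref{Bddness and FTT} with $q=2$, $h=f$, and deliberately omits the details, which you have filled in faithfully (LMI (\ref{MILP_Thm_a}) giving (\ref{One1}), LMI (\ref{MILP_Thm_b}) giving the Lyapunov decrease via the lemma, and LMI (\ref{MILP_Thm_c}) giving the overflow bound via the same $\mathcal{S}$-procedure relaxation). Your reconciliation of the squared bound $V(x)\geq\|\alpha^{-1}Fx_e\|_2^2-1$ with the unsquared hypothesis (\ref{Three3}) is the right observation: the corollary's proof only uses positivity of $V$ on $h^{-1}(X_-)$ (where $\|\alpha^{-1}x_+\|_\infty>1$ forces $\|\alpha^{-1}x_+\|_2^2-1>0$) together with the lower bound $V\geq-1$, both of which the quadratic form delivers.
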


\subsubsection{Affine Invariants\label{Section:MILM-LI}}

Affine Lyapunov invariants can often establish strong properties, e.g.,
boundedness, for\ variables with simple uncoupled dynamics (e.g. counters) at
a low computational cost. For variables with more complicated dynamics, affine
invariants may simply establish sign-invariance (e.g., $x_{i}\geq0$) or more
generally, upper or lower bounds on some linear combination of certain
variables. As we will observe in Section \ref{sec:casestudy}, establishing
these simple behavioral properties is important as they can be recursively
added to the model (e.g., the matrix $H$ in a MILM, or the invariant sets
$X_{i}$ in a graph model) to improve the chances of success in proving
stronger properties via higher order invariants. The linear parameterization
of the subspace of linear functionals mapping $\mathbb{R}^{n}$ to
$\mathbb{R},\,$is given by:\vspace*{-0.12in}%
\begin{equation}
\mathcal{V}_{x}^{1}=\left\{  V:\mathbb{R}^{n}\mapsto\mathbb{R~}|~V(x)=K^{T}%
\left[  x\quad1\right]  ^{T},~K\in\mathbb{R}^{n+1}\right\}  .\vspace
*{-0.12in}\label{MILP_Linear_fctn}%
\end{equation}
\vspace{-0.5in}\newline It is possible to search for the affine invariants via
semidefinite programming or linear programming.

\begin{proposition}
\label{MIPL_LinearInvariance_Lemma}\textbf{SDP Characterization of Linear
Invariants:} There exists a $\left(  \theta,\mu\right)  $-Lyapunov invariant
$V\in\mathcal{V}_{x}^{1}$ for a program $\mathcal{P}\equiv\mathcal{S}\left(
F,H,X_{0},n,n_{w},n_{v}\right)  ,$ if there exists a matrix $Y\in
\mathbb{R}^{n_{e}\times n_{H}},$ a diagonal matrix $D_{v}\in\mathbb{D}^{n_{v}%
},$ a positive semidefinite diagonal matrix $D_{xw}\in\mathbb{D}_{+}^{\left(
n+n_{w}\right)  \times\left(  n+n_{w}\right)  },$ and a matrix $K\in
\mathbb{R}^{n+1}$ satisfying the following LMI:\vspace*{-0.2in}%
\begin{equation}
\operatorname{He}(L_{1}^{T}KL_{5}-\theta L_{5}^{T}K^{T}L_{2})\prec
\operatorname{He}(YH)+L_{3}^{T}D_{xw}L_{3}+L_{4}^{T}D_{v}L_{4}-\left(
\lambda+\mu\right)  L_{5}^{T}L_{5}\vspace*{-0.2in}\label{MILP-LI-SDP}%
\end{equation}
where $\lambda=\operatorname{Trace}D_{xw}+\operatorname{Trace}D_{v}$ and
$0\preceq D_{xw}.$
\end{proposition}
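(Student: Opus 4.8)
The plan is to transcribe the proof of Lemma \ref{MIPL_Invariance_Lemma}, replacing the quadratic form $[x~1]P[x~1]^{T}$ by the affine form $K^{T}[x~1]^{T}$ and re-running the same convex-relaxation argument. By Proposition \ref{prop:MILMLyap}, it suffices to produce $K\in\mathbb{R}^{n+1}$ so that $V(x)=K^{T}[x~1]^{T}$ satisfies $V(Fx_{e})-\theta V(x)\le-\mu$ for every $x_{e}\in\Xi$ whose leading $n$-block $x$ lies in $[-1,1]^{n}$. First I would homogenize everything into quadratic forms in $x_{e}$: using $L_{5}x_{e}=1$, $L_{1}x_{e}=[Fx_{e}~1]^{T}$ and $L_{2}x_{e}=[x~1]^{T}$, one writes $V(Fx_{e})=x_{e}^{T}L_{1}^{T}KL_{5}x_{e}$, $V(x)=x_{e}^{T}L_{5}^{T}K^{T}L_{2}x_{e}$ and $\mu=\mu\,x_{e}^{T}L_{5}^{T}L_{5}x_{e}$, which turns the target into the single quadratic-form inequality $x_{e}^{T}M(K)x_{e}\le0$ on $\Xi$, where $M(K)$ is affine in $K$ with symmetric part $\tfrac{1}{2}\operatorname{He}(L_{1}^{T}KL_{5}-\theta L_{5}^{T}K^{T}L_{2})+\mu L_{5}^{T}L_{5}$.

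Next I would apply the $\mathcal{S}$-procedure exactly as in Section \ref{Section:Preliminaries} and in Lemma \ref{MIPL_Invariance_Lemma}. The set $\{x_{e}\in\Xi:\,x\in[-1,1]^{n}\}$ is described by the linear equality $Hx_{e}=0$, the $n+n_{w}$ box inequalities $(L_{3}x_{e})_{i}^{2}\le1$ and the $n_{v}$ binary equalities $(L_{4}x_{e})_{j}^{2}=1$; replacing the constant $1$ by $(L_{5}x_{e})^{2}$ makes each of these a homogeneous quadratic form in $x_{e}$. Attaching a free multiplier matrix $Y$ to the linear equality, a positive semidefinite diagonal multiplier $D_{xw}$ to the box inequalities, and a free-sign diagonal multiplier $D_{v}$ to the binary equalities, the standard relaxation yields as a sufficient condition for $x_{e}^{T}M(K)x_{e}\le0$ to hold on the whole set that the symmetric part of $M(K)$ be $\preceq\operatorname{He}(YH)+L_{3}^{T}D_{xw}L_{3}-(\operatorname{Trace}D_{xw})L_{5}^{T}L_{5}+L_{4}^{T}D_{v}L_{4}-(\operatorname{Trace}D_{v})L_{5}^{T}L_{5}$; clearing the scalar factor, rescaling the free multipliers, and writing $\lambda=\operatorname{Trace}D_{xw}+\operatorname{Trace}D_{v}$ brings this to the form (\ref{MILP-LI-SDP}). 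The converse implication is then immediate: plugging any $x_{e}\in\Xi$ into (\ref{MILP-LI-SDP}) and using $x_{e}^{T}\operatorname{He}(YH)x_{e}=0$, $x_{e}^{T}L_{3}^{T}D_{xw}L_{3}x_{e}\le\operatorname{Trace}D_{xw}$ (as $D_{xw}\succeq0$ and $(L_{3}x_{e})_{i}^{2}\le1$), and $x_{e}^{T}L_{4}^{T}D_{v}L_{4}x_{e}=\operatorname{Trace}D_{v}$ shows $V(Fx_{e})-\theta V(x)\le-\mu$ for every $x_{e}\in\Xi$, so by Proposition \ref{prop:MILMLyap} the affine $V\in\mathcal{V}_{x}^{1}$ is a $(\theta,\mu)$-Lyapunov invariant for $\mathcal{P}$.

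I do not expect a genuine obstacle, since the argument is essentially a line-by-line copy of Lemma \ref{MIPL_Invariance_Lemma}. The only structural difference from the quadratic case --- and the one point that needs attention --- is the homogenization/symmetrization step: the cross-terms $L_{1}^{T}KL_{5}$ and $L_{5}^{T}K^{T}L_{2}$ are not symmetric (whereas $L_{1}^{T}PL_{1}$ and $L_{2}^{T}PL_{2}$ were), so one must pass through the $\operatorname{He}(\cdot)$ operator, which is exactly what gives the left-hand side of (\ref{MILP-LI-SDP}) its particular shape and accounts for a routine factor-of-two bookkeeping in $\mu$ and in the multipliers. Once that is in place and the multiplier sign conventions are kept straight ($D_{xw}\succeq0$ for the inequality constraints, $D_{v}$ indefinite for the equality constraints, $Y$ unrestricted for $Hx_{e}=0$), the conclusion follows at once.
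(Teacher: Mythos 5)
Your proposal is correct and follows exactly the S-procedure relaxation route that the paper uses for the quadratic case (Lemma \ref{MIPL_Invariance_Lemma}); the paper states this proposition without a separate proof, clearly intending precisely this adaptation via Proposition \ref{prop:MILMLyap}. Your remark about the factor of two arising from symmetrizing the rank-one cross terms $L_{1}^{T}KL_{5}$ and $L_{5}^{T}K^{T}L_{2}$ is the one genuine subtlety, and it is absorbed as you say: the certified invariant is $2K^{T}[x~~1]^{T}$, which still lies in $\mathcal{V}_{x}^{1}$, so the existence claim holds as stated.
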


\begin{proposition}
\label{MIPL_LinearInvariance_Lemma_LP}\textbf{LP Characterization of Linear
Invariants: }There exists a $\left(  \theta,\mu\right)  $-Lyapunov invariant
for a program $\mathcal{P}\equiv\mathcal{S}\left(  F,H,X_{0},n,n_{w}%
,n_{v}\right)  $ in the class $\mathcal{V}_{x}^{1},$ if there exists a matrix
$Y\in\mathbb{R}^{1\times n_{H}},$ and nonnegative matrices $\underline{D}%
_{v},~\overline{D}_{v}\in\mathbb{R}^{1\times n_{v}},$~$\underline{D}%
_{xw},~\overline{D}_{xw}\in\mathbb{R}^{1\times\left(  n+n_{w}\right)  },$ and
a matrix $K\in\mathbb{R}^{n+1}$ satisfying:\vspace*{-0.2in}
\begin{subequations}
\label{MILP-LP-LP}%
\begin{align}
K^{T}L_{1}-\theta K^{T}L_{2}-YH-(\underline{D}_{xw}-\overline{D}_{xw}%
)L_{3}-(\underline{D}_{v}-\overline{D}_{v})L_{4}-\left(  D_{1}+\mu\right)
L_{5}  & =0\vspace*{-0.15in}\label{MILP-LP-LPa}\\
D_{1}+\left(  \overline{D}_{v}+\underline{D}_{v}\right)  \mathbf{1}%
_{r}+\left(  \overline{D}_{xw}+\underline{D}_{xw}\right)  \mathbf{1}%
_{n+n_{w}}  & \leq0\vspace*{-0.15in}\label{MILP-LP-LPb}\\
\overline{D}_{v},~\underline{D}_{v},~\overline{D}_{xw},~\underline{D}_{xw}  &
\geq0\vspace*{-0.15in}\label{MILP-LP-LPc}%
\end{align}%
\end{subequations}
\[
\vspace*{-0.8in}%
\]
where $D_{1}$ is either $0$ or $-1.$ As a special case of (\ref{MILP-LP-LP}%
),\ a subset of all the affine invariants is characterized by the set of all
solutions of the following system of linear equations:\vspace*{-0.2in}
\begin{equation}
K^{T}L_{1}-\theta K^{T}L_{2}+L_{5}=0\label{MILP-LP-LP-simple}%
\end{equation}

\end{proposition}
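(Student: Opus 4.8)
The plan is to specialize Proposition~\ref{prop:MILMLyap} to affine candidate functions and then certify the resulting \emph{linear} inequality over the constraint set $\Xi$ by an affine (``LP'') analogue of the $\mathcal{S}$-procedure, reading off the required multipliers from the matrices $Y,\underline{D}_{v},\overline{D}_{v},\underline{D}_{xw},\overline{D}_{xw}$ and the scalar $D_{1}$. Concretely, take $V\in\mathcal{V}_{x}^{1}$ as in (\ref{MILP_Linear_fctn}), so $V(x)=K^{T}[\,x\ \ 1\,]^{T}$ for some $K\in\mathbb{R}^{n+1}$. Writing $x_{e}=[\,x\ \ w\ \ v\ \ 1\,]^{T}$ and using the constant selector matrices $L_{1}=[\,F;L_{5}\,],L_{2},L_{3},L_{4},L_{5}$ of Lemma~\ref{MIPL_Invariance_Lemma}, one has $L_{1}x_{e}=[\,Fx_{e}\ \ 1\,]^{T}$, $L_{2}x_{e}=[\,x\ \ 1\,]^{T}$, $L_{3}x_{e}=[\,x;w\,]$, $L_{4}x_{e}=v$ and $L_{5}x_{e}=1$, hence $V(Fx_{e})=K^{T}L_{1}x_{e}$ and $V(x)=K^{T}L_{2}x_{e}$. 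By Proposition~\ref{prop:MILMLyap} it then suffices to show that the affine functional $x_{e}\mapsto(K^{T}L_{1}-\theta K^{T}L_{2})x_{e}$ is $\leq-\mu$ for every $x_{e}$ with $Hx_{e}=0$, $x\in[-1,1]^{n}$ and $(w,v)\in[-1,1]^{n_{w}}\times\{-1,1\}^{n_{v}}$. Since $\{-1,1\}^{n_{v}}\subset[-1,1]^{n_{v}}$, it is enough to establish the inequality on the enlarged set obtained by relaxing the integrality of $v$; this relaxation is sound precisely because it only enlarges $\Xi$ and only a sufficient condition is sought (the statement is an ``if'', so no losslessness is needed). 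On the enlarged set every entry of $L_{3}x_{e}$ and of $L_{4}x_{e}$ lies in $[-1,1]$, $L_{5}x_{e}=1$, and $Hx_{e}=0$.

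The next step is the linear $\mathcal{S}$-procedure. On the enlarged set, $x_{e}\mapsto YHx_{e}$ vanishes for every row vector $Y$ (a multiplier for the equality $Hx_{e}=0$), while the functionals $x_{e}\mapsto\mathbf{1}\pm L_{3}x_{e}$ and $x_{e}\mapsto\mathbf{1}\pm L_{4}x_{e}$ (with $1$ written as $L_{5}x_{e}$) are entrywise nonnegative, and nonnegative multipliers for them are the entries of $\underline{D}_{xw},\overline{D}_{xw},\underline{D}_{v},\overline{D}_{v}$. Rearranged, (\ref{MILP-LP-LPa}) asserts exactly that
\[
K^{T}L_{1}-\theta K^{T}L_{2}=YH+(\underline{D}_{xw}-\overline{D}_{xw})L_{3}+(\underline{D}_{v}-\overline{D}_{v})L_{4}+(D_{1}+\mu)L_{5},
\]
i.e., $K^{T}L_{1}-\theta K^{T}L_{2}$ is the affine combination of the equality functional and the box functionals (with weights assembled from the $\underline{D}$'s and $\overline{D}$'s) plus a constant term. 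Evaluating both sides on a feasible $x_{e}$ and using $Hx_{e}=0$, $L_{5}x_{e}=1$, the sign constraint (\ref{MILP-LP-LPc}), and the elementary bound $(\underline{D}_{xw,i}-\overline{D}_{xw,i})(L_{3}x_{e})_{i}\leq\underline{D}_{xw,i}+\overline{D}_{xw,i}$ (and likewise for the $v$-block), one gets an upper bound on $V(Fx_{e})-\theta V(x)$ consisting of the box totals plus $D_{1}+\mu$; conditions (\ref{MILP-LP-LPb})--(\ref{MILP-LP-LPc}) are set up so that this upper bound does not exceed $-\mu$, whence (\ref{Softa2}) holds on $\Xi$ and $V$ is a $(\theta,\mu)$-Lyapunov invariant by Proposition~\ref{prop:MILMLyap}.

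Finally, the special case (\ref{MILP-LP-LP-simple}) is the trivial specialization $Y=0$, $\underline{D}_{xw}=\overline{D}_{xw}=\underline{D}_{v}=\overline{D}_{v}=0$, $D_{1}=-1$, $\mu=0$ in (\ref{MILP-LP-LP}), which collapses (\ref{MILP-LP-LPa}) to the linear system $K^{T}L_{1}-\theta K^{T}L_{2}+L_{5}=0$; its solutions form an affine subset of $\mathbb{R}^{n+1}$, hence capture only a subset of all affine invariants (those along which $V(Fx_{e})-\theta V(x)$ is identically $-1$). I expect no genuine analytic obstacle here: the only two points requiring care are (i) the soundness of replacing the binary box $\{-1,1\}^{n_{v}}$ by $[-1,1]^{n_{v}}$, which is immediate since $\Xi$ only grows and a sufficient condition suffices, and (ii) the bookkeeping of the constant contributions produced when each box multiplier is pushed through the relaxed inequality---this is exactly what the linear identity (\ref{MILP-LP-LPa}) together with (\ref{MILP-LP-LPb}) encodes, the pairing convention ($\underline{D}$ with $\mathbf{1}+(\cdot)$, $\overline{D}$ with $\mathbf{1}-(\cdot)$) and the estimate $|\underline{D}_{i}-\overline{D}_{i}|\leq\underline{D}_{i}+\overline{D}_{i}$ being what makes the box terms bounded uniformly over $[-1,1]$.
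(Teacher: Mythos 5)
Your overall strategy is the right one, and it is the natural affine analogue of the argument the paper actually writes out for the quadratic case (the proof of Lemma \ref{MIPL_Invariance_Lemma} in Appendix II): reduce to the inequality of Proposition \ref{prop:MILMLyap} on $\Xi$, relax $v\in\{-1,1\}^{n_{v}}$ to $[-1,1]^{n_{v}}$ (sound, since only a sufficient condition is claimed, and consistent with the paper's own remark about LP relaxations of the binary constraints), and certify the resulting affine inequality over the box by nonnegative multipliers on $\pm(L_{3}x_{e})_{i}-1$, $\pm(L_{4}x_{e})_{j}-1$ together with a free multiplier $Y$ on $Hx_{e}=0$. The identification $V(Fx_{e})=K^{T}L_{1}x_{e}$, $V(x)=K^{T}L_{2}x_{e}$ and the bound $(\underline{D}_{i}-\overline{D}_{i})t\leq\underline{D}_{i}+\overline{D}_{i}$ for $t\in[-1,1]$ are all correct.

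The gap is in the last line of your second paragraph. Evaluating (\ref{MILP-LP-LPa}) on a feasible $x_{e}$ gives, with $\Sigma=(\overline{D}_{xw}+\underline{D}_{xw})\mathbf{1}_{n+n_{w}}+(\overline{D}_{v}+\underline{D}_{v})\mathbf{1}_{r}$,
\[
V(Fx_{e})-\theta V(x)\;\leq\;\Sigma+D_{1}+\mu,
\]
and (\ref{MILP-LP-LPb}) asserts only $D_{1}+\Sigma\leq0$, so what you actually obtain is $V(Fx_{e})-\theta V(x)\leq\mu$, not $\leq-\mu$ as Definition \ref{Def:LyapInv} requires. Your claim that (\ref{MILP-LP-LPb})--(\ref{MILP-LP-LPc}) ``are set up so that this upper bound does not exceed $-\mu$'' is therefore unjustified as written; it holds only when $\mu=0$ (which is why your treatment of the special case (\ref{MILP-LP-LP-simple}) with $D_{1}=-1$, $\mu=0$ does go through). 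To close the argument you must either read the constant term in (\ref{MILP-LP-LPa}) with the opposite sign on $\mu$ (i.e., $D_{1}-\mu$ in place of $D_{1}+\mu$, which restores the exact parallel with the term $-(\lambda+\mu)L_{5}^{T}L_{5}$ in Lemma \ref{MIPL_Invariance_Lemma}, where the multiplier total $\lambda$ and $\mu$ are both subtracted), or strengthen the budget constraint to $D_{1}+\Sigma+2\mu\leq0$. The paper states this proposition without proof, so the discrepancy is not resolved there; either way, your write-up needs to make the constant bookkeeping explicit rather than asserting it, since as literally stated the chain of inequalities does not reach $-\mu$ for $\mu>0$.
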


\begin{remark}
When the objective is to establish properties of the form $Kx\geq a$ for a
fixed $K,$ (e.g., when establishing sign-invariance for certain variables),
matrix $K$ in (\ref{MILP-LI-SDP})$-$(\ref{MILP-LP-LP-simple}) is fixed and
thus one can make $\theta$ a decision variable subject to $\theta\geq0.$
Exploiting this convexity is extremely helpful for successfully establishing
such properties.
\end{remark}

The advantage of using semidefinite programming is that efficient SDP
relaxations for treatment of binary variables exists, though the computational
cost is typically higher than the LP-based approach. In contrast, linear
programming relaxations of the binary constraints are more involved than the
corresponding SDP relaxations. Two extreme remedies can be readily considered.
The first is to relax the binary constraints and treat the variables as
continuous variables $v_{i}\in\left[  -1,1\right]  .$ The second is to
consider each of the $2^{n_{v}}$ different possibilities (one for each vertex
of $\left\{  -1,1\right\}  ^{n_{v}}$) separately. This approach can be useful
if $n_{v}$ is small, and is otherwise impractical. More sophisticated schemes
can be developed based on hierarchical linear programming relaxations of
binary integer programs \cite{Sherali1994}.\vspace*{-0.2in}

\subsection{Optimization of Lyapunov Invariants for Graph Models\vspace
*{-0.1in}}

A linear parameterization of the subspace of polynomial functionals with total
degree less than or equal to $d$ is given by:\vspace*{-0.25in}%
\begin{equation}
\mathcal{V}_{x}^{d}=\left\{  V:\mathbb{R}^{n}\mapsto\mathbb{R~}|~V(x)=K^{T}%
Z\left(  x\right)  ,\text{ }K\in\mathbb{R}^{N},\text{ }N=\binom{n+d}%
{d}\right\}  \vspace*{-0.05in}\label{Graph_Poly_fctn}%
\end{equation}
\vspace*{-0.45in}\newline where $Z\left(  x\right)  $ is a vector of length
$\binom{n+d}{d},$ consisting of all monomials of degree less than or equal to
$d$ in $n$ variables $x_{1},...,x_{n}.$ A linear parametrization of Lyapunov
invariants for graph models is defined according to (\ref{nodewiselyap}),
where for every $i\in\mathcal{N},$ we have $\sigma_{i}\left(  \cdot\right)
\in\mathcal{V}_{x}^{d\left(  i\right)  },$ where $d\left(  i\right)  $ is a
selected degree bound for $\sigma_{i}\left(  \cdot\right)  .$ Depending on the
dynamics of the model, the degree bounds $d\left(  i\right)  ,$ and the convex
relaxation technique, the corresponding optimization problem will become a
linear, semidefinite, or SOS optimization problem.

\subsubsection{Node-wise Polynomial Invariants}

We present generic conditions for verification over graph models using
SOS\ programming. Although LMI\ conditions for verification of \textit{linear
graph models} using quadratic invariants and the $\mathcal{S}$-Procedure for
relaxation of non-convex constraints can be formulated, we do not present them
here due to space limitations. Such formulations are presented in the extended
report \cite{RMF2010}, along with executable Matlab code in \cite{MVichSite}.
The following theorem follows from Corollary \ref{SafetyGraphCor1}.

\begin{theorem}
\label{Thm:Graphnodepoly}\textbf{Optimization-Based Graph Model Verification.}
Consider a program $\mathcal{P}$, and its graph model $G\left(  \mathcal{N}%
,\mathcal{E}\right)  .$ Let $V:\Omega^{n}\mapsto\mathbb{R},$ be given by
(\ref{nodewiselyap}), where $\sigma_{i}\left(  \cdot\right)  \in
\mathcal{V}_{x}^{d\left(  i\right)  }.$ Then, the functions $\sigma_{i}\left(
\cdot\right)  ,$ $i\in\mathcal{N}$ define a Lyapunov invariant for
$\mathcal{P},$ if for all $\left(  i,j,k\right)  \in\mathcal{E}$ we
have:\vspace*{-0.2in}%
\begin{equation}
-\sigma_{j}(T_{ji}^{k}\left(  x,w\right)  )+\theta_{ji}^{k}\sigma_{i}\left(
x\right)  -\mu_{ji}^{k}\in\Sigma\left[  x,w\right]  \text{ subject to }\left(
x,w\right)  \in\left(  \left(  X_{i}\cap\Pi_{ji}^{k}\right)  \times
\lbrack-1,1]^{n_{w}}\right)  \cap S_{ji}^{k}\vspace*{-0.2in}\label{D1D1}%
\end{equation}
Furthermore, $\mathcal{P}$ satisfies the unreachability property w.r.t. the
collection of sets $X_{i-},$ $i\in\mathcal{N}\backslash\left\{  \emptyset
\right\}  ,$ if there exist $\varepsilon_{i}\in\left(  0,\infty\right)  ,$
$i\in\mathcal{N}\backslash\left\{  \emptyset\right\}  ,$ such that\vspace
*{-0.2in}%
\begin{align}
-\sigma_{\emptyset}\left(  x\right)   & \in\Sigma\left[  x\right]  \text{
subject to }x\in X_{\emptyset}\vspace*{-0.2in}\label{D1D0}\\
\sigma_{i}\left(  x\right)  -\varepsilon_{i}  & \in\Sigma\left[  x\right]
\text{ subject to }x\in X_{i}\cap X_{i-},\ i\in\mathcal{N}\backslash\left\{
\emptyset\right\}  \vspace*{-0.25in}\label{D1D3}%
\end{align}%
\[
\vspace*{-0.65in}%
\]
\vspace*{-0.5in}\newline As discussed in Section \ref{Section:SOS-Relaxation},
the SOS relaxation techniques can be applied for formulating the search
problem for functions $\sigma_{i}$ satisfying (\ref{D1D1})--(\ref{D1D3}) as a
convex optimization problem. For instance, if\vspace*{-0.2in}
\[
\left(  \left(  X_{i}\cap\Pi_{ji}^{k}\right)  \times\lbrack-1,1]^{n_{w}%
}\right)  \cap S_{ji}^{k}=\left\{  \left(  x,w\right)  ~|~f_{p}\left(
x,w\right)  \geq0,\text{ }h_{l}\left(  x,w\right)  =0\right\}  ,\vspace
*{-0.2in}%
\]
then, (\ref{D1D1}) can be formulated as an SOS optimization problem of the
following form:\vspace*{-0.2in}
\[
-\sigma_{j}(T_{ji}^{k}\left(  x,w\right)  )+\theta_{ji}^{k}\sigma_{i}\left(
x\right)  -\mu_{ji}^{k}\hspace{-0.01in}-\hspace{-0.03in}\sum\limits_{p}%
\hspace{-0.02in}\tau_{p}f_{p}-\hspace{-0.03in}\sum\limits_{p,q}\hspace
{-0.02in}\tau_{pq}f_{p}f_{q}-\hspace{-0.03in}\sum\limits_{l}\hspace
{-0.02in}\rho_{l}h_{l}\in\Sigma\left[  x,w\right]  ,\text{ s.t. }\tau_{p}%
,\tau_{pq}\in\Sigma\left[  x,w\right]  .\vspace*{-0.2in}%
\]
Software packages such as SOSTOOLS \cite{Prajna} or YALMIP \cite{Lofberg2004}
can then be used for formulating the SOS optimization problems as semidefinite
programs.\vspace*{-0.15in}
\end{theorem}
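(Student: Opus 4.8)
The plan is to derive each of the three claims from the corresponding hypothesis of Corollary \ref{SafetyGraphCor1} by applying the SOS (Positivstellensatz-type) relaxation described in Section \ref{Section:SOS-Relaxation}, so that the main content is a purely semantic translation together with the standard fact that an SOS polynomial is nonnegative. First I would recall the edge-wise Lyapunov condition (\ref{arcwiselyap}) from Proposition \ref{prop:MILMLyap}: for each $(i,j,k)\in\mathcal{E}$ one needs $\sigma_j(x_+)-\theta_{ji}^{k}\sigma_i(x)\le-\mu_{ji}^{k}$ for all admissible $(x,x_+)$ with $x\in X_i\cap\Pi_{ji}^{k}$ and $x_+\in\overline{T}_{ji}^{k}x$. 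Using the parameterization of $\overline{T}_{ji}^{k}$ in terms of the polynomial map $T_{ji}^{k}(x,w,v)$ and the semialgebraic set $S_{ji}^{k}$ (and absorbing the binary variables $v$ into the $f_p\ge 0$, $h_l=0$ description, or treating them as in the MILM case), this becomes: the polynomial $-\sigma_j(T_{ji}^{k}(x,w))+\theta_{ji}^{k}\sigma_i(x)-\mu_{ji}^{k}$ is nonnegative on the set $\left(\left(X_i\cap\Pi_{ji}^{k}\right)\times[-1,1]^{n_w}\right)\cap S_{ji}^{k}$. Requiring this polynomial to lie in $\Sigma[x,w]$, possibly after subtracting the $\mathcal{S}$-procedure/Positivstellensatz multipliers $\sum_p\tau_p f_p+\sum_{p,q}\tau_{pq}f_pf_q+\sum_l\rho_l h_l$ with $\tau_p,\tau_{pq}\in\Sigma[x,w]$, is a sufficient condition for that nonnegativity, which is exactly (\ref{D1D1}). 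Hence the $\sigma_i$ define a $(\theta,\mu)$-Lyapunov invariant for $\mathcal{P}$ in the sense of (\ref{arcwiselyap}).

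Next I would handle the unreachability part. Condition (\ref{SGC1}) of Corollary \ref{SafetyGraphCor1} asks $\sigma_\emptyset(x)\le 0$ for all $x\in X_\emptyset$, i.e. $-\sigma_\emptyset(x)\ge 0$ on $X_\emptyset$; writing $X_\emptyset$ in semialgebraic form and relaxing nonnegativity to membership in $\Sigma[x]$ (again modulo Positivstellensatz multipliers for the defining inequalities of $X_\emptyset$) gives (\ref{D1D0}). Condition (\ref{SGC2}), namely $\sigma_i(x)>0$ for all $x\in X_i\cap X_{i-}$ and $i\in\mathcal{N}\setminus\{\emptyset\}$, is a strict inequality, which cannot be captured by an SOS condition directly; this is where the slack constants $\varepsilon_i\in(0,\infty)$ enter. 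Requiring $\sigma_i(x)-\varepsilon_i\in\Sigma[x]$ on $X_i\cap X_{i-}$, as in (\ref{D1D3}), forces $\sigma_i(x)\ge\varepsilon_i>0$ on that set, which is strictly stronger than (\ref{SGC2}). Invoking Corollary \ref{SafetyGraphCor1} (branch (I)) with these $\sigma_i$ then yields unreachability of the collection $X_{i-}$. I would note in passing that (\ref{D1D3}) realizes branch (I) of the corollary; an entirely analogous formulation using $X_j\cap T^{-1}(X_{i-})$ in place of $X_i\cap X_{i-}$ would realize branch (II), but the statement as written commits to (I).

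Finally I would assemble the pieces and make the SOS-to-SDP reduction explicit on one representative constraint. Writing $\left(\left(X_i\cap\Pi_{ji}^{k}\right)\times[-1,1]^{n_w}\right)\cap S_{ji}^{k}=\{(x,w)\mid f_p(x,w)\ge 0,\ h_l(x,w)=0\}$, the existence of $\tau_p,\tau_{pq}\in\Sigma[x,w]$ and polynomials $\rho_l$ with
\[
-\sigma_j(T_{ji}^{k}(x,w))+\theta_{ji}^{k}\sigma_i(x)-\mu_{ji}^{k}-\sum_p\tau_p f_p-\sum_{p,q}\tau_{pq}f_pf_q-\sum_l\rho_l h_l\in\Sigma[x,w]
\]
implies, by evaluating at any feasible $(x,w)$ (where the subtracted terms are $\ge 0$ or $=0$) and using that SOS polynomials are nonnegative, that (\ref{D1D1}) holds; and the search for $K$-coefficients of the $\sigma_i$, the multipliers $\tau_p,\tau_{pq},\rho_l$, the scalars $\mu_{ji}^{k},\varepsilon_i$ (with $\theta_{ji}^{k}$ fixed, to keep bilinear terms out), is a semidefinite feasibility problem handled by SOSTOOLS or YALMIP. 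The only genuinely non-routine point is the strict-versus-nonstrict gap in (\ref{SGC2})/(\ref{SGC1}): I expect the main obstacle — really the only subtlety — to be arguing that introducing the positive slacks $\varepsilon_i$ (and, on edges, the constant $-\mu_{ji}^{k}$) legitimately converts the strict inequalities required by Corollary \ref{SafetyGraphCor1} into the closed SOS conditions (\ref{D1D0})--(\ref{D1D3}) without losing soundness; everything else is bookkeeping of the Positivstellensatz relaxation already set up in Section \ref{Section:SOS-Relaxation}.
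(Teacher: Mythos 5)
Your proposal is correct and follows exactly the route the paper intends: the paper gives no separate proof, stating only that the theorem ``follows from Corollary \ref{SafetyGraphCor1},'' and your argument---SOS membership on the constraint set implies nonnegativity there, which delivers the edge-wise condition (\ref{arcwiselyap}) and the hypotheses (\ref{SGC1})--(\ref{SGC2}) of the corollary, with the slacks $\varepsilon_i>0$ bridging the strict inequality in (\ref{SGC2})---is precisely that derivation, correctly fleshed out. Your side remarks (that (\ref{D1D3}) realizes branch (I) of the corollary, and that fixing $\theta_{ji}^{k}$ keeps the resulting feasibility problem convex) are accurate and consistent with the paper.
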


\section{Case Study\label{sec:casestudy}\vspace*{-0.05in}}

In this section we apply the framework to the analysis of Program 4 displayed
below.\vspace*{-0.11in}{\small
\begin{gather*}%
\begin{tabular}
[c]{|l|}\hline
\hspace*{-0.16in}%
\begin{tabular}
[c]{l}%
$%
\begin{array}
[c]{ll}
& \mathrm{/\ast\ EuclideanDivision.c\ \ast/}\\
& \vspace*{-0.45in}\\
\mathrm{F}0:\hspace*{0.1in} & \mathrm{int~IntegerDivision~(~int~dd,int~dr~)}\\
& \vspace*{-0.45in}\\
\mathrm{F}1:\hspace*{0.1in} & \mathrm{\{int~q=\{0\};~int~r=\{dd\};}\\
& \vspace*{-0.45in}\\
\mathrm{F}2:\hspace*{0.1in} & \mathrm{{while}\text{ }{{(r>=dr)~}{\{}}}\\
& \vspace*{-0.45in}\\
\mathrm{F}3:\hspace*{0.1in} & \mathrm{{{\hspace{0.24in}{q=q+1;}}}}\\
& \vspace*{-0.45in}\\
\mathrm{F}4:\hspace*{0.1in} & \mathrm{{{\hspace{0.24in}r=r-dr;}}}\\
& \vspace*{-0.45in}\\
\mathrm{F\hspace*{-0.04in}}\Join:\hspace*{0.1in} & \mathrm{return~r;\}}%
\end{array}
$\\
$%
\begin{array}
[c]{ll}%
\mathrm{L}0:\hspace*{0.1in} & \mathrm{int~main~(~int~X,int~Y~)~\{}\\
& \vspace*{-0.45in}\\
\mathrm{L}1:\hspace*{0.1in} & \mathrm{int~rem=\{0\};}\\
& \vspace*{-0.45in}\\
\mathrm{L}2:\hspace*{0.1in} & \mathrm{while~(Y~>~0)~\{}\\
& \vspace*{-0.45in}\\
\mathrm{L}3:\hspace*{0.1in} & \mathrm{{{\hspace{0.24in}}%
{rem=IntegerDivision~(X~,~Y);}}}\\
& \vspace*{-0.45in}\\
\mathrm{L}4:\hspace*{0.1in} & \mathrm{{{\hspace{0.24in}X=Y;}}}\\
& \vspace*{-0.45in}\\
\mathrm{L}5:\hspace*{0.1in} & \mathrm{{{\hspace{0.24in}Y=rem;}}}\\
& \vspace*{-0.45in}\\
\mathrm{L\hspace*{-0.04in}}\Join:\hspace*{0.1in} & \mathrm{return~X;\}\}}%
\end{array}
$%
\end{tabular}
\\\hline
\end{tabular}%
\raisebox{-1.5056in}{\includegraphics[
height=3.2in,
width=4.1in
]%
{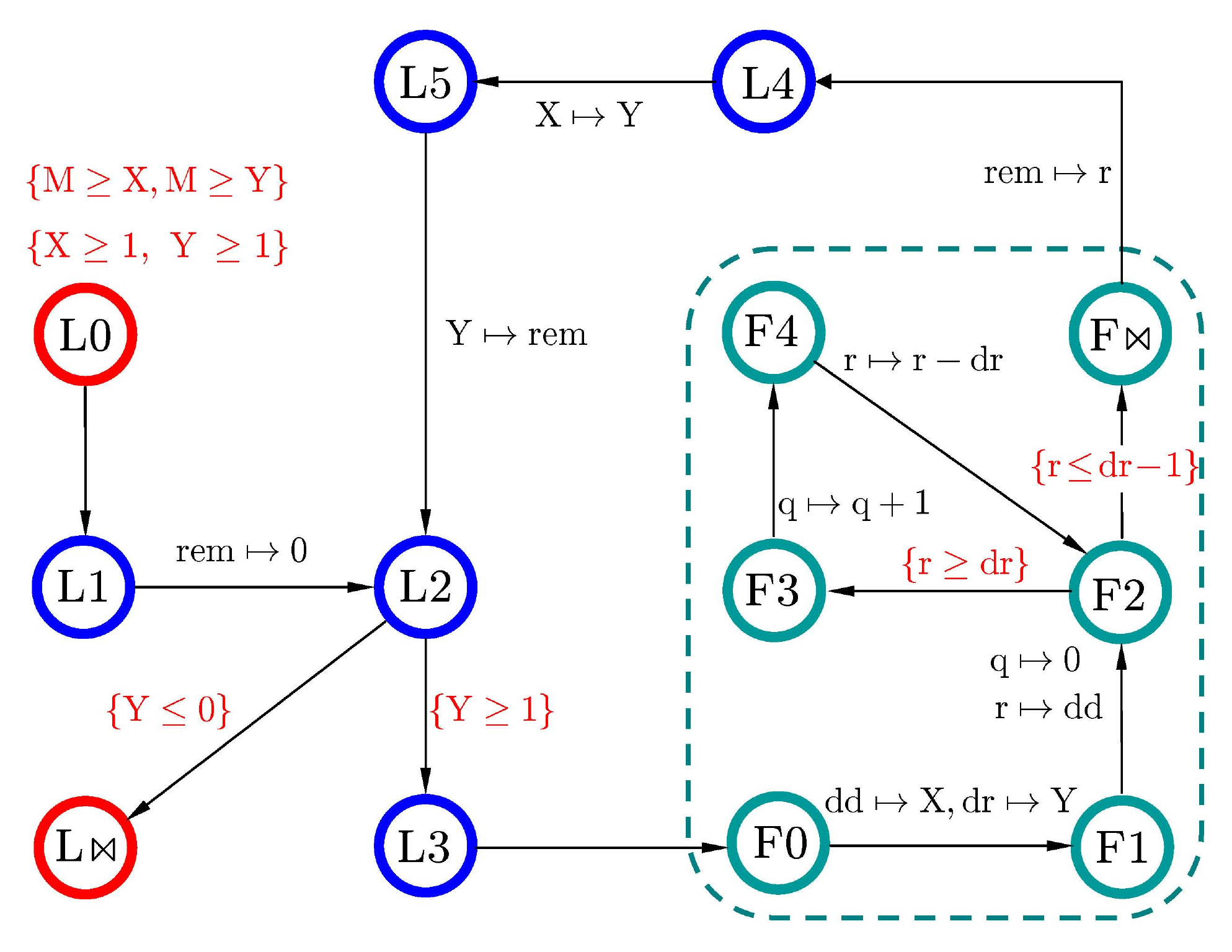}%
}%
\\[0.01in]%
\begin{array}
[c]{c}%
\begin{array}
[c]{cccccc}
&  &  & \text{Program 4: Euclidean Division and its Graph Model} &
\vspace*{-0.35in} &
\end{array}
\end{array}
\end{gather*}
}Program 4 takes two positive integers $\mathrm{X}\in\left[  1,\mathrm{M}%
\right]  $ and $\mathrm{Y}\in\left[  1,\mathrm{M}\right]  $ as the input and
returns their greatest common divisor by implementing the Euclidean Division
algorithm. Note that the \textsc{Main} function in Program 4 uses the
\textsc{IntegerDivision} program (Program 1).\vspace*{-0.15in}

\subsection{Global Analysis\vspace*{-0.05in}}

A global model can be constructed by embedding the dynamics of the
\textsc{IntegerDivision} program within the dynamics of \textsc{Main}. A
labeled graph model is shown alongside the text of the program. This model has
a state space $X=\mathcal{N}\times\left[  -\mathrm{M},\mathrm{M}\right]
^{7},$ where $\mathcal{N}$ is the set of nodes as shown in the graph, and the
global state $\mathrm{x=}\left[  \mathrm{X,\ Y,\ rem,\ dd,\ dr,\ q,\ r}%
\right]  $ is an element of the hypercube $\left[  -\mathrm{M},\mathrm{M}%
\right]  ^{7}.$ A \textit{reduced} graph model can be obtained by combining
the effects of consecutive transitions and relabeling the reduced graph model
accordingly. While analysis of the full graph model is possible, working with
a \textit{reduced} model is computationally advantageous. Furthermore, mapping
the properties of the reduced graph model to the original model is
algorithmic. Interested readers may consult \cite{RoozbehaniHSCC} for further
elaboration on this topic. For the graph model of Program 4, a reduced model
can be obtained by first eliminating nodes $\mathrm{F}_{\Join},$
$\mathrm{L}_{4},$ $\mathrm{L}_{5},$ $\mathrm{L}_{3},$ $\mathrm{F}_{0},$
$\mathrm{F}_{1},$ $\mathrm{F}_{3},$ $\mathrm{F}_{4},$ and $\mathrm{L}_{1},$
(Figure \ref{fig:redmodel} Left) and composing the transition and passport
labels. Node $\mathrm{L}_{2}$ can be eliminated as well to obtain a further
reduced model with only three nodes: $\mathrm{F}_{2},$ $\mathrm{L}_{0},$
$\mathrm{L}_{\Join}.$ (Figure \ref{fig:redmodel} Right). This is the model
that we will analyze. The passport and transition labels associated with the
reduced model are as follows:\vspace*{-0.05in}%
\begin{figure}
[tbh]
\begin{center}
\includegraphics[
height=1.47in,
width=4.9in
]%
{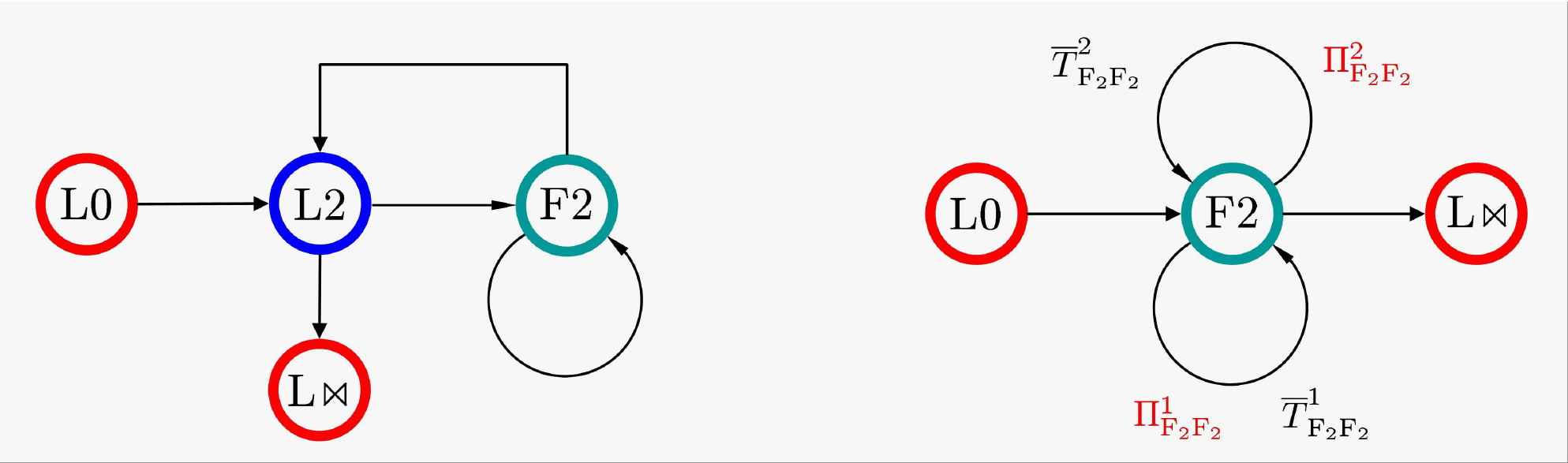}%
\caption{Two reduced models of the graph model of Program 4.}%
\label{fig:redmodel}%
\end{center}
\end{figure}
\begin{align*}
&
\begin{array}
[c]{rllcrrll}%
\overline{T}_{\mathrm{F}2\mathrm{F}2}^{1} & \hspace*{-0.06in}:\hspace
*{-0.06in} & \mathrm{x}\mapsto\left[
\mathrm{X,\ Y,\ rem,\ dd,\ dr,\ q+1,\ r-dr}\right]  & \vspace*{-0.05in} &  &
\overline{T}_{\mathrm{F}2\mathrm{F}2}^{2} & \hspace*{-0.06in}:\hspace
*{-0.06in} & \mathrm{x}\mapsto\left[  \mathrm{Y,\ r,\ r,\ Y,\ r,\ 0,\ Y}%
\right] \\
\overline{T}_{\mathrm{L}0\mathrm{F}2} & \hspace*{-0.06in}:\hspace*{-0.06in} &
\mathrm{x}\mapsto\left[  \mathrm{X,\ Y,\ 0,\ X,\ Y,\ 0,\ X}\right]  &
\vspace*{-0.05in} &  & \overline{T}_{\mathrm{F}2\mathrm{L}\Join} &
\hspace*{-0.06in}:\hspace*{-0.06in} & \mathrm{x}\mapsto\left[
\mathrm{Y,\ r,\ r,\ dd,\ dr,\ q,\ r}\right]
\end{array}
\\
&
\begin{array}
[c]{rllcrrlllllll}%
\Pi_{\mathrm{F}2\mathrm{F}2}^{2} & \hspace*{-0.06in}:\hspace*{-0.06in} &
\left\{  \mathrm{x~|~1\leq r\leq dr-1}\right\}  &  &  & \Pi_{\mathrm{F}%
2\mathrm{F}2}^{1} & \hspace*{-0.06in}:\hspace*{-0.06in}\vspace*{-0.05in} &
\left\{  \mathrm{x~|~r\geq dr}\right\}  &  &  & \Pi_{\mathrm{F}2\mathrm{L}%
\Join} & \hspace*{-0.06in}:\hspace*{-0.06in} & \left\{  \mathrm{x~|~r\leq
dr-1,}\text{ }\mathrm{r\leq0}\right\}
\end{array}
\end{align*}
\vspace*{-0.45in}

\noindent Finally, the invariant sets that can be readily included in the
graph model (cf. Section \ref{sec:constabst}) are:\vspace*{-0.25in}%
\[
X_{\mathrm{L}0}=\left\{  \mathrm{x}~|~\mathrm{M}\geq\mathrm{X},~\mathrm{M}%
\geq\mathrm{Y,~X}\geq\mathrm{1},~\mathrm{Y}\geq\mathrm{1}\right\}
,\text{~~}X_{\mathrm{F}2}=\left\{  \mathrm{x}~|~\mathrm{dd}=\mathrm{X}%
,~\mathrm{dr}=\mathrm{Y}\right\}  ,\text{~~}X_{\mathrm{L}\Join}=\left\{
\mathrm{x}~|~\mathrm{Y}\leq\mathrm{0}\right\}  .\vspace*{-0.2in}%
\]
We are interested in generating certificates of termination and absence of
overflow. First, by recursively searching for linear invariants we are able to
establish simple lower bounds on all variables in just two rounds (the
properties established in the each round are added to the model and the next
round of search begins). For instance, the property $\mathrm{X}\geq\mathrm{1}$
is established only after $\mathrm{Y}\geq\mathrm{1}$ is established. These
results, which were obtained by applying the first part of Theorem
\ref{Thm:Graphnodepoly} (equations (\ref{D1D1})-(\ref{D1D0}) only)\ with
linear functionals are summarized in Table\ II.\vspace*{-0.4in}

{\scriptsize
\begin{gather*}
\text{\textsc{TABLE} \textsc{II}}\\%
\begin{tabular}
[c]{|r|c|c|c|c|c|c|c|}\hline
$\text{Property}$ & $\mathrm{q}\geq\mathrm{0}$ & $\mathrm{Y}\geq\mathrm{1}$ &
$\mathrm{\mathrm{dr}}\geq\mathrm{\mathrm{1}}$ & $\mathrm{\mathrm{rem}}%
\geq\mathrm{0}$ & $\mathrm{\mathrm{dd}}\geq\mathrm{\mathrm{1}}$ &
$\mathrm{X}\geq\mathrm{1}$ & $\mathrm{r}\geq\mathrm{0}$\\\hline
$\text{Proven in Round}$ & $\mathrm{\mathrm{I}}$ & $\mathrm{\mathrm{I}}$ &
$\mathrm{\mathrm{I}}$ & $\mathrm{\mathrm{I}}$ & $\mathrm{\mathrm{II}}$ &
$\mathrm{\mathrm{II}}$ & $\mathrm{\mathrm{II}}$\\\hline
$\sigma_{\mathrm{F}_{2}}\left(  \mathrm{x}\right)  =$ & $-\mathrm{q}$ &
$\mathrm{1}-\mathrm{Y}$ & $\mathrm{1}-\mathrm{dr}$ & $-\mathrm{\mathrm{rem}}$
& $\mathrm{1}-\mathrm{dd}$ & $\mathrm{1}-\mathrm{X}$ & $-\mathrm{r}$\\\hline
$\left(  \theta_{\mathrm{F}2\mathrm{F}2}^{1},\mu_{\mathrm{F}2\mathrm{F}2}%
^{1}\right)  $ & $\left(  1,1\right)  $ & $\left(  1,0\right)  $ & $\left(
1,0\right)  $ & $\left(  1,0\right)  $ & $\left(  0,0\right)  $ & $\left(
0,0\right)  $ & $\left(  0,0\right)  $\\\hline
$\left(  \theta_{\mathrm{F}2\mathrm{F}2}^{2},\mu_{\mathrm{F}2\mathrm{F}2}%
^{2}\right)  $ & $\left(  0,0\right)  $ & $\left(  0,0\right)  $ & $\left(
0,0\right)  $ & $\left(  0,0\right)  $ & $\left(  0,0\right)  $ & $\left(
0,0\right)  $ & $\left(  0,0\right)  $\\\hline
\end{tabular}
\end{gather*}
} We then add these properties to the node invariant sets to obtain stronger
invariants that certify FTT and boundedness of all variables in $\left[
-\mathrm{M},\mathrm{M}\right]  $. By applying Theorem \ref{Thm:Graphnodepoly}
and SOS programming using YALMIP \cite{Lofberg2004}, the following invariants
are found\footnote{Different choices of polynomial degrees for the Lyapunov
invariant function and the multipliers, as well as different choices for
$\theta,\mu$ and different rounding schemes lead to different invariants. Note
that rounding is not essential.} (after post-processing, rounding the
coefficients, and reverifying):\vspace*{-0.35in}%

\[%
\begin{array}
[c]{llll}%
\sigma_{1\mathrm{F}_{2}}\left(  \mathrm{x}\right)  =0.4\left(  \mathrm{Y}%
-\mathrm{M}\right)  \left(  2+\mathrm{M}-\mathrm{r}\right)  &  &  &
\sigma_{2\mathrm{F}_{2}}\left(  \mathrm{x}\right)  =\left(  \mathrm{q}%
\times\mathrm{Y}+\mathrm{r}\right)  ^{2}-\mathrm{M}^{2}\\
\sigma_{3\mathrm{F}_{2}}\left(  \mathrm{x}\right)  =\left(  \mathrm{q}%
+\mathrm{r}\right)  ^{2}-\mathrm{M}^{2} &  &  & \sigma_{4\mathrm{F}_{2}%
}\left(  \mathrm{x}\right)  =0.1\left(  \mathrm{Y}-\mathrm{M}+5\mathrm{Y}%
\times\mathrm{M}+\mathrm{Y}^{2}-6\mathrm{M}^{2}\right) \\
\sigma_{5\mathrm{F}_{2}}\left(  \mathrm{x}\right)  =\mathrm{Y}+\mathrm{r}%
-2\mathrm{M}+\mathrm{Y}\times\mathrm{M}-\mathrm{M}^{{\normalsize 2}} &  &  &
\sigma_{6\mathrm{F}_{2}}\left(  \mathrm{x}\right)  =\mathrm{r}\times
\mathrm{Y}+\mathrm{Y}-\mathrm{M}^{2}-\mathrm{M}%
\end{array}
\]
The properties proven by these invariants are summarized in the Table III. The
specifications that the program terminates and that $\mathrm{x}\in\left[
-\mathrm{M},\mathrm{M}\right]  ^{7}$ for all initial conditions \textrm{X,~Y}%
$\in\left[  1,\mathrm{M}\right]  ,$ could not be established in one shot, at
least when trying polynomials of degree $d\leq4.$ For instance, $\sigma
_{1\mathrm{F}_{2}}$ certifies boundedness of all the variables except
$\mathrm{q},$ while $\sigma_{2\mathrm{F}_{2}}$ and $\sigma_{3\mathrm{F}_{2}}$
which certify boundedness of all variables including $\mathrm{q}$ do not
certify FTT. Furthermore, boundedness of some of the variables is established
in round II, relying on boundedness properties proven in round I. Given
$\sigma\left(  \mathrm{x}\right)  \leq0$ (which is found in round I), second
round verification can be done by searching for a strictly positive polynomial
$p\left(  \mathrm{x}\right)  $ and a nonnegative polynomial $q\left(
\mathrm{x}\right)  \geq0$ satisfying:\vspace*{-0.15in}%
\begin{equation}
q\left(  \mathrm{x}\right)  \sigma\left(  \mathrm{x}\right)  -p\left(
\mathrm{x}\right)  (\left(  \overline{T}\mathrm{x}\right)  _{i}^{2}%
-\mathrm{M}^{2})\geq0,\text{\qquad}\overline{T}\in\{\overline{T}%
_{\mathrm{F}2\mathrm{F}2}^{1},\overline{T}_{\mathrm{F}2\mathrm{F}2}%
^{2}\}\vspace*{-0.15in}\label{eq:roundII}%
\end{equation}
where the inequality (\ref{eq:roundII}) is further subject to boundedness
properties established in round I, as well as the usual passport conditions
and basic invariant set conditions.\vspace*{-0.3in}

{\scriptsize
\begin{gather*}
\text{\textsc{TABLE\ III}}\\%
\begin{tabular}
[c]{|r|c|c|c|c|}\hline
Invariant $\sigma_{\mathrm{F}_{2}}\left(  \mathrm{x}\right)  =\hspace
*{-0.02in}$ & $\sigma_{1\mathrm{F}_{2}}\left(  \mathrm{x}\right)  $ &
$\hspace*{-0.02in}\sigma_{2\mathrm{F}_{2}}\left(  \mathrm{x}\right)
,\sigma_{3\mathrm{F}_{2}}\left(  \mathrm{x}\right)  \hspace*{-0.02in}$ &
$\sigma_{4\mathrm{F}_{2}}\left(  \mathrm{x}\right)  $ & $\sigma_{5\mathrm{F}%
_{2}}\left(  \mathrm{x}\right)  ,\sigma_{6\mathrm{F}_{2}}\left(
\mathrm{x}\right)  $\\\hline
$%
\begin{array}
[c]{c}%
\left(  \theta_{\mathrm{F}2\mathrm{F}2}^{1},\mu_{\mathrm{F}2\mathrm{F}2}%
^{1}\right)  \smallskip
\end{array}
\hspace*{-0.02in}$ & $\left(  1,0\right)  $ & $\left(  1,0\right)  $ &
$\left(  1,0\right)  $ & $\left(  1,1\right)  $\\\hline
$%
\begin{array}
[c]{c}%
\left(  \theta_{\mathrm{F}2\mathrm{F}2}^{2},\mu_{\mathrm{F}2\mathrm{F}2}%
^{2}\right)  \smallskip
\end{array}
\hspace*{-0.02in}$ & $\left(  1,0.8\right)  $ & $\left(  0,0\right)  $ &
$\left(  1,0.7\right)  $ & $\left(  1,1\right)  $\\\hline
\multicolumn{1}{|l|}{Round I: $%
\begin{array}
[c]{c}%
\mathrm{x}_{i}^{2}\leq\mathrm{M}^{2}\text{ for }\mathrm{x}_{i}\mathrm{=}%
\smallskip\hspace*{-0.08in}%
\end{array}
\hspace*{-0.02in}$} & $\hspace*{-0.02in}\mathrm{Y,X,r,dr,rem,dd}%
\hspace*{-0.02in}$ & $\hspace*{-0.02in}\mathrm{q,Y,dr,rem}\hspace*{-0.02in}$ &
$\hspace*{-0.02in}\mathrm{Y,X,r,dr,rem,dd}\hspace*{-0.02in}$ & $\hspace
*{-0.02in}\mathrm{Y,dr,rem}\hspace*{-0.02in}$\\\hline
\multicolumn{1}{|l|}{Round II:$%
\begin{array}
[c]{c}%
\mathrm{x}_{i}^{2}\leq\mathrm{M}^{2}\text{ for }\mathrm{x}_{i}\mathrm{=}%
\smallskip\hspace*{-0.08in}%
\end{array}
\hspace*{-0.02in}$} &  & $\mathrm{X,r,dd}$ &  & $\mathrm{X,r,dd}$\\\hline
Certificate for FTT$\hspace*{-0.02in}$ & NO & NO & NO & YES, $T_{u}%
=2\mathrm{M}^{2}$\\\hline
\end{tabular}
\end{gather*}
} In conclusion, $\sigma_{2\mathrm{F}_{2}}\left(  \mathrm{x}\right)  $ or
$\sigma_{3\mathrm{F}_{2}}\left(  \mathrm{x}\right)  $ in conjunction with
$\sigma_{5\mathrm{F}_{2}}\left(  \mathrm{x}\right)  $ or $\sigma
_{6\mathrm{F}_{2}}\left(  \mathrm{x}\right)  $ prove finite-time termination
of the algorithm, as well as boundedness of all variables within $\left[
-\mathrm{M},\mathrm{M}\right]  $ for all initial conditions $\mathrm{X}%
,\mathrm{Y}\in\left[  1,\mathrm{M}\right]  ,$ for any $\mathrm{M}%
\geq1\mathrm{.}$ The provable bound on the number of iterations certified by
$\sigma_{5\mathrm{F}_{2}}\left(  \mathrm{x}\right)  $ and $\sigma
_{6\mathrm{F}_{2}}\left(  \mathrm{x}\right)  $ is $T_{u}=2\mathrm{M}^{2}$
(Corollary \ref{SafetyGraphCor1})$.$ If we settle for more conservative
specifications, e.g., $\mathrm{x}\in\left[  -k\mathrm{M},k\mathrm{M}\right]
^{7}$ for all initial conditions $\mathrm{X},\mathrm{Y}\in\left[
1,\mathrm{M}\right]  $ and sufficiently large $k,$ then it is possible to
prove the properties in one shot. We show this in the next section.\vspace
*{-0.2in}

\subsection{MIL-GH Model\vspace*{-0.1in}}

For comparison, we also constructed the MIL-GH model associated with the
reduced graph in Figure \ref{fig:redmodel}. The corresponding matrices are
omitted for brevity, but details of the model along with executable Matlab
verification codes can be found in\ \cite{MVichSite}. The verification theorem
used in this analysis is an extension of Theorem
\ref{MILP_Correctness_Theorem} to analysis of MIL-GHM for specific numerical
values of $\mathrm{M},$ though it is certainly possible to perform this
modeling and analysis exercise for parametric bounded values of $\mathrm{M.}$
The analysis using the MIL-GHM is in general more conservative than SOS
optimization over the graph model presented earlier. This can be attributed to
the type of relaxations proposed (similar to those used in Lemma
\ref{MILP_Invariance_LMI}) for analysis of MILMs and MIL-GHMs. The benefits
are simplified analysis at a typically much less computational cost. The
certificate obtained in this way is a single quadratic function (for each
numerical value of $\mathrm{M}$), establishing a bound $\gamma\left(
\mathrm{M}\right)  $ satisfying $\gamma\left(  \mathrm{M}\right)  \geq\left(
\mathrm{X}^{2}+\mathrm{Y}^{2}+\mathrm{rem}^{2}+\mathrm{dd}^{2}+\mathrm{dr}%
^{2}+\mathrm{q}^{2}+\mathrm{r}^{2}\right)  ^{1/2}.$ Table IV summarizes the
results of this analysis which were performed using\ both Sedumi 1\_3 and
LMILAB solvers.\vspace*{-0.3in}

{\scriptsize
\begin{gather*}
\text{\textsc{TABLE\ IV}}\\%
\begin{tabular}
[c]{|r|c|c|c|c|c|}\hline
$\mathrm{M}$ & $10^{2}$ & $10^{3}$ & $10^{4}$ & $10^{5}$ & $10^{6}$\\\hline
Solver: LMILAB \cite{GahinetLMILAB}: $\gamma\left(  \mathrm{M}\right)  $ &
$5.99\mathrm{M}$ & $6.34\mathrm{M}$ & $6.43\mathrm{M}$ & $6.49\mathrm{M}$ &
$7.05\mathrm{M}$\\\hline
Solver: SEDUMI \cite{Strum1999}: $\gamma\left(  \mathrm{M}\right)  $ &
$6.00\mathrm{M}$ & $6.34\mathrm{M}$ & $6.44\mathrm{M}$ & $6.49\mathrm{M}$ &
NAN\\\hline
$%
\begin{array}
[c]{c}%
\left(  \theta_{\mathrm{F}2\mathrm{F}2}^{1},\mu_{\mathrm{F}2\mathrm{F}2}%
^{1}\right)  \smallskip
\end{array}
\hspace*{-0.02in}$ & $\left(  1,10^{-3}\right)  $ & $\left(  1,10^{-3}\right)
$ & $\left(  1,10^{-3}\right)  $ & $\left(  1,10^{-3}\right)  $ & $\left(
1,10^{-3}\right)  $\\\hline
$%
\begin{array}
[c]{c}%
\left(  \theta_{\mathrm{F}2\mathrm{F}2}^{2},\mu_{\mathrm{F}2\mathrm{F}2}%
^{2}\right)  \smallskip
\end{array}
\hspace*{-0.02in}$ & $\left(  1,10^{-3}\right)  $ & $\left(  1,10^{-3}\right)
$ & $\left(  1,10^{-3}\right)  $ & $\left(  1,10^{-3}\right)  $ & $\left(
1,10^{-3}\right)  $\\\hline
Upperbound on iterations$\hspace*{-0.02in}$ & $T_{u}=2$e$4$ & $T_{u}=8$e$4$ &
$T_{u}=8$e$5$ & $T_{u}=1.5$e$7$ & $T_{u}=3$e$9$\\\hline
\end{tabular}
\end{gather*}
}\vspace*{-0.35in}

\subsection{Modular Analysis\vspace*{-0.1in}}

The preceding results were obtained by analysis of a global model which was
constructed by embedding the internal dynamics of the program's functions
within the global dynamics of the Main function. In contrast, the idea in
\textit{modular analysis} is to model software as the interconnection of the
program's "building blocks" or "modules", i.e., functions that interact via a
set of \textit{global} variables. The dynamics of the functions are then
abstracted via Input/Output behavioral models, typically constituting equality
and/or inequality constraints relating the input and output variables. In our
framework, the invariant sets of the terminal nodes of the modules (e.g., the
set $X_{\Join}$ associated with the terminal node $\mathrm{F}_{\Join}$ in
Program 4) provide such I/O model. Thus, richer characterization of the
invariant sets of the terminal nodes of the modules are desirable. Correctness
of each sub-module must be established separately, while correctness of the
entire program will be established by verifying the unreachability and
termination properties w.r.t. the global variables, as well as verifying that
a terminal global state will be reached in finite-time. This way, the program
variables that are \textit{private} to each function are abstracted away from
the global dynamics. This approach has the potential to greatly simplify the
analysis and improve the scalability of the proposed framework as analysis of
large size computer programs is undertaken. In this section, we apply the
framework to modular analysis of Program 4. Detailed analysis of the
advantages in terms of improving scalability, and the limitations in terms of
conservatism the analysis is an important and interesting direction of future research.

The first step is to establish correctness of the \textsc{IntegerDivision}
module, for which we obtain\vspace*{-0.25in}%
\[
\sigma_{7\mathrm{F}2}\left(  \mathrm{dd,dr,q,r}\right)  =\left(
\mathrm{q}+\mathrm{r}\right)  ^{2}-\mathrm{M}^{2}\vspace*{-0.2in}%
\]
The function $\sigma_{7\mathrm{F}2}$ is a $\left(  1,0\right)  $-invariant
proving boundedness of the state variables of \textsc{IntegerDivision}.
Subject to boundedness, we obtain the function\vspace*{-0.25in}%
\[
\sigma_{8\mathrm{F}2}\left(  \mathrm{dd,dr,q,r}\right)  =2\mathrm{r}%
-11\mathrm{q}-6\mathrm{Z}\vspace*{-0.2in}%
\]
which is a $\left(  1,1\right)  $-invariant proving termination of
\textsc{IntegerDivision.} The invariant set of node $\mathrm{F}_{\Join}$ can
thus be characterized by\vspace*{-0.18in}%
\[
X_{\Join}=\left\{  \left(  \mathrm{dd,dr,q,r}\right)  \in\left[
0,\mathrm{M}\right]  ^{4}~|~\mathrm{r\leq dr-1}\right\}  \vspace*{-0.15in}%
\]
The next step is construction of a global model. Given $X_{\Join}$, the
assignment at $\mathrm{L}3$:\vspace*{-0.2in}
\[
\mathrm{L}3:\mathrm{rem=IntegerDivision~(X~,~Y)}\vspace*{-0.25in}%
\]
can be abstracted by\vspace*{-0.2in}
\[
\mathrm{rem=W,}\text{ s.t. }\mathrm{W}\in\left[  0,\mathrm{M}\right]
,\text{~}\mathrm{W}\leq\mathrm{Y-1,}\vspace*{-0.2in}%
\]
allowing for construction of a global model with variables $\mathrm{X,Y,~}$and
$\mathrm{rem,}$ and an external state-dependent input $\mathrm{W}\in\left[
0,\mathrm{M}\right]  \mathrm{,}$ satisfying $\mathrm{W}\leq\mathrm{Y-1.}$
Finally, the last step is analysis of the global model. We obtain the function
$\sigma_{9\mathrm{L}2}\left(  \mathrm{X,Y,rem}\right)  =\mathrm{Y}%
\times\mathrm{M}-\mathrm{M}^{2},$ which is $\left(  1,1\right)  $-invariant
proving both FTT and boundedness of all variables within $\left[
\mathrm{M},\mathrm{M}\right]  .$\vspace*{-0.2in}

\section{Concluding Remarks\vspace*{-0.05in}}

We took a systems-theoretic approach to software analysis, and presented a
framework based on convex optimization of Lyapunov invariants for verification
of a range of important specifications for software systems, including
finite-time termination and absence of run-time errors such as overflow,
out-of-bounds array indexing, division-by-zero, and user-defined program
assertions. The verification problem is reduced to solving a numerical
optimization problem, which when feasible, results in a certificate for the
desired specification.\ The novelty of the framework, and consequently, the
main contributions of this paper are in the systematic transfer of Lyapunov
functions and the associated computational techniques from control systems to
software analysis. The presented work can be extended in several directions.
These include understanding the limitations of modular analysis of programs,
perturbation analysis of the Lyapunov certificates to quantify robustness with
respect to round-off errors, extension to systems with software in closed loop
with hardware, and adaptation of the framework to specific classes of
software.\vspace*{-0.1in}

\section*{Appendix I\vspace*{-0.1in}}

\subsubsection*{Semialgebraic Set-Valued Abstractions of Commonly-Used
Nonlinearities\label{Sec:abstnon}\vspace*{-0.08in}}

\begin{enumerate}
\item[\textbf{--}] Trigonometric Functions:\vspace*{-0.05in}
\end{enumerate}

Abstraction of trigonometric functions can be obtained by approximation of the
Taylor series expansion followed by representation of the absolute error by a
static bounded uncertainty. For instance, an abstraction of the $\sin\left(
\cdot\right)  $ function can be constructed as follows:\vspace*{-0.1in}%
{\small
\[%
\begin{tabular}
[c]{|l|l|l|}\hline
Abstraction of $\sin\left(  x\right)  $ & $x\in\lbrack-\frac{\pi}{2},\frac
{\pi}{2}]$ & $x\in\lbrack-\pi,\pi]$\\\hline
$\overline{\sin}\left(  x\right)  \in\left\{  x+aw~|~w\in\left[  -1,1\right]
\right\}  $ & $a=0.571$ & $a=3.142$\\\hline
$\overline{\sin}\left(  x\right)  \in\{x-\frac{1}{6}x^{3}+aw~|~w\in\left[
-1,1\right]  \}$ & $a=0.076$ & $a=2.027$\\\hline
\end{tabular}
\]
} Abstraction of $\cos\left(  \cdot\right)  $ is similar. It is also possible
to obtain piecewise linear abstractions by first approximating the function by
a piece-wise linear (PWL) function and then representing the absolute error by
a bounded uncertainty. Section \ref{Section:SpecModels} (Proposition
\ref{MILM-prop}) establishes universality of representation of generic PWL
functions via binary and continuous variables and an algorithmic construction
can be found in \cite{RMF2010}. For instance, if $x\in\lbrack0,\pi/2]$ then a
piecewise linear approximation with absolute error less than $0.06$ can be
constructed in the following way:\vspace*{-0.15in}
\begin{subequations}
\label{SinAbst}%
\begin{align}
\hspace*{-0.1in}S\hspace{-0.02in} &  =\hspace{-0.02in}\left\{  \left(
x,v,w\right)  \hspace*{0.00in}|\hspace*{0.00in}x=0.2\left[  \left(
1+v\right)  \left(  1+w_{2}\right)  +\left(  1-v\right)  \left(
3+w_{2}\right)  \right]  ,\hspace*{-0.02in} \left(  w,v\right) \in\left[
-1,1\right]  ^{2}\times\left\{  -1,1\right\}  \right\} \label{SinAbsta}\\[-0.05in]
\hspace*{-0.1in}\overline{\sin}\left(  x\right)  \hspace{-0.02in} &
\in\hspace{-0.02in}\left\{  Tx_{E}~|~x_{E}\in S\right\}  ,\text{\ }%
T:x_{E}\mapsto0.45\left(  1+v\right)  x+\left(  1-v\right)  \left(
0.2x+0.2\right)  +0.06w_{1}%
\end{align}
\end{subequations}
\begin{enumerate}
\item[\textbf{--}] The Sign Function ($\operatorname{sgn}$) and the Absolute
Value Function ($\operatorname{abs}$):\vspace*{-0.05in}
\end{enumerate}

The sign function ($\operatorname{sgn}(x)=1\mathbb{I}_{[0,\infty)}\left(
x\right)  -1\mathbb{I}_{(-\infty,0)}\left(  x\right)  $) may appear explicitly
or as an interpretation of \textit{if-then-else} blocks in computer programs
(see \cite{RMF2010} for more details). A particular abstraction of
$\operatorname{sgn}\left(  \cdot\right)  $ is as follows: $\overline
{\operatorname{sgn}}(x)\in\left\{  v~|~xv\geq0,\text{ }v\in\left\{
-1,1\right\}  \right\}  $. Note that $\operatorname{sgn}\left(  0\right)  $ is
equal to $1,$ while the abstraction is multi-valued at zero$:$ $\overline
{\operatorname{sgn}}\left(  0\right)  \in\left\{  -1,1\right\}  .$ The
absolute value function can be represented (precisely) over $\left[
-1,1\right]  $ in the following way:\vspace*{-0.17in}
\[
\operatorname{abs}\left(  x\right)  =\left\{  xv~|~x=0.5\left(  v+w\right)
,\text{ }\left(  w,v\right)  \in\left[  -1,1\right]  \times\left\{
-1,1\right\}  \right\}  \vspace*{-0.25in}%
\]
More on the systematic construction of function abstractions including those
related to floating-point, fixed-point, or modulo arithmetic can be found in
the report \cite{RMF2010}.\vspace*{-0.2in}

\section*{Appendix II}

\begin{proof}
[of Proposition \ref{FTT2}]Note that (\ref{Softa2a1})$-$(\ref{Softa2a3}) imply
that $V$ is negative-definite along the trajectories of $\mathcal{S},$ except
possibly for $V\left(  x\left(  0\right)  \right)  $ which can be zero when
$\eta=0.$ Let $\mathcal{X}$ be any solution of $\mathcal{S}.$ Since $V$ is
uniformly bounded on $X$, we have:$\vspace*{-0.15in}$
\[
-\left\Vert V\right\Vert _{\infty}\leq V\left(  x\left(  t\right)  \right)
<0,\text{ }\forall x\left(  t\right)  \in\mathcal{X},~t>1.\vspace*{-0.15in}%
\]
Now, assume that there exists a sequence $\mathcal{X}\equiv(x(0),x(1),\dots
,x(t),\dots)$ of elements from $X$ satisfying (\ref{Softa1}), but not reaching
a terminal state in finite time. That is, $x\left(  t\right)  \notin
X_{\infty},$ $\forall t\in\mathbb{Z}_{+}.$ Then, it can be verified that if
$t>T_{u},$ where $T_{u}$ is given by (\ref{Bnd on No. Itrn.}), we must have:
$V\left(  x\left(  t\right)  \right)  <-\left\Vert V\right\Vert _{\infty}, $
which contradicts boundedness of $V.$
\end{proof}

\bigskip

\begin{proof}
[of Theorem \ref{BddNess}]Assume that $\mathcal{S}$ has a solution
$\mathcal{X\hspace*{-0.03in}=\hspace*{-0.03in}}\left(  x\left(  0\right)
,...,x\left(  t_{-}\right)  ,...\right)  ,$ where $x\left(  0\right)  \in
X_{0}$ and $x\left(  t_{-}\right)  \in X_{-}.$ Let\vspace*{-0.2in}
\[
\gamma_{h}=\underset{x\in h^{-1}\left(  X_{-}\right)  }{\inf}V\left(
x\right)
\]
First, we claim that $\gamma_{h}\leq\max\left\{  V(x\left(  t_{-}\right)
),V(x\left(  t_{-}-1\right)  )\right\}  .$ If $h=I,$ we have $x\left(
t_{-}\right)  \in h^{-1}\left(  X_{-}\right)  $ and $\gamma_{h}\leq V(x\left(
t_{-}\right)  ).$ If $h=f,$ we have $x\left(  t_{-}-1\right)  \in
h^{-1}\left(  X_{-}\right)  $ and $\gamma_{h}\leq V(x\left(  t_{-}-1\right)
),$ hence the claim. Now, consider the $\theta=1$ case$.$ Since $V$ is
monotonically decreasing along solutions of $\mathcal{S},$ we must
have:$\vspace*{-0.1in}$%
\begin{equation}
\gamma_{h}=\underset{x\in h^{-1}\left(  X_{-}\right)  }{\inf}V\left(
x\right)  \leq\max\left\{  V(x\left(  t_{-}\right)  ),V(x\left(
t_{-}-1\right)  )\right\}  \leq V\left(  x\left(  0\right)  \right)
\leq~\underset{x\in X_{0}}{\sup}V(x)\vspace*{-0.1in}\label{Inf L than Sup}%
\end{equation}
which contradicts (\ref{Inf G than Sup})$.$ Note that if $\mu>0$ and $h=I,$
then (\ref{Inf L than Sup}) holds as a strict inequality and we can replace
(\ref{Inf G than Sup}) with its non-strict version. Next, consider case
$\left(  \text{I}\right)  ,$ for which, $V$ need not be monotonic along the
trajectories. Partition $X_{0}$ into two subsets $\overline{X}_{0}$ and
$\underline{X}_{0}$ such that $X_{0}=\overline{X}_{0}\cup\underline{X}_{0}$
and$\vspace*{-0.1in}$%
\[
V\left(  x\right)  \leq0\text{\quad}\forall x\in\underline{X}_{0},\text{\qquad
and\qquad}V\left(  x\right)  >0\text{\quad}\forall x\in\overline{X}_{0}%
\vspace*{-0.1in}%
\]
Now, assume that $\mathcal{S}$ has a solution $\overline{\mathcal{X}%
}\mathcal{=}\left(  \overline{x}\left(  0\right)  ,...,\overline{x}\left(
t_{-}\right)  ,...\right)  ,$ where $\overline{x}\left(  0\right)
\in\overline{X}_{0}$ and $\overline{x}\left(  t_{-}\right)  \in X_{-}.$ Since
$V\left(  x\left(  0\right)  \right)  >0$\ and $\theta<1,$ we have $V\left(
x\left(  t\right)  \right)  <V\left(  x\left(  0\right)  \right)
,\quad\forall t>0.$ Therefore,$\vspace*{-0.2in}$
\[
\gamma_{h}=\underset{x\in h^{-1}\left(  X_{-}\right)  }{\inf}V\left(
x\right)  \leq\max\left\{  V(x\left(  t_{-}\right)  ),V(x\left(
t_{-}-1\right)  )\right\}  \leq V\left(  \overline{x}\left(  0\right)
\right)  \leq~\underset{x\in X_{0}}{\sup}V(x)\vspace*{-0.1in}%
\]
which contradicts (\ref{Inf G than Sup})$.$ Next, assume that $\mathcal{S}$
has a solution $\underline{\mathcal{X}}\mathcal{=}\left(  \underline{x}\left(
0\right)  ,...,\underline{x}\left(  t_{-}\right)  ,...\right)  ,$ where
$\underline{x}\left(  0\right)  \in\underline{X}_{0}$ and $\underline
{x}\left(  t_{-}\right)  \in X_{-}.$ In this case, regardless of the value of
$\theta,$ we must have $V\left(  \underline{x}\left(  t\right)  \right)
\leq0,$ $\forall t,$ implying that $\gamma_{h}\leq0,$ and hence, contradicting
(\ref{Inf G than Zero})$.$ Note that if $h=I$ and either $\mu>0,$ or
$\theta>0,$ then (\ref{Inf G than Zero}) can be replaced with its non-strict
version. Finally, consider case $\left(  \text{II}\right)  $. Due to
(\ref{Sup L than Zero}), $V$ is strictly monotonically decreasing along the
solutions of $\mathcal{S}.$ The rest of the argument is similar to the
$\theta=1$ case.
\end{proof}

\vspace{0.15in}

\begin{proof}
[of Corollary \ref{Bddness and FTT}]It follows from (\ref{Three3}) and the
definition of $X_{-}$ that:\vspace*{-0.15in}%
\begin{equation}
V\left(  x\right)  \geq\sup\left\{  \left\Vert \alpha^{-1}h\left(  x\right)
\right\Vert _{q}-1\right\}  \geq\sup\left\{  \left\Vert \alpha^{-1}h\left(
x\right)  \right\Vert _{\infty}-1\right\}  >0,\text{\qquad}\forall x\in
X.\vspace*{-0.15in}\label{o69}%
\end{equation}
It then follows from (\ref{o69}) and (\ref{One1}) that:\vspace*{-0.15in}%
\[
\underset{x\in h^{-1}\left(  X_{-}\right)  }{\inf}V\left(  x\right)
>0\geq~\underset{x\in X_{0}}{\sup}V(x)\vspace*{-0.15in}%
\]
Hence, the first statement of the Corollary follows from Theorem
\ref{BddNess}. The upperbound on the number of iterations follows from
Proposition \ref{FTT2} and the fact that $\sup_{x\in X\backslash\left\{
X_{-}\cup X_{\infty}\right\}  }\left\vert V\left(  x\right)  \right\vert
\leq1.$\vspace*{-0.1in}
\end{proof}

\vspace{0.15in}

\begin{proof}
[of Corollary \ref{SafetyGraphCor1}]The unreachability property follows
directly from Theorem \ref{BddNess}. The finite time termination property
holds because it follows from (\ref{arcwiselyap}), (\ref{SGC1}) and
(\ref{MultiplicativeTheta}) along with Proposition \ref{FTT2}, that the
maximum number of iterations around every simple cycle $\mathcal{C}$ is
finite. The upperbound on the number of iterations is the sum of the maximum
number of iterations over every simple cycle.
\end{proof}

\vspace{0.1in}

\begin{proof}
[of Lemma \ref{MIPL_Invariance_Lemma}]Define $x_{e}=\left(  x,w,v,1\right)
^{T},$ where $x\in\left[  -1,1\right]  ^{n},$ $w\in\left[  -1,1\right]
^{n_{w}},$ $v\in\left\{  -1,1\right\}  ^{n_{v}}.$ Recall that $\left(
x,1\right)  ^{T}=L_{2}x_{e},$ and that for all $x_{e}$ satisfying $Hx_{e}=0,$
there holds: $\left(  x_{+},1\right)  =\left(  Fx_{e},1\right)  =L_{1}x_{e}.$
It follows from Proposition \ref{prop:MILMLyap} that (\ref{Softa2}) holds
if:\vspace*{-0.15in}%
\begin{equation}
x_{e}^{T}L_{1}^{T}PL_{1}x_{e}-\theta x_{e}^{T}L_{2}^{T}PL_{2}x_{e}\leq
-\mu,\text{ s.t. }Hx_{e}=0,\text{ }L_{3}x_{e}\in\left[  -1,1\right]
^{n+n_{w}},\text{ }L_{4}x_{e}\in\left\{  -1,1\right\}  ^{n_{v}}.\vspace
*{-0.1in}\label{MILP_1}%
\end{equation}
Recall from the $\mathcal{S}$-Procedure ((\ref{Needs-S-Procedure}) and
(\ref{S-Procedure-sufficient})) that the assertion $\sigma\left(  y\right)
\leq0, $ $\forall y\in\left[  -1,1\right]  ^{n}$ holds if there exists
nonnegative constants $\tau_{i}\geq0,$ $i=1,...,n,$ such that $\sigma\left(
y\right)  \leq\sum\tau_{i}\left(  y_{i}^{2}-1\right)  =y^{T}\tau
y-\operatorname{Trace}\left(  \tau\right)  ,~$where $\tau=\operatorname{diag}%
\left(  \tau_{i}\right)  \succeq0.$ Similarly, the assertion $\sigma\left(
y\right)  \leq0,\forall y\in\left\{  -1,1\right\}  ^{n}$ holds if there exists
a diagonal matrix $\mu$ such that $\sigma\left(  y\right)  \leq\sum\mu
_{i}\left(  y_{i}^{2}-1\right)  =y^{T}\mu y-\operatorname{Trace}\left(
\mu\right)  .$ Applying these relaxations to (\ref{MILP_1}), we obtain
sufficient conditions for (\ref{MILP_1}) to hold:\vspace*{-0.15in}%
\[
x_{e}^{T}L_{1}^{T}PL_{1}x_{e}-\theta x_{e}^{T}L_{2}^{T}PL_{2}x_{e}\leq
x_{e}^{T}\left(  YH+H^{T}Y^{T}\right)  x_{e}+x_{e}^{T}L_{3}^{T}D_{xw}%
L_{3}x_{e}+x_{e}^{T}L_{4}^{T}D_{v}L_{4}x_{e}-\mu-\operatorname{Trace}%
(D_{xw}+D_{v})\vspace*{-0.1in}%
\]
Together with $0\preceq D_{xw},$ the above condition is equivalent to the LMIs
in Lemma \ref{MILP_Invariance_LMI}.\vspace*{-0.15in}
\end{proof}

\newpage

\end{document}